\newcommand\thetarel{\operatorname{\theta}}
\newcommand\thetahrel{\operatorname{\hat\theta}}
\newcommand\suchthat{\operatorname{|}}
\newcommand\transthetarel{\operatorname{\hat\theta}}
\newcommand\taurel{\operatorname{\tau}}
\newcommand\factorrel{\operatorname{(\theta\cup\tau)^*}}
\newcommand\edgediff[5]{\left[d_{#1}({#2}, {#4}) - d_{#1}({#2},{#5})\right] - \left[d_{#1}({#3},{#4}) - d_{#1}({#3},{#5})\right]}
\newcounter{names}
\newtheorem{claim}[names]{Claim}
\newtheorem{thm}[names]{Theorem}
\newtheorem{lem}[names]{Lemma}
\newtheorem{definition}[names]{Definition}
\colorlet{texcscolor}{blue!50!black}
\colorlet{texemcolor}{red!70!black}
\colorlet{texpreamble}{red!70!black}
\colorlet{codebackground}{black!25!white!25}
\lstdefinestyle{siamlatex}{%
  style=tcblatex,
  texcsstyle=*\color{texcscolor},
  texcsstyle=[2]\color{texemcolor},
  keywordstyle=[2]\color{texemcolor},
  moretexcs={cref,Cref,maketitle,mathcal,text,headers,email,url},
}
\DeclareTotalTCBox{\code}{ v O{} }
{ 
  fontupper=\ttfamily\color{black},
  nobeforeafter,
  tcbox raise base,
  colback=codebackground,colframe=white,
  top=0pt,bottom=0pt,left=0mm,right=0mm,
  leftrule=0pt,rightrule=0pt,toprule=0mm,bottomrule=0mm,
  boxsep=0.5mm,
  #2}{#1}
\patchcmd\newpage{\vfil}{}{}{}
\title{Factorization and pseudofactorization of weighted graphs}
\author[1]{Kristin Sheridan \thanks{To whom correspondence should be addressed: kristin.sheridan3@gmail.com}\thanks{Research performed while at the Department of Electrical Engineering and Computer Science, Massachusetts Institute of Technology, Cambridge, MA}}
\author[2]{Joseph Berleant}
\author[2]{Mark Bathe}
\author[3]{Anne Condon}
\author[4]{Virginia Vassilevska Williams}
\affil[1]{Department of of Computer Science, University of Texas, Austin, TX}
\affil[2]{Department of Biological Engineering, Massachusetts Institute of Technology, Cambridge, MA}
\affil[3]{Department of Computer Science, University of British Columbia, Vancouver, Canada}
\affil[4]{Computer Science and Artificial Intelligence Laboratory, Massachusetts Institute of Technology, Cambridge, MA}
\begin{document}
\date{}
\maketitle

\begin{abstract}

For unweighted graphs, finding isometric embeddings is closely related to decompositions of $G$ into Cartesian products of smaller graphs. When $G$ is isomorphic to a Cartesian graph product, we call the factors of this product a factorization of $G$. When $G$ is isomorphic to an isometric subgraph of a Cartesian graph product, we call those factors a pseudofactorization of $G$. Prior work has shown that an unweighted graph's pseudofactorization can be used to generate a canonical isometric embedding into a product of the smallest possible pseudofactors. However, for arbitrary weighted graphs, which represent a richer variety of metric spaces, methods for finding isometric embeddings or determining their existence remain elusive, and indeed pseudofactorization and factorization have not previously been extended to this context. In this work, we address the problem of finding the factorization and pseudofactorization of a weighted graph $G$, where $G$ satisfies the property that every edge constitutes a shortest path between its endpoints. We term such graphs minimal graphs, noting that every graph can be made minimal by removing edges not affecting its path metric. We generalize pseudofactorization and factorization to minimal graphs and develop new proof techniques that extend the previously proposed algorithms due to Graham and Winkler [Graham and Winkler, '85] and Feder [Feder, '92] for pseudofactorization and factorization of unweighted graphs. We show that any $m$-edge, $n$-vertex graph with positive integer edge weights can be factored in $O(m^2)$ time, plus the time to find all pairs shortest paths (APSP) distances in a weighted graph, resulting in an overall running time of $O(m^2+n^2\log\log n)$ time. We also show that a pseudofactorization for such a graph can be computed in $O(mn)$ time, plus the time to solve APSP, resulting in an $O(mn+n^2\log\log n)$ running time.
\end{abstract}
\newpage






%
%
%

\section{Introduction}


The task of finding \emph{isometric embeddings}, or mappings of the vertex set of one graph to another while preserving pairwise distances between vertices, is widely applicable but unsolved for general weighted graphs. 
One of the most important applications is in molecular engineering. Attempting
 to design biomolecules with the same control as seen in natural biological systems, molecular engineers may focus on designing sets of DNA strands with pre-specified binding strengths, as these binding strengths can be essential to the emergent behavior of a network of interacting molecules \cite{short2012promiscuous, zhu2010noncognate, antebi2017combinatorial, malinauskas2014extracellular}. Under certain conditions, the binding strength between pairs of DNA strands can be approximated by the distance between pairs of vertices in a hypercube graph, and so the DNA strand-design problem reduces to the task of finding a mapping between a graph whose pairwise vertex distances correspond to desired binding strengths and the hypercube graph whose distances correspond to the actual binding strengths between pairs of DNA strands (Figure \ref{fig:dna-comparison}). This strand design problem could be applied for DNA data storage \cite{baum1995building, neel2006semantic}, DNA logic circuits \cite{qian2011scaling, zhang2011dynamic}, and DNA neural networks \cite{qian2011neural, cherry2018scaling}.
 
 Isometric embeddings may also be applied in communications networks by embedding the connectivity graph of a network into a Hamming graph, or product of complete graphs, which allows shortest paths between nodes to be computed using only local connectivity information \cite{graham1971addressing}; in linguistics as a method of representing the similarities between various linguistic objects \cite{firsov1965isometric}; and in coding theory for the design of certain error-checking codes \cite{kautz1958unit}.

\begin{figure}
    \centering
    \includegraphics[width=\linewidth]{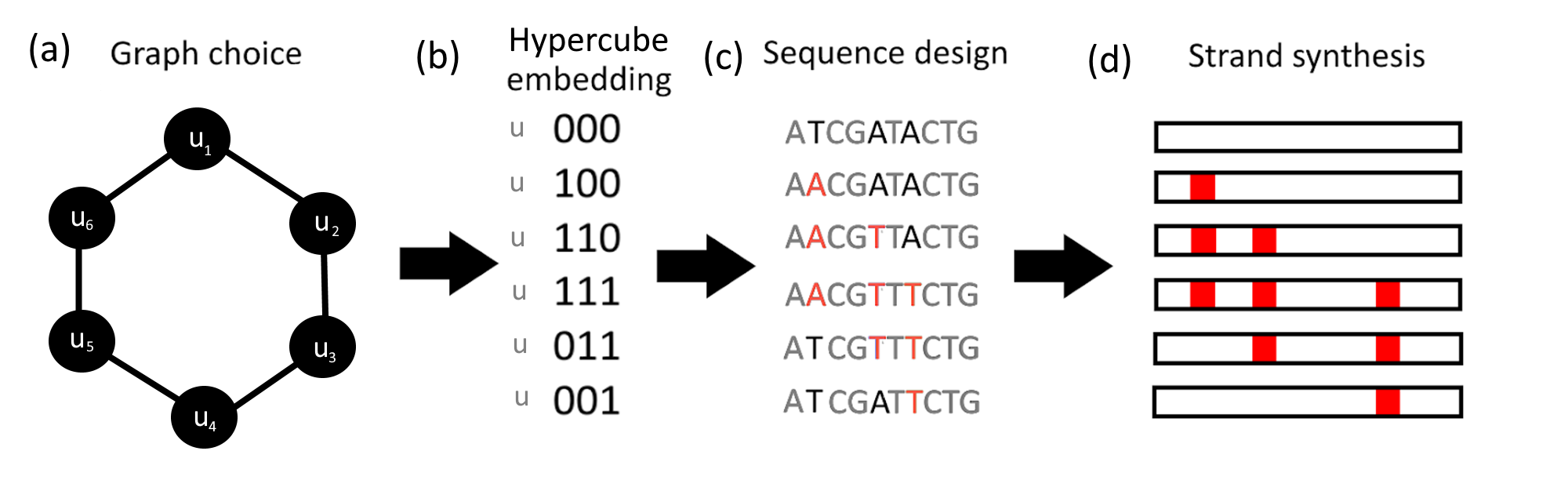}
    \caption[Schematic of the relationship between hypercube embeddability and DNA sequence design]{A schematic showing how hypercube embeddings can be used to construct a set of DNA sequences with a particular relationship. a) An unweighted graph representing the desired binding strength relationships between DNA strands. b) A hypercube embedding (equivalent to an assignment of binary strings) of this graph. c-d) Each binary string in the hypercube embedding corresponds to a DNA sequence, by associating 0 and 1 with point mutations within a sequence. When mutations are not adjacent and the pre- and post-mutation bases are chosen carefully, the binding strength between DNA strands is approximately determined by the number of mutations between one strand and the complement of the other. This relationship can be captured in a hypercube or Hamming graph.}
    \label{fig:dna-comparison}
\end{figure}

In general, many graphs will not have isometric embeddings into a particular destination graph, and the problem of efficiently finding isometric embeddings has been solved for only certain classes of graphs. Prior work has addressed this task for unweighted graphs into hypercubes \cite{djokovic1973distance}, Hamming graphs \cite{winkler1984isometric, wilkeit1990isometric}, and arbitrary Cartesian graph products \cite{graham1985isometric, feder}. These works on unweighted graphs related the isometric embedding problem to representations of a graph either as isomorphic to a Cartesian product of graphs, which we call a \emph{factorization}, or as isomorphic to an isometric subgraph of a Cartesian product of graphs, which we call a \emph{pseudofactorization}. In this work, we extend the concepts of factorization and pseudofactorization to weighted graphs, a task that to the best of our knowledge has not been addressed before. We also note that due to the work in Berleant \textit{et al} \cite{joseph_citation}, this has implications for finding hypercube and Hamming embeddings of certain kinds of graphs.

Except for Section \ref{sec:factorization}, which applies to all weighted graphs, this paper focuses on weighted graphs for which every edge is a shortest path between its endpoints. We call such graphs \emph{minimal graphs}. 
Minimal graphs are a natural subset of weighted graphs, as when we care about shortest paths, edges whose weights are larger than the distance between their endpoints are superfluous.

With this in mind, while we mostly focus on minimal graphs, many aspects of our results are also applicable to arbitrary weighted graphs, because any weighted graph may be made minimal simply by removing any edges that do not affect its path metric. Our results also apply in some cases to arbitrary finite metric spaces, because any finite metric space has a corresponding minimal graph generated by taking a weighted complete graph and removing all extra edges. Other weighted graphs may also be constructed for a given finite metric space, and the isometric embeddings described by our methods will in general depend on which weighted graph representation is used \cite{DezaLaurentText_20.4AdditionalNotes}. 

\subsection{Other work}

\subsubsection{Factorization and pseudofactorization of unweighted graphs}

The \emph{Cartesian graph product} (defined formally in Section \ref{sec:preliminaries})
combines $k\geq 1$ graphs called {\em factors}, so that every vertex in the product graph is a $k$-tuple of vertices, one from each of the factors, and each edge of the product graph corresponds to a single edge from a single one of the factors (see Figure \ref{fig:basic-cart-prod}).

\begin{figure}
    \centering
    \includegraphics[width=.7\linewidth]{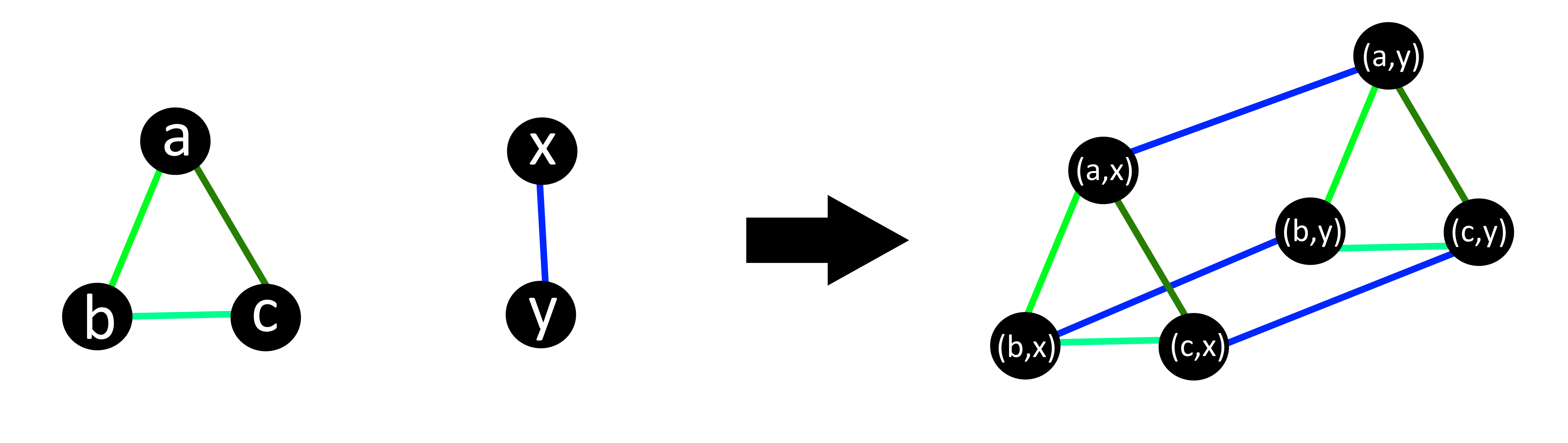}
    \caption[An example of a Cartesian product]{The graph on the right is the Cartesian product of the graphs on the left. The parent edge of a given edge in the product graph is the edge in a factor that has the same color as it. For example, edges $(a,x)(a,y)$, $(b,x)(b,y)$, and $(c,x)(c,y)$ in the product graph all have edge $xy$ in the second factor as their parent edge.}
    \label{fig:basic-cart-prod}
\end{figure}

The problem of finding a representation of a given graph as a Cartesian product of factor graphs is of central importance to isometric embeddings because of the property that every distance in the product graph may be decomposed as a sum of distances in the factor graphs.
This problem may take two forms: the factorization problem and the pseudofactorization problem (Figure \ref{fig:factor}).
Factorization shows that the given graph is isomorphic to a Cartesian product. Pseudofactorization shows that the graph is isomorphic to an isometric subgraph of a Cartesian product.
For unweighted graphs, both problems are well studied as we see below. 

\begin{figure}

\centering
\includegraphics[width=.4\linewidth]{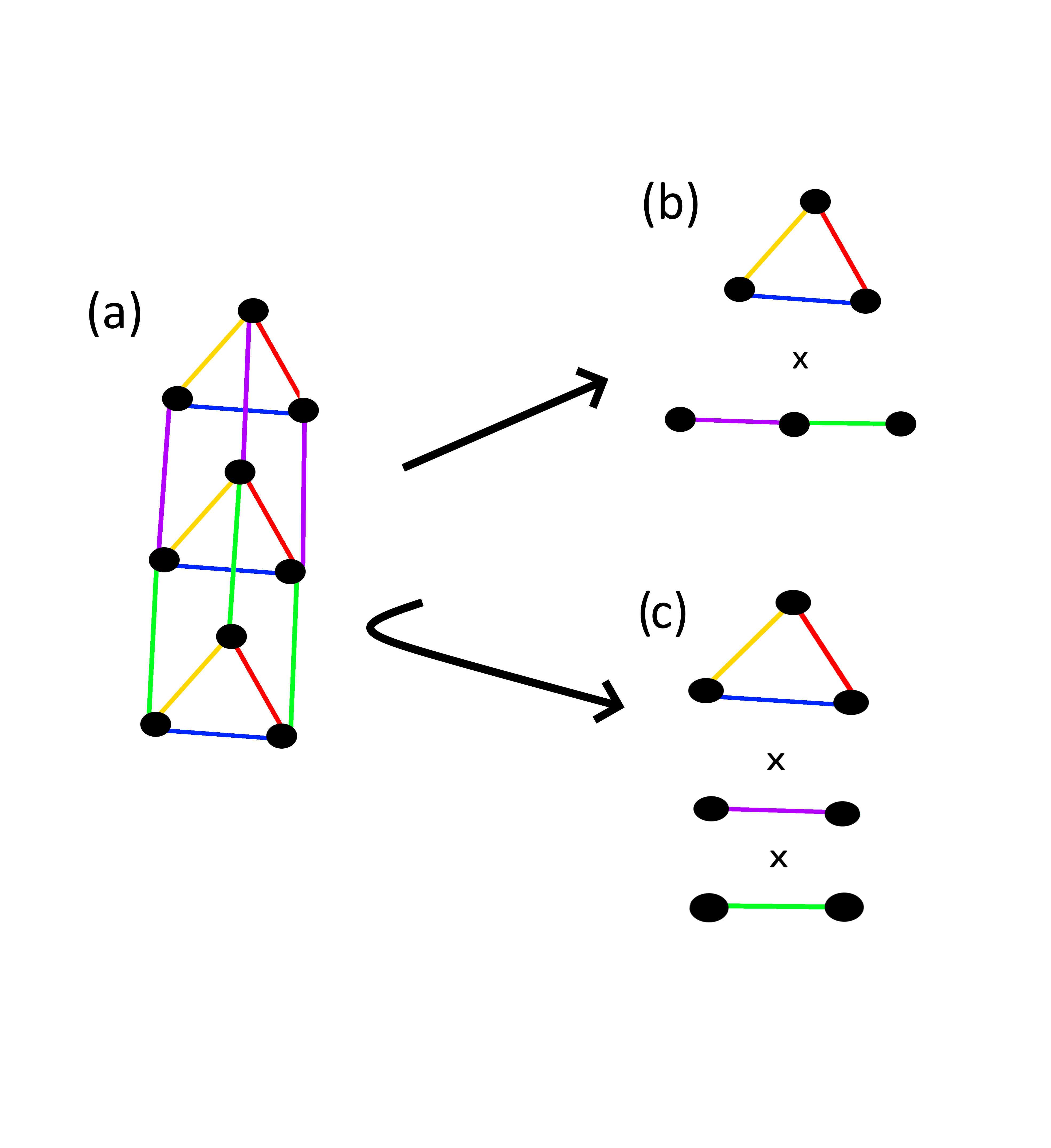}

\caption[An example of factorization and pseudofactorization]{ Subfigure (a) shows a non-prime, non-irreducible graph. Subfigure (b) shows the prime factorization of the graph in (a). Subfigure (c) shows the canonical pseudofactorization of the graph in (a).}
\label{fig:factor}
\end{figure}

Given a graph, a factorization is a set of factor graphs whose Cartesian product is isomorphic to the given graph. A \emph{prime} graph is one whose factorizations always include itself. Graham and Winkler~ \cite{graham1985isometric} showed that every connected, unweighted graph has a unique factorization into prime graphs, or \emph{prime factorization}. Feder \cite{feder} showed that prime factorization for any unweighted $m$-edge, $n$-vertex graph can be found in $O(mn)$ time. Imrich and Klav\v{z}ar \cite{ImrichHammingRecognize} showed that deciding if the prime factorization of an unweighted $m$-edge graph consists entirely of complete graphs can be done in $O(m)$ time. More recently, Imrich and Peterin \cite{imrichLinearFactoring} showed that finding the prime factorization of an arbitrary unweighted $m$-edge graph can also be done in $O(m)$ time. However, for a disconnected graph, this problem is at least as hard as graph isomorphism, which is not known to be in P. As a simple example, consider two connected graphs $G$, $H$, and note that the disjoint union $G\cup H$ has the graph of two disconnected nodes as a factor if and only if $G$ and $H$ are isomorphic (see \cite{DezaLaurentText3}). In fact, the prime factorization of a disconnected graph is no longer unique \cite{winkler1987metric}.

Pseudofactorization generalizes the concept of factorization, by only requiring that the input graph be isometrically embeddable into the Cartesian product of a set of graphs. Clearly, any factorization is also a pseudofactorization; however, the converse is not true (e.g., see Figure \ref{fig:factor}). The analog of a prime graph in the context of pseudofactorization is an \emph{irreducible} graph. For unweighted graphs, Graham and Winkler's prior study of pseudofactorization \cite{graham1985isometric} showed that each connected, unweighted graph has a unique pseudofactorization into irreducible graphs, its \emph{canonical pseudofactorization}.\footnote{Notably, Graham and Winkler \cite{graham1985isometric,winkler1987metric} use the term ``factor'' both for graphs in a factorization and graphs in a pseudofactorization. Here we specifically use the term ``pseudofactors'' to refer to the graphs forming some pseudofactorization.}
Importantly, this task is typically phrased as finding a Cartesian product into which an input graph is isometrically embeddable; however, even $K_2$ has no pseudofactorization into irreducible graphs under this definition (Figure \ref{fig:no-pf}), so it must be generalized carefully to weighted graphs.

\begin{figure}
    \centering
    \includegraphics{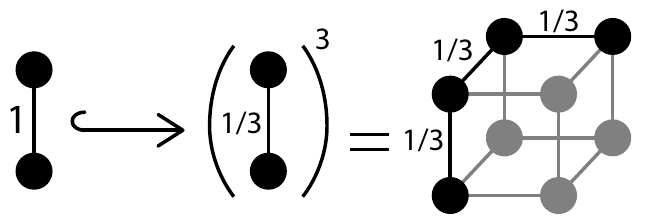}
    \caption{A generalization of pseudofactorization to weighted graphs must be done with care. For example, if pseudofactorization requires only that the input graph be isometrically embeddable into a Cartesian graph product, then even graphs such as $K_2$ have no irreducible pseudofactorization. We require that a graph also be isomorphic to a subgraph of the Cartesian graph product, which precludes this example.}
    \label{fig:no-pf}
\end{figure}

\subsubsection{Isometric embeddings of unweighted graphs}

Despite the connection between pseudofactorization and isometric embedding, studies of isometric embeddings of unweighted graphs preceded Graham and Winkler's original description of pseudofactorization. In 1973, Djokovi\'c \cite{djokovic1973distance} was the first to characterize the unweighted graphs with isometric embeddings into a hypercube, and showed that constructing such an embedding for an unweighted graph can be done in polynomial time. In 1984, Winkler \cite{winkler1984isometric} extended these results to finding isometric embeddings of unweighted graphs into Hamming graphs, designing an $O(|V|^5)$ algorithm to do so. Soon after, Graham and Winkler \cite{graham1985isometric} generalized the earlier results to arbitrary Cartesian graph products, showing that each unweighted graph has a unique isometric embedding into the Cartesian product of its irreducible pseudofactorization. While Graham and Winkler's original paper implied an $O(|E|^2)$ algorithm for finding this unique isometric embedding, later results, such as the $O(|V||E|)$ pseudofactorization algorithm designed by Feder, may be used to construct this isometric embedding as well \cite{feder}.

\subsubsection{Hypercube and Hamming embeddings of weighted graphs}

For some alphabet $\Sigma$ and some integer $k$, a Hamming embedding of a graph $G$ is a map $\eta:V(G)\rightarrow\Sigma^k$ such that for all $u,v\in V(G)$, $d_G(u,v)=d_H(\eta(u),\eta(v))$, where $d_H$ is Hamming distance. A hypercube embedding of $G$ is a Hamming embedding where $\Sigma=\{0,1\}$. The work of Berleant \textit{et al} \cite{joseph_citation} shows that a given graph $G$ is hypercube or Hamming embeddable if and only if its pseudofactors are hypercube or Hamming embeddable.
While it is NP-hard to decide if general weighted graphs are hypercube embeddable, some authors have shown that this can be decided in polynomial time for certain classes of weighted graphs. In particular, if a graph is a line graph or a cycle graph \cite{DezaLaurentL1} or if a graph's distances are all in $\{x,y,x+y\}$ for integers $x,y$, at least one of which is odd \cite{LaurentFewDistances}, it is polynomial time decidable. Shpectorov's results also tell us that for graphs with uniform weights, we can decide this in polynomial time \cite{Shpectorov}. Our results, combined with those of Berleant \textit{et al}, \cite{joseph_citation} show that if there is a polynomial time algorithm to decide if graphs $G_1,G_2,\ldots,G_k$ are hypercube embeddable, then in polynomial time we can also decide hypercube embeddability of any graph $G$ that has the $G_i$ as its pseudofactors.
The number of graphs for which we currently know how to decide hypercube embeddability is very restricted, but if future categories of graphs are found to have polynomial time decidability for this property, this result will apply to extend such a result to the isometric subgraphs of Cartesian product of such graphs with each other and with other graphs in this category.

\subsection{Our results}


In this paper, we start by generalizing the notions of factorization and pseudofactorization to weighted graphs. While defining factorization for weighted graphs is straightforward, pseudofactorization of weighted graphs is more subtle. As an example, if, in analogy to unweighted graphs, a pseudofactorization of graph $G = (V,E,w)$ is defined as a set of graphs for which $G$ is isometrically embeddable into their product, then the graph $K_2$ will have no irreducible pseudofactorization into weighted graphs because it can be isometrically embedded into the Cartesian product of $k$ copies of $K_2$ with edge weight $\frac{1}{k}$ for any positive integer $k$ (Figure \ref{fig:no-pf}). Instead, we require $G$ to be isomorphic to an isometric subgraph of the product of the pseudofactors. This constraint implies both preservation of distances and of edges between $G$ and the pseudofactor product. When all graphs are unweighted, this is equivalent to previous work.

With pseudofactorization defined as such, we are able to prove first that the $O(|E|^2)$ algorithm proposed by Graham and Winkler \cite{graham1985isometric} can be adapted slightly to work on weighted $|E|$-edge graphs. While the algorithm itself is largely unchanged, additional proof is required to show that the output (e.g., the edge weights in the pseudofactors) does not depend on the order in which edges and vertices are traversed, and that the output is a correct pseudofactorization of the input graph into irreducible graphs.

\begin{thm}\label{thm:intro:pseudofactorization-gw}
Given a minimal weighted graph $G=(V,E)$ and the distances between all pairs of vertices, a pseudofactorization into irreducible weighted graphs can be achieved in $O(|E|^2)$ time. If the distances are not pre-computed, the time required to compute all-pairs shortest paths (APSP) must be included, and pseudofactorization may be achieved in $(|V|^2 \log \log |V| + |E|^2)$.
\end{thm}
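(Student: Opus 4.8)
The plan is to carry over the Graham--Winkler pseudofactorization algorithm, with the modifications that weights force, and then to prove that every step is well-defined independently of traversal order and that the resulting collection is a pseudofactorization into irreducible graphs. First I would work with the Djokovi\'c--Winkler relation $\theta$ on $E(G)$: for edges $e=xy$ and $f=uv$, set $e\mathbin{\theta}f$ exactly when $\edgediff{G}{x}{y}{u}{v}\neq 0$; using the precomputed distances, this test costs $O(1)$ per ordered pair, so $\theta$ is obtained in $O(|E|^2)$ time. Because the weighted setting allows two ``parallel'' edges of a pseudofactor to carry different weights, I would augment $\theta$ with an auxiliary relation $\tau$ that records pairs of edges which cannot lie in different pseudofactors for weight reasons (e.g.\ incident edges $xy$, $yz$ with $w(xy)+w(yz)\neq d_G(x,z)$), and take the equivalence classes $F_1,\dots,F_k$ of the transitive closure $(\theta\cup\tau)^*$ (the paper's $\factorrel$); this is again computed from the distances in $O(|E|^2)$ time. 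For each class $F_i$, delete the edges of $F_i$ from $G$, let the connected components of what remains be the vertices of the candidate pseudofactor $G_i$, and join two components by an edge of weight $w(e)$ whenever $e\in F_i$ runs between them; let $\pi_i\colon V(G)\to V(G_i)$ send a vertex to its component and set $\alpha=(\pi_1,\dots,\pi_k)$. Since $k\le |V|-1$ (each nontrivial class must contain an edge of any spanning tree), all component and quotient computations fit in $O(k(|V|+|E|))=O(|E|^2)$ time.

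The correctness argument then splits into four parts. \emph{(i) Well-definedness and order-independence:} by the design of $\tau$, all edges of $F_i$ running between a fixed pair of components of $G-F_i$ have equal weight, so the weight of each edge of $G_i$ is determined; and since $\theta$, $\tau$, the transitive closure, the components, and these weights are all defined purely by the path metric of $G$, the output does not depend on the order in which edges or vertices are processed. \emph{(ii) Edges map to edges:} the two endpoints of any $F_i$-edge lie in distinct components of $G-F_i$, so $G_i$ has no loops and $\alpha$ carries every edge of $G$ to an edge of $G_1\times\cdots\times G_k$ that moves in exactly one coordinate. \emph{(iii) Isometry:} one shows $d_G(u,v)=\sum_{i=1}^{k}d_{G_i}(\pi_i u,\pi_i v)$ for all $u,v$. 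The bound ``$\ge$'' comes from taking a geodesic of $G$, partitioning its edges by class, and observing that the $F_i$-edges, read in order, project to a walk in $G_i$ from $\pi_i u$ to $\pi_i v$; the bound ``$\le$'' comes from building, one coordinate at a time, a $u$--$v$ path of $G$ whose length matches the right-hand side, which requires showing that the $F_i$-components are metrically ``gated'' enough for such lifts of $G_i$-geodesics to exist. Together with injectivity of $\alpha$---immediate once the decomposition is known, since $d_G(u,v)>0$ makes some coordinate of $\alpha$ differ---this realizes $G$ as an isometric subgraph of $\prod_i G_i$, i.e.\ $\{G_i\}$ is a pseudofactorization. \emph{(iv) Canonicity and irreducibility:} a direct computation in a Cartesian product shows that $e\mathbin{\theta}f$ or $e\mathbin{\tau}f$ forces $e$ and $f$ into a common factor of \emph{any} pseudofactorization of $G$, so the partition $\{F_i\}$ refines that of every pseudofactorization; being itself realized by $\{G_i\}$, it is the finest one, hence canonical, and no $G_i$ can be split further, i.e.\ each $G_i$ is irreducible.

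I expect the crux to be part (iii), and within it the ``$\le$'' inequality together with the structural lemmas it rests on. In the unweighted Graham--Winkler proof these amount to counting how often a geodesic crosses a $\theta^{*}$-class and exploiting the rigidity of such classes as minimal cuts, several ingredients of which (for instance, that the distances from a fixed vertex to the two ends of an edge differ by at most, and often exactly, the unit weight) are special to unit weights and genuinely fail for minimal weighted graphs. Re-deriving the needed convexity of the $F_i$-components and the liftability of quotient geodesics from minimality and positivity of the weights alone is the technical heart; once it is in place, the weight-consistency in (i) and the refinement statement in (iv) follow by comparatively routine bookkeeping. The running-time claim then follows by adding to the $O(|E|^2)$ algorithmic cost the time to compute all-pairs shortest paths on a graph with positive integer weights, namely $O(|V|^2\log\log|V|+|V||E|)$ via one single-source computation from each vertex with a fast integer priority queue; the $|V||E|$ term is dominated by $O(|E|^2)$, giving the stated $O(|V|^2\log\log|V|+|E|^2)$ bound.
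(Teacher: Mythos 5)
Your route is essentially the paper's: compute the Djokovi\'c--Winkler classes from the precomputed distances in $O(|E|^2)$ time, quotient $G$ by each class, and verify well-definedness, edge preservation, isometry, and irreducibility. Within that plan there are two substantive problems. First, your auxiliary relation $\tau$ is vacuous on minimal graphs and is misattributed a job it cannot do: for incident edges $xy,yz$ one has $\edgediff{G}{x}{y}{y}{z}=w(xy)+w(yz)-d_G(x,z)$ by minimality, so your $\tau$ is exactly $\thetarel$ restricted to incident pairs and $(\theta\cup\tau)^*=\thetahrel$. Harmless for canonicity, but then the weight-consistency you assert in (i) --- that all class-$F_i$ edges joining a fixed pair of components of $G-F_i$ carry equal weight --- concerns edges that are in general \emph{not} incident, and it does not follow ``by the design of $\tau$.'' It is a genuine lemma (the paper's Lemma \ref{claim:pseudofactor1}), proved by telescoping theta-differences along four paths running inside the two components and using minimality; the same lemma also yields that a vertex of one component has \emph{at most one} neighbour in the other, a fact you omit but need for $\alpha$ to carry each edge of $G$ to a single edge of the product.

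Second, the crux you correctly locate in part (iii) is left as a placeholder, and the ingredient that actually closes it is never named. The paper does not establish convexity or gatedness of the $F_i$-components; instead it proves that the per-class theta-sum $T_k^P$ along a $u$--$v$ path depends only on $u$ and $v$ (Lemma \ref{claim:theta-sum}), and combines this with a minimal-counterexample argument to show that the projection onto $G_k$ of a geodesic of $G$ is already a geodesic of $G_k$, which gives the inequality $d_G(u,v)\le\sum_i d_{G_i}(\pi_i u,\pi_i v)$. Until you either supply that invariance or actually derive the liftability you invoke, (iii) is a gap, not bookkeeping. Relatedly, in (iv) the refinement argument shows the $\thetahrel$-partition is the finest compatible with any pseudofactorization of $G$, but irreducibility of each output graph $G_i$ additionally requires showing that all edges of $G_i$ are $\thetahrel$-related \emph{when $\theta$ is evaluated on $G_i$ itself}; this needs the computation transferring theta-differences from $G$ to the factor (the paper's Lemma \ref{claim:output-irreducible}). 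The runtime accounting, including the $O(|E|^2)$ class computation and the APSP term, is fine.
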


The APSP running time of $(|V|^2 \log \log |V| + |E||V|)$ is by Pettie~\cite{pettieapsp}, and it is dominated by the $O(|E|^2)$ term for dense enough graphs.

Our proof uses the Djokovi\'c-Winkler relation $\thetarel$ and its transitive closure $\thetahrel$, which are relations on the edges of a graph and are frequently used to pseudofactor unweighted graphs  \cite{winkler1987metric}. For unweighted graphs, each equivalence class $E_i$ of $\thetahrel$ is used to generate one pseudofactor by removing the edges in $E_i$ from the input graph and taking each connected component of the resulting graph to be a vertex of the pseudofactor. Vertices are adjacent in that pseudofactor if there is an edge in $E_i$ connecting the corresponding connected components \cite{graham1985isometric}. To apply this process to weighted graphs, we must prove that all edges connecting any two connected components have the same edge weight (Lemma \ref{claim:pseudofactor1}), and that this edge weight may be used as the edge weight in the corresponding pseudofactor. The proofs that the input graph is isometrically embeddable into the resulting set of pseudofactors, and that each pseudofactor is irreducible, are given in Theorems \ref{claim:pseudofactor5} and \ref{claim:output-irreducible}. The proof of the runtime of this algorithm is given in Section \ref{sec:pseudofactorization-runtime1}.

In addition, we adapt the reasoning of Graham and Winkler on unweighted graphs \cite{graham1985isometric, winkler1987metric} to prove that the irreducible pseudofactorization of a minimal weighted graph is unique. As a result, we call the irreducible pseudofactorization output by this algorithm the \emph{canonical pseudofactorization} and each pseudofactor a \emph{canonical pseudofactor}.

\begin{thm}\label{thm:intro:pseudofactorization-unique}
For a minimal weighted graph, any two pseudofactorizations into irreducible weighted graphs are equivalent in the following sense: there exists a bijection between the two sets of pseudofactors such that corresponding pairs of pseudofactors are isomorphic to each other.
\end{thm}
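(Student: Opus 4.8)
The plan is to take the canonical pseudofactorization produced by the Graham--Winkler-style algorithm of Theorem~\ref{thm:intro:pseudofactorization-gw} as a fixed reference and to show that \emph{every} pseudofactorization of $G$ into irreducible weighted graphs is isomorphic to it factor by factor. Fix such a pseudofactorization, realized by an isomorphism $\phi$ from $G$ onto an isometric subgraph of a Cartesian product $H_1\cartprod\cdots\cartprod H_k$ with every $H_j$ irreducible; since the definition of pseudofactorization requires preservation of edges as well as distances, this subgraph is induced, and we may assume no coordinate is redundant (any redundant factor can be dropped), so that every edge of every $H_j$ is the parent of at least one edge of $G$. Three ingredients drive the proof: (i) every $e\in E(G)$ has a well-defined parent coordinate $\pi(e)$ and parent edge $\bar e\in E(H_{\pi(e)})$, and $w_G(e)$ equals the weight of $\bar e$, since $\phi$ is a weighted-graph isomorphism onto its image and Cartesian-product edges inherit the weight of their parent; (ii) writing $d_G=\sum_l d_{H_l}(\phi_l(\cdot),\phi_l(\cdot))$ and using that an edge in coordinate $j$ fixes every other coordinate, the quantity defining $\thetarel$ for two coordinate-$j$ edges $e,f$ of $G$ equals that same quantity for $\bar e,\bar f$ in $H_j$, so $e\thetarel f\iff\bar e\thetarel\bar f$, whereas edges lying in different coordinates are never $\thetarel$-related; and (iii) Lemma~\ref{claim:pseudofactor1} (which relies on minimality of $G$), making the weights of the canonical pseudofactors well defined.

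By (ii), $\pi$ is constant on each $\thetahrel$-class, so grouping $E(G)$ by coordinate produces a coarsening of the $\thetahrel$-partition whose blocks are nonempty unions of $\thetahrel$-classes. The heart of the matter is to show these two partitions coincide, i.e.\ that each $H_j$ absorbs exactly one $\thetahrel$-class of $G$. Suppose instead that some $H_j$ absorbs two or more. Using (ii) and irredundancy (so that a $\thetarel$-chain in $H_j$ joining $\bar e$ to $\bar f$ lifts edge by edge to a $\thetarel$-chain in $G$, taking $e$ and $f$ themselves as the lifts of $\bar e$ and $\bar f$), the map $e\mapsto\bar e$ descends to a bijection between the $\thetahrel$-classes of $G$ in coordinate $j$ and the $\thetahrel$-classes of $H_j$; in particular $H_j$ would have two or more $\thetahrel$-classes, so the construction of Theorem~\ref{claim:pseudofactor5} applied to the minimal graph $H_j$ gives a pseudofactorization of $H_j$ into two or more graphs, each with strictly fewer edges than $H_j$ and hence not isomorphic to it --- contradicting irreducibility of $H_j$. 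Thus each coordinate absorbs exactly one $\thetahrel$-class; with (ii) and irredundancy this is a bijection between the $\thetahrel$-classes of $G$ and the factors $H_1,\dots,H_k$. Since the canonical pseudofactorization has exactly one pseudofactor $C_i$ per $\thetahrel$-class $E_i$, we get a bijection pairing each $C_i$ with the factor $H_{j(i)}$ absorbing $E_i$.

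Finally, I would check that matched pairs are isomorphic as weighted graphs. A shortest $u$--$v$ path in $G$ spends total weight exactly $d_{H_{j(i)}}(\phi_{j(i)}(u),\phi_{j(i)}(v))$ on edges of $E_i$, by the coordinate-wise expansion of $d_G$, and this is $0$ iff $\phi_{j(i)}(u)=\phi_{j(i)}(v)$; since weights are positive, the fibers of $\phi_{j(i)}$ are exactly the connected components of $G-E_i$, which are the vertices of $C_i$. The induced bijection $C_i\to H_{j(i)}$, sending the component of $v$ to $\phi_{j(i)}(v)$, is onto because $\phi_{j(i)}$ is; it sends an edge of $C_i$ to the parent in $H_{j(i)}$ of a corresponding $E_i$-edge, with equal weight by (i) and Lemma~\ref{claim:pseudofactor1}; and conversely every edge of $H_{j(i)}$ arises so by irredundancy, while the induced-subgraph property of $\phi$ forbids any extra edges. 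Hence $C_i\cong H_{j(i)}$, proving the theorem. The main obstacle I anticipate is the second paragraph: securing the correct notion of pseudofactorization so that ``every edge of every factor is a parent edge'' holds --- precisely the subtlety the paper highlights around Figure~\ref{fig:no-pf} --- since this is what allows $\thetarel$-chains inside a factor to be lifted; carrying edge weights faithfully through the parent map (via minimality and Lemma~\ref{claim:pseudofactor1}) is the other genuinely new point compared with Graham and Winkler's argument.
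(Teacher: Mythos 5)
Your proposal is correct and follows essentially the same route as the paper's proof of Theorem \ref{thm:pseudofactorization-unique}: both establish that parent-edge coordinates correspond exactly to $\thetahrel$-classes (via the coordinate-wise theta-difference computation, irreducibility of each factor forcing a single $\thetahrel$-class, and lifting of $\thetarel$-chains through condition 4 of Definition \ref{def:pseudofactorization}), and both identify the fibers of each coordinate projection with the connected components of $G$ minus the corresponding class using the distance decomposition and positivity of weights. The paper phrases this as a direct comparison of two arbitrary irreducible pseudofactorizations, but its intermediary is exactly the canonical one you fix as a reference, so the arguments coincide; your appeal to an ``induced subgraph'' property is not actually guaranteed by Definition \ref{def:pseudofactorization}, but as in the paper it is also not needed, since condition 4 already supplies the converse edge correspondence.
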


Theorem \ref{thm:intro:pseudofactorization-unique} is proven in Section \ref{sec:pseudofactorization-unique}.

Finally, we modify a pseudofactorization algorithm on unweighted graphs due to Feder \cite{feder} to speed up the pseudofactorization of minimal weighted graphs. Feder improved upon Graham and Winkler's runtime by finding a spanning tree $T$ for the graph and defining a new  relation $\theta_T$ on the edges of a graph such that two edges in the graph are related by $\theta_T$ if they are related by $\theta$ and at least one of them is in $T$. Feder showed that applying Graham and Winkler's pseudofactorization algorithm using the transitive closure of this relation, $\hat\theta_T$, produces an irreducible pseudofactorization of a weighted graph, no matter the choice of $T$. Because these equivalence classes are quicker to compute than those of $\hat\theta$ and the number of equivalence classes is necessarily limited to $|V|-1$, this results in an improved runtime. While we do not show that the same fact extends to weighted graphs, we are able to use these ideas to get an improved runtime. To do so, we provide Algorithm \ref{alg:faster-alg}, which shows how to find in $O(|E||V|)$ time a spanning tree $T$ of a weighted graph for which $\hat\theta_T$ has the same equivalence classes as $\hat\theta$.  This allows us to improve the time complexity of pseudofactorization to $O(|E||V|)$.

\begin{thm}\label{thm:intro:pseudofactorization-fast}
Given a minimal weighted graph $(V,E)$ and the distances between all pairs of vertices, a pseudofactorization into irreducible weighted graphs is achievable in $O(|E||V|)$ time. If distances are not pre-computed, this is achievable in $O(|V|^2\log\log |V| + |E||V|)$ time.
\end{thm}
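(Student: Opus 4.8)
The plan is to run, essentially verbatim, the Graham--Winkler pseudofactorization procedure underlying Theorem~\ref{thm:intro:pseudofactorization-gw}, but to replace its bottleneck --- computing the equivalence classes of $\hat\theta$, which costs $\Theta(|E|^2)$ because it compares all pairs of edges --- with the computation of the equivalence classes of $\hat\theta_T$ for the spanning tree $T$ returned by Algorithm~\ref{alg:faster-alg}. By the defining guarantee of that algorithm, $\hat\theta_T$ and $\hat\theta$ induce the same partition of $E$, and since the Graham--Winkler construction depends only on this partition, the two runs produce the same pseudofactors; the difference is purely in the cost of obtaining the partition.

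Concretely, I would proceed in stages. (i) If the pairwise distances are not given, compute all-pairs shortest paths in $O(|V|^2\log\log|V| + |E||V|)$ time by Pettie~\cite{pettieapsp}; thereafter every distance query is $O(1)$. (ii) Run Algorithm~\ref{alg:faster-alg} to get, in $O(|E||V|)$ time, a spanning tree $T$ for which $\hat\theta_T$ has the same classes as $\hat\theta$. (iii) Build $\theta_T$ explicitly by testing, for each of the $|V|-1$ tree edges $t$ and each of the $|E|$ edges $f$, whether $t\,\theta\,f$ via an $O(1)$-time test on the precomputed distances among their endpoints; this is $O(|E||V|)$ and captures all of $\theta_T$, since every $\theta_T$-pair has an endpoint in $T$. (iv) Form the auxiliary graph on vertex set $E$ whose edges are the $\theta_T$-pairs and take its connected components; these are the $\hat\theta_T$-classes $E_1,\dots,E_k$, and graph search yields them in $O(|E||V|)$. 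Here $k\le |V|-1$, because (as in Feder~\cite{feder}) the fundamental tree path of any non-tree edge $e$ contains an edge $\theta$-related to $e$, so every class contains an edge of $T$. (v) For each $E_i$, build the pseudofactor $G_i$ exactly as in Theorem~\ref{thm:intro:pseudofactorization-gw}: delete $E_i$ from $G$, use the connected components of what remains as the vertices of $G_i$, join two components by an edge of $G_i$ whenever some edge of $E_i$ runs between them, and --- invoking Lemma~\ref{claim:pseudofactor1} --- give that edge the common weight shared by all edges of $E_i$ between those two components. Each $G_i$ costs $O(|V|+|E|)$, and with $k\le|V|-1$ the total is $O(|V|^2+|V||E|)$, which is within the stated bounds (and is $O(|V||E|)$ for a connected $G$).

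Correctness then follows by inheritance: Theorems~\ref{claim:pseudofactor5} and~\ref{claim:output-irreducible} establish that this construction, applied to the $\hat\theta$-classes, yields a pseudofactorization of $G$ into irreducible weighted graphs, and since step (ii) guarantees the $\hat\theta_T$-classes coincide with the $\hat\theta$-classes, steps (iii)--(v) output precisely that pseudofactorization. Adding the stage costs gives $O(|E||V|)$ when the distances are supplied and $O(|V|^2\log\log|V| + |E||V|)$ otherwise.

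The part I expect to be hardest is not this assembly but justifying Algorithm~\ref{alg:faster-alg} itself: in the unweighted case Feder shows that \emph{every} spanning tree already satisfies $\hat\theta_T = \hat\theta$, whereas the excerpt is careful to say this is not known for minimal weighted graphs, so one must exhibit a particular $T$ that works and argue it is found in $O(|E||V|)$ time. I would expect the argument to start from an arbitrary spanning tree and repeatedly ``repair'' it --- swapping in a new tree edge whenever two $\hat\theta$-equivalent edges are not yet linked through tree edges of $\theta_T$ --- together with a potential-function bound of $O(|V|)$ on the number of repairs and $O(|E|)$ on the cost of each. The weighted analogue of the ``every $\hat\theta$-class meets the repaired spanning tree'' property, which also underpins the $k\le|V|-1$ bound used above, is the crux to get right.
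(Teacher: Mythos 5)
Your proposal matches the paper's approach essentially step for step: it uses Algorithm~\ref{alg:faster-alg} to produce a spanning tree $T^*$ with $\hat\theta_{T^*}$-classes equal to the $\hat\theta$-classes, computes those classes in $O(|V||E|)$ time by comparing tree edges against all edges, bounds the number of classes by $|V|-1$ via the fact that every class contains a tree edge, and then inherits correctness from Theorems~\ref{claim:pseudofactor5} and~\ref{claim:output-irreducible}. You also correctly identify that the real work lies in proving Algorithm~\ref{alg:faster-alg} correct via an iterative tree-repair argument with an $O(|V|)$ bound on the number of swaps, which is precisely how the paper's Section~\ref{sec:pseudofactorization-runtime2} proceeds.
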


Our results on factorization parallel those for pseudofactorization, and in fact, many of our proofs for factorization rely on those for pseudofactorization. Feder showed that an unweighted graph can be factored by replacing the Djokovi\'c-Winkler relation $\thetarel$ with a different relation. We use the replacement relation $\theta\cup\tau$, where $\tau$ relates edges based on a so-called \emph{square property} \cite{feder}. This is similar to the relation of the same name proposed by Feder, but with some added restrictions on the relation between edges on opposite sides of the square in question. This allows us to make the following statement, whose proof follows similar steps to that for pseudofactorization.

\begin{thm}
Given a minimal weighted graph and the distances between all pairs of vertices, a factorization into prime weighted graphs is achievable in $O(|E|^2)$ time. If distances are not pre-computed, this is achievable in $O(|V|^2\log\log |V| + |E|^2)$ time.
\end{thm}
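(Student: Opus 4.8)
The plan is to run the pseudofactorization argument behind Theorem~\ref{thm:intro:pseudofactorization-gw} essentially verbatim, but with the transitive closure $\thetahrel$ of the Djokovi\'c--Winkler relation replaced by the transitive closure $\factorrel$ of the coarser relation $\thetarel\cup\taurel$, where $\taurel$ is the square-property relation. Concretely: compute the equivalence classes $E_1,\dots,E_k$ of $\factorrel$ on $E(G)$; for each $i$, delete from $G$ every edge outside $E_i$, take the connected components of the result as the vertex set of a candidate factor $G_i$, and join two such components by an edge whenever some $E_i$-edge runs between them. Since $\thetarel\subseteq\thetarel\cup\taurel\subseteq\factorrel$ and $\factorrel$ is an equivalence relation, every $\thetahrel$-class sits inside a single $\factorrel$-class, so each $E_i$ is a union of $\thetahrel$-classes; this containment is precisely what lets the pseudofactorization lemmas be reused, both on $G$ and on each $G_i$.

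The first substantive step is to check that each $G_i$ is a well-defined weighted graph, i.e.\ that all $E_i$-edges running between a fixed pair of components carry the same weight. This is the analogue of Lemma~\ref{claim:pseudofactor1}: that lemma already gives the conclusion within a single $\thetahrel$-class, so the remaining task is to propagate equality of weights through the $\taurel$-steps that amalgamate $\thetahrel$-classes into the same $E_i$. Here I would unpack the square property, using the added metric restrictions on opposite sides of a square (those alluded to just before the statement) to force opposite sides of such a square to have equal weight; combined with the convexity arguments already used for pseudofactorization this yields well-definedness, and a routine check shows each $G_i$ is itself minimal.

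The heart of the proof is showing $G\cong G_1\cartprod\cdots\cartprod G_k$. The candidate isomorphism sends $v\in V(G)$ to the tuple $(C_1(v),\dots,C_k(v))$, where $C_i(v)$ is the component of $G$ minus $E\setminus E_i$ that contains $v$. Injectivity, and the fact that an edge of $G$ maps to a pair of tuples differing in exactly one coordinate by an edge of the correct weight, come essentially for free from the pseudofactorization analysis (Theorem~\ref{claim:pseudofactor5} with $\factorrel$ in place of $\thetahrel$). The new content---and the step I expect to be the main obstacle---is \emph{surjectivity onto the full product}: one must show every tuple of the product is realized by some vertex, not merely that $G$ embeds as an isometric subgraph. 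This is exactly the job of $\taurel$. I would argue that the set of realized tuples is nonempty and closed under changing a single coordinate to an adjacent value: given a realized tuple with underlying vertex $v$ and a desired move in coordinate $i$, a $\thetarel$- or $\taurel$-relation certifies the existence of the corresponding edge at $v$, and the square property guarantees that the ``grid squares'' needed to reach every vertex of the product close up, so by induction on distance from a base vertex the realized set is all of $V(G_1\cartprod\cdots\cartprod G_k)$. Together with the distance-decomposition property of Cartesian products this gives the isomorphism (and, by a bijection argument parallel to that behind Theorem~\ref{thm:intro:pseudofactorization-unique}, its uniqueness up to reindexing and isomorphism of factors).

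Finally, primeness of each $G_i$ and the runtime. For primeness I would run the factorization algorithm on $G_i$ and show it returns $G_i$: every edge of $G_i$ lifts to a family of $E_i$-edges in $G$, and $\thetarel$/$\taurel$-relations downstairs push down to $\thetarel$/$\taurel$-relations on $G_i$, so since $E_i$ was a single $\factorrel$-class of $G$ the whole edge set of $G_i$ forms a single $\factorrel$-class of $G_i$; this is the bookkeeping of Theorem~\ref{claim:output-irreducible} specialized from irreducibility to primeness. For the running time, given APSP distances each of the $O(|E|^2)$ ordered edge pairs can be tested for $\thetarel$ in $O(1)$, and each candidate square can be tested for $\taurel$ in $O(1)$ with $O(|E|^2)$ candidates, so $\thetarel\cup\taurel$ is assembled in $O(|E|^2)$ time; its transitive closure and the induced partition are obtained by a union--find pass over the related pairs, and the factors $G_i$ are then extracted in $O(|E|+|V|)$ by one sweep over $V$ and $E$. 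Adding Pettie's APSP bound $O(|V|^2\log\log|V|+|E||V|)$ gives the stated $O(|V|^2\log\log|V|+|E|^2)$ total, which matches the pseudofactorization running time of Theorem~\ref{thm:intro:pseudofactorization-gw} since the $|E|^2$ term dominates.
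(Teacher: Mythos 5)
Your proposal is correct and follows essentially the same route as the paper: run Algorithm~\ref{alg:general-alg} with $\factorrel$ in place of $\thetahrel$, reuse the pseudofactorization machinery for injectivity and edge preservation, and use the square property to supply the new content (well-defined edge weights between components and surjectivity onto the full product), with the same $O(|E|^2)$-plus-APSP accounting. The one step you leave implicit is isolated in the paper as Lemma~\ref{claim:factorization1} --- an induction along a path inside a component, applying the square property at each edge to transport a cross-edge of fixed weight --- which is exactly the ``grid squares close up'' mechanism you describe and which simultaneously yields weight well-definedness, surjectivity, and the edge-iff-edge direction of the isomorphism.
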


This theorem will be proven in Section \ref{sec:factorization-final-proof} with the runtime analyzed in Section \ref{sec:factorization-runtime}.

As with pseudofactorization, we are also able to show that the prime factorization of a minimal graph $G$ is unique. The proof in this case is much simpler because of the restriction that $G$ be isomorphic to the Cartesian product of its prime factors.

\begin{thm}
For a minimal weighted graph, any two factorizations into prime graphs are equivalent in the following sense: there is a bijection between both sets of factors such that the corresponding pairs of factor graphs are isomorphic.
\end{thm}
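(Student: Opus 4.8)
The plan is to reduce the statement to showing that every factorization of $G$ into prime graphs induces the \emph{same} partition of $E(G)$, namely the partition into equivalence classes of the relation $\factorrel$ used to build the canonical prime factorization. Given a factorization $G\cong \prod_{i=1}^k H_i$ (with each $H_i$ nontrivial, and $G$ connected so that all $H_i$ are connected), every edge of $G$ has a well-defined \emph{parent factor}: the unique $i$ such that the edge, under the isomorphism, changes only the $i$-th coordinate. This yields a partition $E(G)=E_1\sqcup\cdots\sqcup E_k$, and $H_i$ is recovered up to isomorphism as the quotient of $G$ obtained by deleting the edges outside $E_i$, taking one vertex per connected component, and joining two components by an edge of weight $w$ whenever some weight-$w$ edge of $E_i$ runs between them. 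The analogue of Lemma~\ref{claim:pseudofactor1} for factorization guarantees this quotient is well defined (all edges between two such components are copies of a single factor edge). Hence it suffices to prove that the partition $\{E_i\}$ is forced, and then that the associated quotient is determined up to isomorphism.

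First I would prove the key lemma: if $e\factorrel f$, then $e$ and $f$ have the same parent factor in \emph{every} factorization of $G$. Since $\factorrel$ is the transitive closure of $\thetarel\cup\taurel$ and ``same parent factor'' is an equivalence relation, it suffices to check this for $e\thetarel f$ and for $e\taurel f$. For the $\thetarel$-case one uses the defining property of Cartesian products that distances decompose as sums of coordinate distances: if $e=xy$ changes coordinate $i$ and $f=uv$ changes coordinate $j\neq i$, a direct computation gives $d_G(x,u)+d_G(y,v)=d_G(x,v)+d_G(y,u)$, so $e\notthetarel f$. For the $\taurel$-case, the extra restrictions built into our definition of $\taurel$ (relative to Feder's) are precisely what excludes the ``mixed'' squares of a product, i.e. $4$-cycles using two edges from each of two distinct factors; so $\taurel$ relates only edges lying in a common factor. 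It follows that in any factorization $G\cong\prod_i H_i$, each block $E_i$ is a union of $\factorrel$-classes, i.e. $\{E_i\}$ is a coarsening of the canonical partition $\{F_1,\dots,F_r\}$.

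Next I would show that when the $H_i$ are prime this coarsening is trivial. By the correctness of the canonical construction (established above, in Section~\ref{sec:factorization-final-proof}), $G\cong\prod_{j=1}^r P_j$ with each $P_j$ prime, $P_j$ being the quotient of $G$ associated to $F_j$. Suppose some $E_i=\bigcup_{j\in S}F_j$ with $|S|\geq 2$. Realize $H_i$ as a layer $L_i\subseteq V(G)$ (vertices agreeing with a fixed basepoint in every coordinate but the $i$-th). Because product distances decompose coordinatewise, $d_G$ restricted to $L_i$ equals $d_{H_i}$, and moreover the relations $\thetarel$ and $\taurel$ among the $E_i$-edges of $G$ pull back, under projection onto the $i$-th coordinate, to $\thetarel$ and $\taurel$ on $H_i$. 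Hence the $\factorrel$-classes of $H_i$ are in bijection with $\{F_j:j\in S\}$, so $H_i$ has a prime factorization into $|S|\geq 2$ nontrivial factors, contradicting primality. Therefore each $E_i$ is a single $F_j$, the partitions $\{E_i\}$ and $\{F_j\}$ coincide (so $k=r$), and the bijection $i\mapsto j(i)$ with $E_i=F_{j(i)}$ pairs $H_i$ with $P_{j(i)}$; since both equal the quotient of $G$ determined by that common edge class, $H_i\cong P_{j(i)}$, which proves the theorem.

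The step I expect to require the most care is the $\taurel$-case of the key lemma: verifying that the added restrictions on $\taurel$ really do prevent it from relating edges of distinct factors, and that this ``locality'' of $\taurel$ survives passage between $G$ and one of its layers. By contrast, the $\thetarel$-case and all of the layer/quotient bookkeeping follow routinely from the sum-decomposition of distances in a Cartesian product; this is why the argument here is substantially simpler than the uniqueness proof for pseudofactorization, where one must work with isometric subgraphs rather than honest products.
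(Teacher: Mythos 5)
Your proposal is correct and follows essentially the same route as the paper: both arguments rest on the key lemma that $\factorrel$-related edges must have parent edges in the same factor of any factorization (the paper's Lemma~\ref{claim:same-factor}), use primality to conclude that each factor's edge set is exactly one $\factorrel$-class, and then identify each factor with the quotient of $G$ determined by that class. Your treatment of the step where primality collapses each block to a single class (via layers of the product) is somewhat more explicit than the paper's, but it is the same underlying argument.
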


The proof of this theorem is given in Section \ref{sec:factorization-unique}.

\subsection{Overview}
Section \ref{sec:preliminaries} describes notation and definitions necessary for the remaining sections. In Section \ref{sec:pseudofactorization}, we show the correctness of the Graham and Winkler algorithm for pseudofactorization of unweighted graphs, as modified for minimal weighted graphs. We also show that each graph has a unique irreducible pseudofactorization up to graph isomorphism. As this algorithm runs in polynomial time, we will be able to use it to make conclusions about properties of graphs later on in the paper. In Section \ref{sec:factorization}, we show that weighted graphs may be factored using a modified version of Feder's algorithm \cite{feder} and prove uniqueness of prime factorization of weighted graphs. Section \ref{sec:runtime-general} introduces our method for analyzing runtimes of the algorithms presented here, and shows that the given algorithms run in $O(|E|^2)$ time plus the time to find all pairs shortest paths (APSP) distances. Then in Section \ref{sec:pseudofactorization-runtime2}, we use a modified version of Feder's algorithm \cite{feder} for pseudofactorization to show that the irreducible pseudofactorization of a minimal weighted graph is computable in as little as $O(|V||E|)$ time plus the time to compute APSP. 
Section \ref{sec:conclusion} concludes with our final thoughts on this work and proposes some remaining open questions.

\section{Preliminaries \label{sec:preliminaries}}

We consider finite, connected, undirected graphs, written $G = (V,E,w)$ with vertex set $V$, edge set $E$, and edge weight function $w : E \to \mathbb{R}_{>0}$. When necessary, we also use $V(G)$, $E(G)$, and $w_G$ to refer to the vertex set, edge set, and weight function of $G$, respectively. For unweighted $G$, we may let $w(e) = 1$ for all $e \in E$. Edges of $G$ are written $uv$ or $vu$ for vertices $u,v \in V$; since all edges are undirected, $uv \in E \iff vu \in E$. The shortest path metric for $G$, written $d_G : V \times V \to \mathbb{R}_{\ge 0}$, maps pairs of vertices to the minimum edge weight sum along a path between them.

\begin{definition}\label{def:minimal-graph}
A graph $G$ is a \emph{minimal graph} if  and only if every edge in $E(G)$ forms a shortest path between its endpoints. That is, $w(uv) = d(u,v)$ for all $uv \in E(G)$.
\end{definition}

Clearly, all unweighted graphs are minimal. We note that for any non-minimal graph, a minimal graph can be generated with the same path metric by simply removing edges not satisfying the minimality condition. Except in Section \ref{sec:factorization} where we assume general weighted graphs, we assume for the remainder of this manuscript that all graphs are minimal. 

A graph embedding $\pi : V(G) \to V(G^*)$ of a graph $G$ into a graph $G^*$ maps vertices of $G$ to those of $G^*$. If $\pi$ satisfies $d_G(u,v) = d_{G^*}(\pi(u),\pi(v))$ all $u,v \in V(G)$, then $\pi$ is an isometric embedding. When such a $\pi$ exists, we say that $G \hookrightarrow G^*$. As a convenience, we let $d_\pi(u,v) = d_{G^*}(\pi(u),\pi(v))$. 

The Cartesian graph product of one or more graphs $G_1, \dots, G_m$ is written $G = G_1 \times \cdots \times G_m$ or $G = \prod_{i=1}^m G_i$. For $G_i = (V_i, E_i, w_i)$, $G$ is defined as $V(G) = V_1 \times \cdots \times V_m$, $E(G)$ is the set of all $(u_1,\dots,u_m)(v_1,\dots,v_m)$ with exactly one $\ell$ such that $u_\ell v_\ell \in E_\ell$ and $u_i=v_i$ for all $i\ne \ell$, and $w_G(uv) = w_\ell(u_\ell v_\ell)$ for $\ell$ chosen as above (Figure \ref{fig:basic-cart-prod}). The Cartesian graph product has an important property regarding its distance metric. For any two vertices $u,v \in V(G)$, $u = (u_1,\dots, u_m)$ and $v = (v_1, \dots, v_m)$, we have that
\begin{equation}\label{eqn:cartesian-distance-property}
d_G(u,v) = \sum_{i=1}^m d_{G_i}(u_i, v_i)
\end{equation}
This property comes about because each edge along a path in $G$ moves in only one of the factors $G_i$, so any path may be decomposed into paths within each of the factors.


We are particularly concerned with cases where $G$ is isometrically embeddable into a Cartesian graph product. The following definition relates edges in $G$ to edges in a Cartesian product into which it has an isometric embedding, and for which edges are preserved between $G$ and the product. 
\begin{definition}\label{def:parent-edge}
Consider graphs $G$ and $G_1,\dots,G_n$ and isometric embedding $\pi : V(G) \to V(\prod_{i=1}^m G_i)$, such that $uv \in E(G) \Rightarrow \pi(u)\pi(v) \in E(\prod_{i=1}^m G_i)$ for all $u,v \in V(G)$. For any edge $uv\in E(G)$, $\pi(u) = (u_1, \dots, u_m)$ and $\pi(v) = (v_1, \dots, v_m)$, there exists exactly one $\ell$ such that $u_\ell\neq v_\ell$, and we must have $u_\ell v_\ell \in E(G_\ell)$. We call $u_\ell v_\ell$ the \textbf{parent edge} of $uv$ under $\pi$. If $G$ equals the product of the $G_i$, then we may implicitly assume $\pi$ to be the identity isomorphism.
\end{definition}

Note that from our definition of Cartesian product, edge $uv \in E(G)$ with parent edge $u_\ell v_\ell \in E(G_\ell)$ must have $w_G(uv) = w_{G_\ell} (u_\ell v_\ell)$.

The term factorization is used when $G$ is isomorphic to a Cartesian graph product (e.g., Figure \ref{fig:factor}b).
\begin{definition}\label{def:factorization}
Whenever $G$ is isomorphic to a Cartesian graph product of $G_i$, $1\le i\le m$, we say that the set $\{G_1, \dots, G_m\}$ forms a \textbf{factorization} of $G$ and refer to each $G_i$ as a \textbf{factor}.
\end{definition}




If all factorizations of $G$ include $G$ as a factor, we say that $G$ is \emph{prime}. A prime factorization is one with only prime factors. For convenience, we assume that a factorization does not include $K_1$, except in the case where $G = K_1$, since a factor of $K_1$ does not affect the final product.

Pseudofactorization generalizes factorization to situations where $G$ is not isomorphic to the graph product. Instead, we require that $G$ only be isomorphic to an isometric subgraph of the graph product.

\begin{definition}\label{def:pseudofactorization}
Consider graphs $G$ and $G^* = \prod_{i=1}^m G^*_i$. If an embedding $\pi : V(G) \to V(G^*)$, $\pi = (\pi_1, \dots, \pi_m)$ exists satisfying the following criteria:
\begin{enumerate}
\item $d_G(u,u') = d_{G^*}(\pi(u), \pi(u'))$,
\item $uv \in E(G)$ implies $\pi(u)\pi(v) \in E(G^*)$ and $w_G(uv) = w_{G^*}(\pi(u)\pi(v))$,
\item every vertex in $G^*_i$ is in the image of $\pi_i$, $1 \le i \le m$, and
\item every edge in $G^*_i$ is the parent of an edge in $G$
\end{enumerate}
then we say the set $\{ G^*_1, \dots, G^*_m \}$ is a \textbf{pseudofactorization} of $G$ and refer to each $G^*_i$ as a \textbf{pseudofactor}.
\end{definition}

If all pseudofactorizations of $G$ include $G$ as a pseudofactor, we say that $G$ is \emph{irreducible}. An irreducible pseudofactorization is one with only irreducible pseudofactors. As with factorization, we assume that a pseudofactorization does not include $K_1$, except in the case where $G = K_1$.

Clearly, any pseudofactorization is also a factorization; however, the converse is not true (see Figure \ref{fig:factor}c). Informally, the definition of pseudofactorization requires both that $G$ be isometrically embeddable into ${G^*}$ \emph{and} that edges be preserved within this embedding. This second condition is a natural one for manipulating graph structures, but may be less applicable to other situations (e.g., finite metric spaces). The final two conditions ensure that there are no unnecessary vertices and edges in the pseudofactors (or any graph would be a pseudofactor of the graph in question, as $G$ is an isometric subgraph of $G\times H$ for any graph $H$).

\section{Pseudofactorization of weighted graphs}\label{sec:pseudofactorization}
In this section, we will discuss a method for pseudofactoring weighted graphs in polynomial time. To begin, we discuss the current state of the field in terms of pseudofactoring unweighted graphs, and we then show that one of the techniques used for this process can also be used to pseudofactor weighted graphs.

Graham and Winkler \cite{graham1985isometric} showed that all unweighted graphs have a unique pseudofactorization. They additionally gave an $O(|E|^2)$ time algorithm to find this pseudofactorization.  To do so, they defined the $\theta$ relation on the edges of a graph as follows. 

For a graph $G=(V,E)$, two edges in the graph, $uv,ab\in E$ are related by $\theta$ if and only if: 
\begin{align}
    [d_G(u,a)-d_G(u,b)]-[d_G(v,a)-d_G(v,b)] \neq 0. \label{eqn:theta-diff}
\end{align}
We note that this relation is symmetric and reflexive. We also let the equivalence relation $\hat\theta$ be the transitive closure of $\theta$.
We call the left side of equation \ref{eqn:theta-diff} the theta-difference for edges $uv$ and $ab$.

Algorithm \ref{alg:general-alg} is a generalized version of the algorithm presented by Graham and Winkler. Its inputs are a graph and an equivalence relation on the edges of the graph, and it outputs a set of graphs. Graham and Winkler showed that when the input is $(G,\hat \theta)$ for an unweighted graph $G$, the output is an irreducible pseudofactorization of $G$. Figure \ref{fig:alg-rep} shows an example of an application of this algorithm to a weighted graph when the input relation is $\hat\theta$.

\begin{figure}
    \centering
    \includegraphics[width=\linewidth]{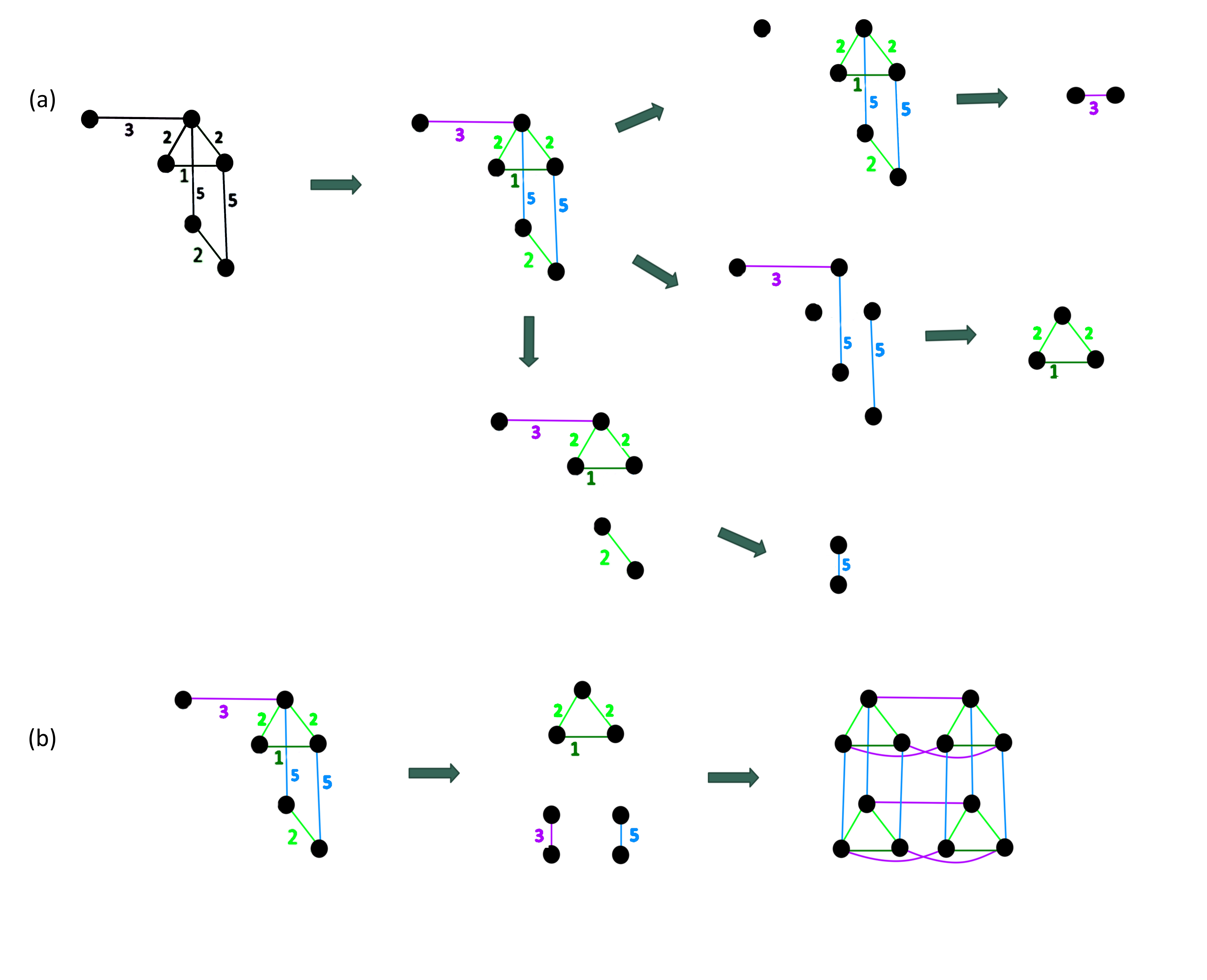}
    \caption[ Illustration of Algorithm \ref{alg:general-alg} on a weighted graph of six nodes.]
    {
    Illustration of Algorithm \ref{alg:general-alg} on a weighted graph of six nodes. \\
    (a) The algorithm first finds the equivalence classes of $\hat\theta$, shown here with distinct colors. The algorithm then considers the remaining graph when each equivalence class is removed and uses that to construct the final output graphs. In this case, there are three $\hat\theta$ equivalence classes and thus three pseudofactors.\\
    (b) This subfigure shows in the first panel the same graph as in (a), in the second panel it shows its irreducible pseudofactorization, and in the third panel it shows Cartesian product of those pseudofactors, of which the original graph is an isometric subgraph.
    \label{fig:alg-rep}
    }
    
\end{figure}

Informally, the algorithm finds the equivalence classes of the graph edges under the given relation. For each equivalence class $E_k$, it then looks at the subgraph of $G$ with all edges in $E_k$ removed. The graph may now be disconnected, and the algorithm uses this disconnected graph to construct one of the output graphs. In this paper, we will expand on this to show that the same algorithm works for weighted graphs. If $G$ contains multiple edges of different weights between a pair of connected components in any of these new graphs, the algorithm defined here will reject, but for the purposes of this paper, we will only use equivalence relations for which this never happens.

\begin{algorithm}
\caption{Algorithm for breaking up a graph over a relation}
\label{alg:general-alg}
\textbf{Input}: A weighted graph $G$ and an equivalence relation $R$ on the edges of $G$. 

\textbf{Output}: A set \textbf{G}=$\{G_1^*,G_2^*,...,G_m^*\}$

\begin{algorithmic}
    \State  Set \textbf{G} $\leftarrow \emptyset$
    \State Find the set of equivalence classes of $R$, $\{E_1,E_2,...,E_m\}$
    \State Set \textbf{E} $\leftarrow\{E_1,E_2,...,E_m\}$
    \For{$E_k\in$\textbf{E}}
        \State Let $w_{G_k'}$ be $w_G$ restricted to the edges in $E(G) \setminus E_k$
        \State Set $G_k' \leftarrow (V(G),E(G)\setminus E_k,w_{G_k'})$
        \State Set \textbf{C} $\leftarrow \{C_1, C_2, \dots, C_\ell\}$ (the set of connected components of $G_k'$)
        \State Create a new graph $G_k^*$
        \State Set $V(G_k^*) \leftarrow \{a\suchthat C_a \in  \text{\textbf{C}}\}$
        \State Set $E(G_k^*) \leftarrow \{ab\suchthat \text{ there is an edge in } E_k \text{ between } C_a \text{ and } C_b \}$
            
        \For{$ab \in E(G_k^*)$} 
            \If{All edges between $C_a$ and $C_b$ have the same weight, $w_{ab}$,} 
            \State Set $w_{G_k^*}(ab)\leftarrow w_{ab}$
            \Else
            \State Reject
            \EndIf
        \EndFor

        \State Set \textbf{G}$\rightarrow$ \textbf{G}$\cup \{G_k^*\}$
    \EndFor
    \State Return \textbf{G}.
\end{algorithmic}
\end{algorithm}

Graham and Winkler showed that an unweighted pseudofactorization is unique. Additionally, $\hat\theta$ is not the only relation that can be used with this algorithm to produce this pseudofactorization. In particular,
Feder \cite{feder} expanded on this work by defining a new relation, $\hat\theta_T$ on the edges of a graph $G$ given a spanning tree $T$ of $G$. In particular, he defined $\theta_T$ such that $uv \ \theta_T \ ab$ if and only if $\edgediff{G}{u}{v}{a}{b}\neq 0$ and at least one of $uv$ and $ab$ is in $T$. Letting $\hat\theta_T$ be the transitive closure of $\theta_T$, he showed that Algorithm \ref{alg:general-alg} on input $(G,\hat\theta_T)$ for an unweighted graph $G$ also produces the irreducible pseudofactorization of $G$, noting that for this relation Algorithm \ref{alg:general-alg} runs in $O(|V||E|)$ time due to the shorter time needed to find the equivalence classes. In a later section, we will discuss how Feder's algorithm may be modified to improve the runtime of pseudofactorization, but in this section we will focus on the application of Graham and Winkler's algorithm to weighted graphs.




In this section, the overall goal is to prove that when the input to Algorithm \ref{alg:general-alg} is a minimal weighted  graph $G$ and the relation $\hat\theta$, the output is an irreducible pseudofactorization of $G$. 

\subsection{Testing irreducibility}
We first show that if all edges in a graph are in the same equivalence class of $\hat\theta$, then the graph is irreducible. In the following section, we will show that Algorithm \ref{alg:general-alg} with this relation as an input produces a pseudofactorization of the graph. Because this algorithm produces more than one graph (neither of which is $K_1$) if there is more than one equivalence class on the edges, we can use the results in this section and the next to conclude that checking the number of equivalence classes of $\hat\theta$ is a definitive check of irreducibility.

\begin{lem} \label{claim:same-pseudofactor}
For $uv,xy\in E(G)$, if $uv\thetahrel xy$, then for any pseudofactorization $\{G_1,\ldots,G_m\}$ of $G$ and isometric embedding $\pi:V(G)\rightarrow V(\Pi_{i=1}^mG_i)$, $uv$ and $xy$ must have parent edges under $\pi$ in the same pseudofactor. 
\end{lem}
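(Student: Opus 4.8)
The plan is to show that the map sending each edge of $G$ to the index of the pseudofactor containing its parent edge (under the fixed embedding $\pi$) is constant on $\thetahrel$-equivalence classes. Formally, fix the pseudofactorization $\{G_1,\ldots,G_m\}$ and embedding $\pi : V(G)\to V(\prod_{i=1}^m G_i)$, and define a relation $S$ on $E(G)$ by: $e \mathrel{S} f$ iff the parent edges of $e$ and $f$ under $\pi$ lie in the same pseudofactor $G_i$. The relation $S$ is clearly an equivalence relation, so since $\thetahrel$ is the smallest equivalence relation containing $\thetarel$, it suffices to prove $\thetarel \subseteq S$: whenever $uv \thetarel xy$, the parent edges of $uv$ and $xy$ are in the same pseudofactor. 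I will prove the contrapositive of this single-step statement.

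So suppose $uv$ and $xy$ have parent edges in \emph{different} pseudofactors. Write $\pi(u) = (u_1,\dots,u_m)$ and similarly for $v,x,y$. By Definition~\ref{def:parent-edge} (which applies because condition~2 of Definition~\ref{def:pseudofactorization} guarantees $\pi(u)\pi(v), \pi(x)\pi(y) \in E(\prod_i G_i)$), there is a unique index $i$ with $u_i \ne v_i$, and $u_k = v_k$ for all $k \ne i$; likewise a unique index $j$ with $x_j \ne y_j$, and $x_k = y_k$ for all $k \ne j$; and our assumption is $i \ne j$. Using that $\pi$ is an isometry together with the coordinatewise distance decomposition \eqref{eqn:cartesian-distance-property}, the theta-difference of $uv$ and $xy$ splits as
\[
\edgediff{G}{u}{v}{x}{y} = \sum_{k=1}^m \Big( [d_{G_k}(u_k,x_k) - d_{G_k}(u_k,y_k)] - [d_{G_k}(v_k,x_k) - d_{G_k}(v_k,y_k)] \Big).
\]
For each $k \ne i$ we have $u_k = v_k$, so that summand vanishes; and for $k = i$, since $i \ne j$ we have $x_i = y_i$, so that summand vanishes as well. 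Hence the theta-difference is $0$, i.e.\ $uv \notthetarel xy$, which is exactly the contrapositive we wanted.

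Finally, to pass to the transitive closure: if $uv \thetahrel xy$, choose a chain $uv = e_0 \thetarel e_1 \thetarel \cdots \thetarel e_t = xy$ of edges of $G$; each $e_s$ has a well-defined parent edge under $\pi$, consecutive edges lie in a common pseudofactor by the step case, and transitivity of $S$ gives that $uv$ and $xy$ do too. I do not expect a genuine obstacle here; the only care needed is the bookkeeping of which coordinate each of the two parent edges changes, and the observation that the edge-preservation clause in the definition of pseudofactorization is precisely what licenses applying the distance decomposition term by term to the parent edges.
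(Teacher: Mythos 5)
Your proposal is correct and follows essentially the same route as the paper's proof: the single-step case is handled by decomposing the theta-difference coordinatewise via the isometry and the Cartesian distance formula, observing that every summand vanishes when the two parent edges change different coordinates, and the extension to $\thetahrel$ is the same transitive-closure bookkeeping (your framing via the equivalence relation $S$ is just a cleaner packaging of the paper's chain argument).
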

\begin{proof}
Throughout this proof, we let $d:=d_G$ and $d_i:=d_{G_i}$. Say there exists a pseudofactorization $\{G_1,G_2,...,G_m\}$ of $G$  with such an embedding $\pi$ such that $\pi(a)=(\pi_1(a),\pi_2(a),...,\pi_m(a))$ for $a\in V(G)$. For simplicity, we let $\pi_i(a)=a_i$. 

We first prove the lemma for $uv \thetarel xy$. Assume for contradiction that $xy$ and $uv$ have parent edges under $\pi$ in different pseudofactors. Since $xy$ and $uv$ are edges, by Definition \ref{def:pseudofactorization}, $\pi(x)\pi(y)$ and $\pi(u)\pi(v)$ are also edges, and by the definition of the Cartesian product, we get that there is exactly one $l$ such that $u_l\neq v_l$ and one $j$ such that $x_j\neq y_j$. Since the two edges have parent edges in different pseudofactor graphs, we also get that $l\neq j$ (see Definition \ref{def:parent-edge}).
    
Now we consider $[d(x,u)-d(x,v)]-[d(y,u)-d(y,v)]$. Since 
$\pi$ is an isometric embedding with $\pi_i(x)=x_i$, we can rewrite this using the distance metric for $\Pi_iG_i^*$, which means it can be written as:
\begin{align*}
    \sum_{i=1}^m [d_{i}(x_i,u_i)-d_{i}(x_i,v_i)]-[d_{i}(y_i,u_i)-d_{i}(y_i,v_i)].
\end{align*}
Term $i$ in this sum is 0 if $u_i=v_i$ or if $x_i=y_i$. However, since $l\neq j$, this means at least one of these equalities is true for every term in the sum, so $xy$ and $uv$ are not related by $\theta$. From this, we get that if $xy\thetarel uv$, then $l=j$ and $xy$ and $uv$ must have parent edges under $\pi$ in the same pseudofactor graph.

To prove the lemma when $uv \thetahrel xy$, observe that $uv \thetahrel xy$ implies that there is a sequence of edges, $e_1 = uv, e_2, \dots, e_l = xy$ for which $e_k \thetarel e_{k+1}$ for all $1 \le k < l$. By the above reasoning, parent edges under $\pi$ for adjacent pairs of edges belong to the same pseudofactor, so the same is true for $uv$ and $xy$.
\end{proof}

We know that Algorithm \ref{alg:general-alg} outputs one graph for each equivalence class of $\hat\theta$. Thus, from the preceding lemma, if Algorithm \ref{alg:general-alg} outputs a single graph then the input graph must be irreducible. In the following section, we will show that the output of this algorithm is necessarily a pseudofactorization of the input graph, which together with this proof implies that a graph is irreducible if and only if there is one equivalence class of $\hat\theta$ on its edges.

\subsection{An algorithm for pseudofactorization}
In this section, we will show that Algorithm \ref{alg:general-alg} with $\hat\theta$ as the input relation can be used to pseudofactor a minimal weighted graph. Many of the lemmas used in this section have parallels to those that we will use to prove factorization. 
First, we show in Lemma \ref{claim:pseudofactor1} that this algorithm is well-defined for the inputs we are considering (a minimal graph and the relation $\hat \theta$). Throughout this section, we assume the input graph is $G$ and notation is used as it is in Algorithm \ref{alg:general-alg}.

\begin{lem}\label{claim:pseudofactor1}
If $C_a,C_b$ are connected components in $G_k'$ and there exists $x\in C_a,y\in C_b$ such that $xy$ is an edge with weight $w_{ab}$, then for each $u\in C_a$ there exists \textit{at most} one $v\in C_b$ such that $uv$ forms an edge, and if it exists the edge has weight $w_{ab}$.
\end{lem}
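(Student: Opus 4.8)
The plan is to recast the relation $\theta$ in terms of a scalar ``potential'' attached to each edge, and then to exploit the hypothesis that $E_k$ is an \emph{entire} equivalence class of $\hat\theta$. Concretely, I would attach to each edge $e = pq \in E(G)$ the function $\psi_e : V(G) \to \mathbb{R}$ given by $\psi_e(z) = d_G(z,p) - d_G(z,q)$. Rearranging the defining inequality (equation~\ref{eqn:theta-diff}) shows that for edges $e = pq$ and $e' = rs$ one has that $e$ and $e'$ are related by $\theta$ exactly when $\psi_e(r) \neq \psi_e(s)$; equivalently, $e$ and $e'$ are \emph{un}related by $\theta$ exactly when $\psi_e$ takes the same value on the two endpoints of $e'$. (Reversing which endpoint of $e$ is called $p$ negates $\psi_e$, which affects neither statement, so the choice of orientation is immaterial.) Minimality of $G$ supplies the boundary values $\psi_e(p) = d_G(p,p) - d_G(p,q) = -w(e)$ and $\psi_e(q) = w(e)$.

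Next I would note that any edge of $G$ joining two distinct components of $G_k'$ must belong to $E_k$, since an edge outside $E_k$ survives in $G_k'$ and so has both endpoints in a single component; in particular $C_a \neq C_b$, the hypothesized edge $xy$ lies in $E_k$, and any edge $uv \in E(G)$ with $u \in C_a$ and $v \in C_b$ also lies in $E_k$. Now fix an arbitrary $e \in E_k$. Because $E_k$ is an equivalence class of the transitive closure $\hat\theta$ and $\theta \subseteq \hat\theta$, the edge $e$ is unrelated by $\theta$ to every $e' \in E(G)\setminus E_k = E(G_k')$. By the reformulation above, $\psi_e$ then takes equal values on the two endpoints of every edge of $G_k'$, so $\psi_e$ is constant on each connected component of $G_k'$. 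Together with the boundary values, this says: for $e = uv \in E_k$ with $u \in C_a$ and $v \in C_b$, we have $\psi_e \equiv -w(e)$ throughout $C_a$ and $\psi_e \equiv w(e)$ throughout $C_b$.

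The two conclusions then follow by bookkeeping with distances. For the weight claim, given the reference edge $xy \in E_k$ ($x \in C_a$, $y \in C_b$) and any edge $uv \in E_k$ ($u \in C_a$, $v \in C_b$), I would evaluate the constancy of $\psi_{uv}$ at $x$ and at $y$ and the constancy of $\psi_{xy}$ at $u$ and at $v$; the four resulting identities among $d_G(u,x)$, $d_G(v,y)$, $d_G(x,v)$, $d_G(u,y)$, $w(uv)$, and $w(xy)$ combine, after eliminating the unknown distances, to give $w(uv) = w(xy) = w_{ab}$. For the uniqueness (``at most one'') part, suppose $uv$ and $uv'$ are both edges of $G$ with $v, v' \in C_b$; each lies in $E_k$, so by the weight claim $w(uv) = w(uv') = w_{ab}$, while evaluating the constancy of $\psi_{uv}$ at $v'$ gives $d_G(v',u) - d_G(v',v) = w(uv)$ and minimality gives $d_G(v',u) = w(uv') = w(uv)$; hence $d_G(v,v') = 0$ and $v = v'$.

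The only genuinely non-mechanical step is the middle one: observing that ``$e$ is $\theta$-unrelated to every edge outside $E_k$'' is exactly the statement that $\psi_e$ is locally constant off $E_k$ and therefore constant on each connected component of $G_k'$. Once that is combined with the minimality identities $\psi_e(p) = -w(e)$ and $\psi_e(q) = w(e)$ at the endpoints of $e = pq$, the remainder is routine manipulation of linear relations among distances, and I anticipate no further difficulties.
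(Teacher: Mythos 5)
Your proof is correct and follows essentially the same route as the paper: your observation that $\psi_e$ is locally constant off $E_k$ and hence constant on each connected component of $G_k'$ is exactly the paper's telescoping of theta-differences along paths within a component, and the four identities you evaluate (constancy of $\psi_{uv}$ at $x,y$ and of $\psi_{xy}$ at $u,v$) are precisely the four equations the paper derives before eliminating the distances. The only cosmetic difference is in the ``at most one'' part, which the paper gets directly from $d(u,v)+d(v,v')-d(u,v')=0$ and symmetry without invoking minimality or the weight claim, whereas you route it through both; either way is valid.
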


\begin{proof}
Throughout this proof, we take $d(\cdot,\cdot)$ to be the distance function on $G$. First, we show that if $uv$ is an edge between $C_a,C_b$, then there cannot exist a distinct $v'\in C_b$ such that $uv'$ is an edge. We will show this by contradiction. Assume that such a $v'$ exists. Since $v$ and $v'$ are in the same connected component of $G_k'$, there is a path $Q=(v=q_0,q_1\ldots,q_n=v')$ (represented in Figure \ref{fig:well-defined}(a)) consisting entirely of edges not in $E_k$ (and thus not related to $uv$ by $\theta$), and we consider the sum
\begin{align*}
    \sum_{i=1}^n[d(u,q_{i-1})-d(u,q_i)]-[d(v,q_{i-1})-d(v,q_i)]=0.
\end{align*}
By telescoping, this implies that:
\begin{align*}
    [d(u,v)-d(u,v')]-[d(v,v)-d(v,v')] &= d(u,v)+d(v,v')-d(u,v') = 0.
\end{align*}
Since for $v\neq v'$,  $d(v,v')>0$, this gives us $d(u,v)<d(u,v')$. However, a symmetric analysis says $d(u,v')<d(u,v)$, so $u$ cannot have edges to two distinct $v,v' \in C_b$. This shows the first part of the lemma.

Now, we show that if $uv$ is an edge from $C_a$ to $C_b$, then it has weight $w_{ab}$, which will rely on the assumption that the graph in question is minimal. First, we define two paths. The first is $Q_a=(u=q_0^a,q_1^a,\ldots,q_n^a=x)$, which is a path of edges entirely in $C_a$. We will also have a path $Q_b=(v=q_0^b,q_1^b,\ldots,q_t^b=y)$, which will consist entirely of edges in $C_b$ (represented in Figure \ref{fig:well-defined}(b)). We note that no pair of edges on either of these paths can be related to $uv$ or to $xy$ by $\theta$, since the edges are not in $E_k$.
Using this fact, we get the following four sums.
\begin{align*}
    \sum_{l=1}^t[d(u,q_{l-1}^b)-d(u,q_{l}^b)]-[d(v,q_{l-1}^b)-d(v,q_{l}^b)] &= 0 \\
    &= [d(u,v)-d(u,y)]-[d(v,v)-d(v,y)] \\
    &= d(u,v)-d(u,y)+d(v,y) \\
    &= w_G(uv)+d(v,y)-d(u,y).\\
    \sum_{l=1}^n[d(u,q_{l}^a)-d(u,q_{l-1}^a)]-[d(v,q_{l}^a)-d(v,q_{l-1}^a)] &= 0 \\
    &= [d(u,x)-d(u,u)]-[d(v,x)-d(v,u)] \\
    &= d(u,x)+d(v,u)-d(v,x)\\
    &= w_G(uv)+d(u,x)-d(v,x). \\
    \sum_{l=1}^n[d(x,q_{l-1}^a)-d(x,q_{l}^a)]-[d(y,q_{l-1}^a),d(y,q_{l}^a)] &= 0 \\
    &= [d(x,u)-d(x,x)]-[d(y,u)-d(y,x)] \\
    &= d(u,x)-d(y,u)+d(y,x)\\
    &= w_G(uv)+d(y,x)-d(y,u). \\
    \sum_{l=1}^t[d(x,q_{l}^b)-d(x,q_{l-1}^b)]-[d(y,q_{l}^b)-d(y,q_{l-1}^b)] &= 0 \\
    &= [d(x,y)-d(x,v)]-[d(y,y)-d(y,v)] \\
    &= d(x,y)-d(x,v)+d(v,y) \\
    &= w_G(uv)+d(v,y)-d(x,v).
\end{align*}

From these equations, we get:
\begin{align*}
    w_G(uv) &= d(u,y)-d(v,y) \\
    w_G(uv) &= d(v,x)-d(u,x) \\
    w_G(xy) &= d(u,y)-d(u,x) \\
    w_G(xy) &= d(v,x)-d(v,y).
\end{align*}
Subtracting the first and last of these equations gives us $w_G(uv)-w(xy)=d(u,y)-d(v,x)$ and subtracting the second and third equations gives $w_G(uv)-w_G(xy)=-[d(u,y)-d(v,x)]$. This gives us that $w_G(uv)-w_G(xy)=-[w_G(uv)-w_G(xy)]$, so the difference between these two weights is 0 and thus the weights are equal. This implies the lemma.
\end{proof}

\begin{figure}
    \centering
    \includegraphics[width=.7\linewidth]{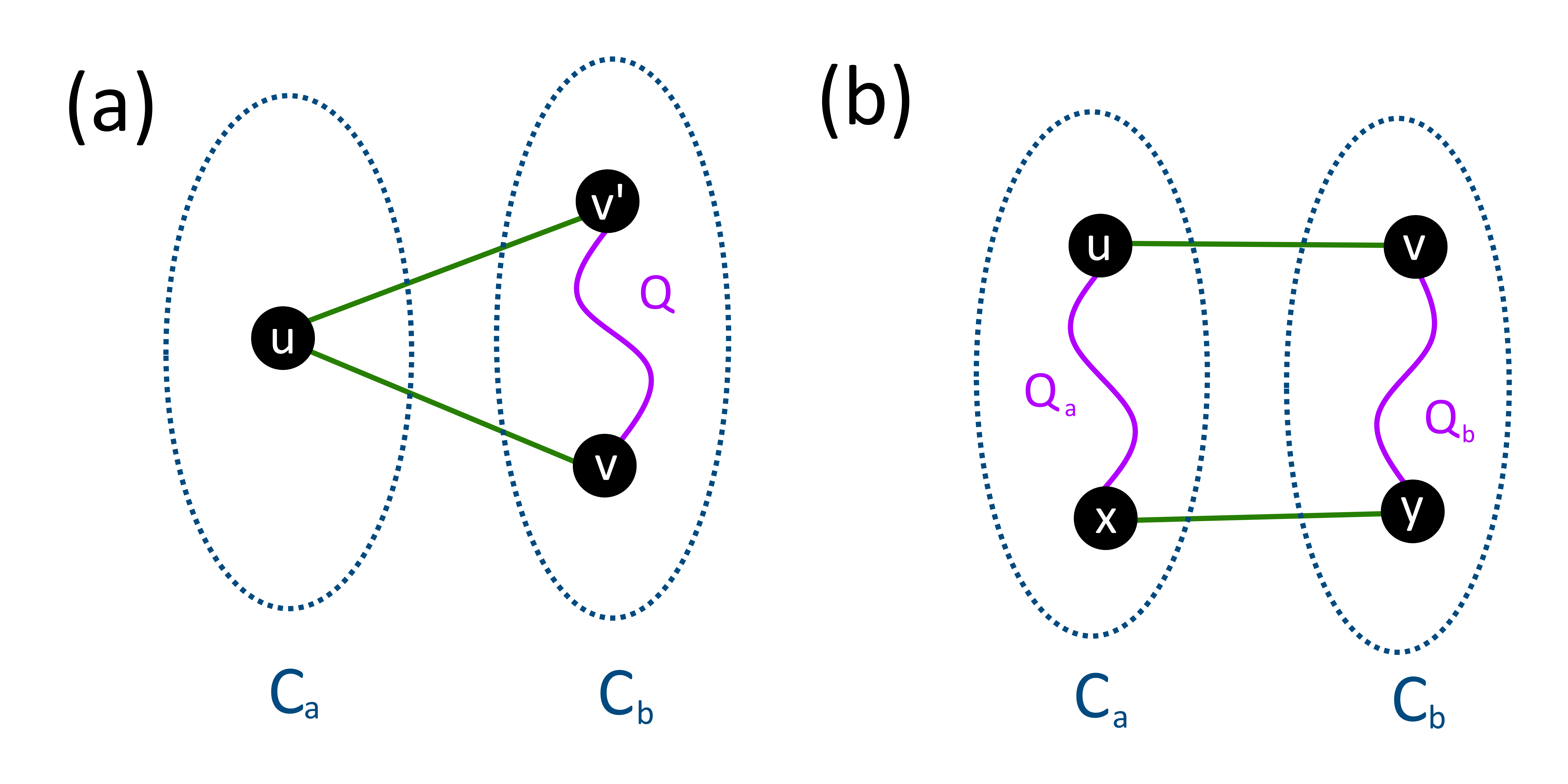}
    \caption[An illustration of the paths in Lemma \ref{claim:pseudofactor1}]{Paths described in the proof of Lemma \ref{claim:pseudofactor1}. (a) A graph in which $u\in C_a$ has an edge to distinct $v,v'\in C_b$, which the proof of Lemma \ref{claim:pseudofactor1} shows is impossible. (b) Definitions of $Q_a$ and $Q_b$ from the proof when considering $u\in C_a,v\in C_b$.}
    \label{fig:well-defined}
\end{figure}



With the next lemma, we introduce two general facts about the relationship between paths and equivalence classes of $\hat\theta$ that will be used in later claims.

\begin{lem}\label{claim:pseudofactor2}
The following hold:

\begin{enumerate}
    \item If $uv$ forms an edge and is in equivalence class $E_k$, then for any path $Q=(u=q_0,q_1,\ldots,q_t=v)$ between the two nodes, there is at least one edge from $E_k$.
    \item Let $P=(u=p_0,p_1,\ldots,p_n=v)$ be a shortest path from $u$ to $v$. If $P$ contains an edge in the equivalence class $E_k$, then for any path $Q=(u=q_0,q_1,\ldots,q_t=v)$ there is at least one edge from $E_k$.
\end{enumerate}
\end{lem}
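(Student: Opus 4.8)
Both parts are statements about how edges of a fixed $\hat\theta$-class $E_k$ must appear on paths, and the natural tool is the telescoping identity for the theta-difference along a path, which has already appeared repeatedly in the excerpt (in Lemmas \ref{claim:same-pseudofactor} and \ref{claim:pseudofactor1}). The plan is to prove both parts by contradiction, assuming a path $Q$ from $u$ to $v$ uses \emph{no} edge of $E_k$, and then deriving a telescoped equation that is incompatible with the hypothesis.

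For part (1): suppose $uv$ is an edge in $E_k$ but the path $Q=(u=q_0,\dots,q_t=v)$ contains no edge of $E_k$. Then no edge $q_{i-1}q_i$ is $\theta$-related to $uv$ (since $\theta$-related edges lie in the same $\hat\theta$-class), so each term $[d(u,q_{i-1})-d(u,q_i)]-[d(v,q_{i-1})-d(v,q_i)]$ vanishes. Summing over $i=1,\dots,t$ and telescoping yields $[d(u,u)-d(u,v)]-[d(v,u)-d(v,v)] = -2d(u,v) = 0$, forcing $d(u,v)=0$, i.e. $u=v$, contradicting that $uv$ is an edge (edges have positive weight, and in a minimal graph $d(u,v)=w(uv)>0$). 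First I would write this out; it is essentially identical in structure to the opening telescoping argument in the proof of Lemma \ref{claim:pseudofactor1}.

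For part (2): let $P=(u=p_0,\dots,p_n=v)$ be a shortest path containing an edge $p_{j-1}p_j \in E_k$, and suppose for contradiction that some path $Q=(u=q_0,\dots,q_t=v)$ from $u$ to $v$ uses no edge of $E_k$. The idea is to split $P$ at the $E_k$-edge: write $P$ as $P_1$ from $u$ to $p_{j-1}$, then the edge $p_{j-1}p_j$, then $P_2$ from $p_j$ to $v$. Now consider the theta-difference of the edge $p_{j-1}p_j$ against each of the three pieces $P_1$, $P_2$, and $Q$. Since $Q$, $P_1$, and $P_2$ consist of edges not in $E_k$ (for $P_1,P_2$ this uses that $P$ is a \emph{shortest} path, so by part (1) applied to each of its edges, no two edges of a shortest path are $\hat\theta$-related — equivalently, each edge of $P$ lies in a distinct class, hence only $p_{j-1}p_j$ is in $E_k$), telescoping the theta-difference of $p_{j-1}p_j=:ab$ along the closed-walk-like decomposition $P_1 \cdot (ab) \cdot P_2$ versus $Q$ gives a contradiction: the theta-difference of $ab$ along $P_1$ is $0$, along $P_2$ is $0$, along $Q$ is $0$, but along the single edge $ab$ itself it equals $2d(a,b)=2w(ab)>0$, and these must be consistent because $P_1\cdot(ab)\cdot P_2$ and $Q$ are both $u$–$v$ walks, so their theta-differences against $ab$ agree. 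Concretely, $[d(a,u)-d(a,v)]-[d(b,u)-d(b,v)]$ computed via $Q$ is $0$, while computed via $P$ it is $0 + \big([d(a,p_{j-1})-d(a,p_j)]-[d(b,p_{j-1})-d(b,p_j)]\big) + 0 = [d(a,a)-d(a,b)]-[d(b,a)-d(b,b)] = -2d(a,b) \ne 0$, the contradiction.

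The main obstacle is the bookkeeping in part (2): I need the fact that a shortest path has at most one edge in any given $\hat\theta$-class, which should follow from part (1) (if two edges of $P$ were $\hat\theta$-related, removing the appropriate sub-path and re-routing would contradict either part (1) or minimality of $P$) — I would state and justify this sub-claim carefully first, since the rest of part (2) is then a routine telescoping split. I expect the telescoping computations themselves to be mechanical once the decomposition of $P$ at its unique $E_k$-edge is set up correctly, and the positivity $w(ab)>0$ (guaranteed since $w:E\to\mathbb{R}_{>0}$) is what delivers the final contradiction in each part.
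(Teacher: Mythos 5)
Part (1) of your proposal is correct and is the paper's argument verbatim (phrased as a contradiction rather than directly: the telescoped sum equals $-2d(u,v)\ne 0$, so some term, hence some edge of $Q$, is $\theta$-related to $uv$ and therefore lies in $E_k$). The structure of part (2) is also essentially the paper's: both proofs compute the theta-sum of the fixed edge $ab=p_{\ell-1}p_\ell$ against $Q$, telescope it to $[d(a,u)-d(a,v)]-[d(b,u)-d(b,v)]$, and show this is nonzero.

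The genuine gap is the sub-claim you flag as your main obstacle: ``no two edges of a shortest path are $\hat\theta$-related, equivalently each edge of $P$ lies in a distinct class.'' This is false. In the unweighted $5$-cycle, all five edges lie in a single $\hat\theta$-class (each edge is $\theta$-related to the two edges opposite it, and transitivity merges everything), yet the shortest path between two vertices at distance $2$ has two edges, both in that one class. It also does not follow from part (1) in the way you suggest. Fortunately you do not need it: what you actually use is only that no edge of $P_1$ or $P_2$ is \emph{directly} $\theta$-related to $ab$, and that weaker statement is true for any shortest path, since for $i<j$ the subpath property gives $d(p_{i-1},p_j)=d(p_{i-1},p_{j-1})+w(p_{j-1}p_j)$ and $d(p_i,p_j)=d(p_i,p_{j-1})+w(p_{j-1}p_j)$, making the theta-difference of $p_{i-1}p_i$ and $p_{j-1}p_j$ vanish. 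Better still, the entire detour through $P_1$ and $P_2$ is unnecessary: the paper evaluates the telescoped quantity $[d(a,u)-d(a,v)]-[d(b,u)-d(b,v)]$ directly from the identities $d(u,v)=d(u,a)+d(a,v)$ and $d(u,v)=d(u,b)+d(b,v)$, obtaining $2[d(b,v)-d(a,v)]=-2w(ab)\ne 0$, with no claim about which classes the other edges of $P$ belong to. (Your ``$+2d(a,b)$'' versus ``$-2d(a,b)$'' is a harmless sign slip; only nonvanishing matters.) With either repair the proof closes.
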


\begin{proof}
First, we note that the second bullet implies the first in the case of unweighted and minimal graphs, since in those cases any edge $uv$ is a shortest path between $u$ and $v$. However, this is not the case for general weighted graphs and we will use this lemma when we discuss factorization of weighted graphs as well. Thus, we prove this lemma in two parts.

\begin{enumerate}
    \item First, we consider the following sum.
    \begin{align*}
        \sum_{i=1}^t[d(u,q_{i-1})-d(u,q_i)]-[d(v,q_{i-1})-d(v,q_i)] &= [d(u,u)-d(u,v)]-[d(v,u)-d(v,v)] \\ 
        &=-2d(u,v) \\
        &\neq 0
    \end{align*}
    The last inequality comes from the fact that $u\neq v$ and the assumption that the graph has only positive weight edges. However, we note that this sum is only non-zero if at least one term in the sum is non-zero. If term $i$ is non-zero, then $uv$ and $q_{i-1}q_i$ are related by $\theta$ and $q_{i-1}q_i$ is thus in $E_k$. Thus, we prove the first part of the lemma.
    
    \item Fix an edge $p_{\ell-1}p_\ell$ in $P$ and consider the following sum.
    \begin{align*}
        \sum_{i=1}^t[d(p_{\ell-1},q_{i-1})-d(p_{\ell-1},q_i)]-[d(p_\ell,q_{i-1})-d(p_\ell,q_i)] \\ = [d(p_{\ell-1},u)-d(p_{\ell-1},v)]-[d(p_\ell,u)-d(p_\ell,v)]
    \end{align*}
    We know that because $P$ is a shortest path, $d(u,v)=d(u,p_{\ell-1})+d(v,p_{\ell-1})$ and $d(u,v)=d(u,p_\ell)+d(v,p_\ell)$. Substitution yields:
    \begin{align*}
        d(u,v) - 2d(p_{\ell-1},v) - d(u,v) + 2d(p_\ell,v) = 2[d(p_\ell,v)-d(p_{\ell-1},v)].
    \end{align*}
    This value is only 0 if $v$ is equidistant from $p_\ell$ and $p_{\ell-1}$, but since they form an edge on the shortest path between $u$ and $v$, this is not possible and thus at least one edge on $Q$ is related to $p_{\ell-1}p_\ell$ by $\theta$ and thus at least one edge in $Q$ is in its equivalence class.
\end{enumerate}
\end{proof}

Now, we move on to the goal of showing that $G$ is isomorphic to an isometric subgraph of $\Pi_iG_i^*$. To do so, we define a mapping $\pi$ from $G$ to $\Pi_iG_i^*$ with the goal to show that $\pi$ is an injection and that shortest paths in $G$ correspond to shortest paths under $\pi$. This will help us show our target property about isometric subgraphs.

First we will define $\pi$ more formally. We will let $\pi:V(G)\rightarrow V(\Pi_iG_i^*)$ and will define $\pi$ as: $\pi(u)=(\pi_1(u),\pi_2(u),\ldots,\pi_m(u))$. We then must define each $\pi_i$, so we let $\pi_i(u)$ be the node in $G_i^*$ that corresponds to the connected component of $G_i'$ that $u$ is a member of. We notice that Algorithm \ref{alg:general-alg} only includes a node in $G_i^*$ if there is a node in the corresponding connected component of $G_i'$ and only includes an edge in $G_i^*$ if there is an edge in $G$ between nodes in the corresponding connected components of $G_i'$. Because of this, we see that criteria 3 and 4 of Definition \ref{def:pseudofactorization} are met by this mapping.
We now show that $\pi$ is an injection. 

\begin{claim}\label{claim:psuedofactor3}
As defined in this section, $\pi$ is an injection.
\end{claim}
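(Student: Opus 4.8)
The plan is to show $\pi$ is injective by contrapositive: if $\pi(u) = \pi(v)$ for $u \neq v$, then $u$ and $v$ lie in the same connected component of $G_i'$ for every $i$, meaning no shortest path between $u$ and $v$ uses an edge from any equivalence class $E_i$. But every edge of $G$ lies in some $E_i$, so this would force a shortest $u$--$v$ path to contain no edges at all, contradicting $u \neq v$. The key tool is Lemma \ref{claim:pseudofactor2}(2), whose contrapositive says: if some path $Q$ from $u$ to $v$ avoids all edges of $E_k$, then no shortest path from $u$ to $v$ contains an edge of $E_k$ either.

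Concretely, I would argue as follows. Suppose $u \neq v$ but $\pi(u) = \pi(v)$, so $\pi_i(u) = \pi_i(v)$ for all $i$, i.e.\ $u$ and $v$ are in the same connected component of $G_i'$ for each $i$. Fix any shortest path $P = (u = p_0, p_1, \ldots, p_n = v)$ in $G$; since $u \neq v$, it has at least one edge, say $p_{\ell-1}p_\ell$, which belongs to some equivalence class $E_k$ of $\hat\theta$. Since $u$ and $v$ are in the same connected component of $G_k'$ (the graph with $E_k$ removed), there is a $u$--$v$ path $Q$ using only edges of $E(G) \setminus E_k$. Applying Lemma \ref{claim:pseudofactor2}(2) with this $P$ and $Q$: since $P$ contains an edge of $E_k$, every $u$--$v$ path must contain an edge of $E_k$ — contradicting the existence of $Q$. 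Hence no such $u, v$ exist and $\pi$ is injective.

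I expect the argument to be short and the main (minor) obstacle is simply invoking Lemma \ref{claim:pseudofactor2} in the right direction — noting that its second part is stated for shortest paths $P$ and arbitrary paths $Q$, which is exactly the asymmetry we need here. One should also double-check the trivial edge case: if $u \neq v$ then a shortest path genuinely has a nonzero number of edges (true since all weights are positive), so the class $E_k$ is well-defined. No subtlety beyond that is anticipated.
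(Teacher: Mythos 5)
Your argument is correct and is essentially the paper's own proof: both take a shortest $u$--$v$ path, note it contains an edge of some class $E_k$, and invoke Lemma \ref{claim:pseudofactor2}(2) to conclude every $u$--$v$ path meets $E_k$, so $u$ and $v$ lie in different components of $G_k'$ and $\pi_k(u)\neq\pi_k(v)$. You merely phrase it as a contradiction rather than directly; no substantive difference.
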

\begin{proof}
We show that $\pi(u)\neq \pi(v)$ for all $u,v\in V(G)$, $u\ne v$. To do so, let $P$ be a shortest path between $u$ and $v$. We know that if one of the edges is in $E_k$, then $u$ and $v$ are in different connected components of $G_k'$ because Lemma \ref{claim:pseudofactor2} says that all paths between the two nodes have an edge in $E_k$. Since $u\neq v$ there is at least one edge in $P$, and thus there is at least one index $k$ for which $\pi_k(u)\neq \pi_k(v)$. 
\end{proof}

In many of our proofs, manipulation of the sum of theta-differences over a path (which we will call a theta-sum) is essential. We will now show an important property of that sum that we will use in our final proof. Informally, it says that for the theta-sum along a path, the contribution from the edges in each equivalence class does not depend on the path taken. 

We define the following notation for paths in a graph. For path $P=(u=p_0,p_1,\ldots,p_n=v)$ in $G = (V,E,w)$, let $P_k$ be the sequence of edges in $P$ that are also in $E_k$, where $E_k$ is one of the equivalence classes of $E$ under $\thetahrel$. Define $T_k^P$ as $T_k^P:=\sum_{p_ip_{i+1}\in P_k}[d(u,p_i)-d(u,p_{i+1})]-[d(v,p_i)-d(v,p_{i+1})]$. 

\begin{lem}\label{claim:theta-sum}
Let $P=(u=p_0,p_1,\ldots,p_n=v)$ and $Q=(u=q_0,q_1,\ldots,q_t=v)$ be two paths in $G$ from $u$ to $v$. Then $T_k^P=T_k^Q$.
\end{lem}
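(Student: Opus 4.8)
The plan is to reduce the claim to a statement about a single cycle. Observe first that $T_k^P$ is defined as a sum over $P_k$, the edges of $P$ lying in the equivalence class $E_k$, of the theta-difference $[d(u,p_i)-d(u,p_{i+1})]-[d(v,p_i)-d(v,p_{i+1})]$. The key point, which I would isolate as a preliminary observation, is that for an edge $p_ip_{i+1}\notin E_k$ this very quantity is $0$: if $p_ip_{i+1}$ is \emph{not} related to the edges of $E_k$ that is not quite enough on its own, so instead I would argue as in Lemma \ref{claim:pseudofactor2} that the telescoping sum of theta-differences along \emph{any} path from $x$ to $y$ equals $[d(x,x)-d(x,y)]-[d(y,x)-d(y,y)]=-2d(x,y)$ when taken with respect to the fixed pair $(x,y)$; but here the reference pair is $(u,v)$, not the edge endpoints, so a cleaner route is needed.

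The cleaner route: define, for an arbitrary path $R=(u=r_0,\dots,r_s=v)$, the full theta-sum $S^R_k:=\sum_{r_ir_{i+1}\in R}\chi_k(r_ir_{i+1})\bigl([d(u,r_i)-d(u,r_{i+1})]-[d(v,r_i)-d(v,r_{i+1})]\bigr)$, where $\chi_k$ is the indicator that the edge lies in $E_k$; this is exactly $T_k^R$. Now concatenate $P$ with the reverse of $Q$ to form a closed walk $W$ from $u$ to $u$. Along $W$ the \emph{unrestricted} telescoping sum $\sum_{\text{edges of }W}\bigl([d(u,\cdot)-d(u,\cdot)]-[d(v,\cdot)-d(v,\cdot)]\bigr)$ vanishes, since it telescopes to $[d(u,u)-d(u,u)]-[d(v,u)-d(v,u)]=0$. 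I would like to say the same holds after restricting to $E_k$; this is the heart of the matter and the main obstacle.

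To handle it, I would invoke the structural fact behind $\hat\theta$: in an unweighted (or minimal weighted) graph, the edge set $E_k$ forms a \emph{cut} — removing $E_k$ from $G$ disconnects it into the components $C_1,\dots,C_\ell$ of $G_k'$, and (by Lemma \ref{claim:pseudofactor1} and the reasoning of Lemma \ref{claim:pseudofactor2}) every path between two fixed components crosses $E_k$ a number of times whose \emph{signed} count (relative to a fixed orientation of the cut, say ``from the $u$-side to the $v$-side'') is path-independent. More precisely, I would show that the contribution of a single $E_k$-edge $ab$ to the telescoping sum depends only on which side of the cut contains $a$ versus $b$: writing $\pi_k(u)$ and $\pi_k(v)$ for the components of $u$ and $v$, one checks using Lemma \ref{claim:pseudofactor1} that crossing $E_k$ in the ``forward'' direction always contributes the same value $c_k$ and in the ``backward'' direction contributes $-c_k$. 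Since $P$ and $Q$ both run from $u$ to $v$, they have the same net number of forward-minus-backward crossings of the cut $E_k$, hence $T^P_k=T^Q_k=(\text{net crossings})\cdot c_k$. Concretely, I expect the proof to go: (i) fix $k$ and the cut $E_k$; (ii) use Lemma \ref{claim:pseudofactor2}(1) and the single-edge argument of Lemma \ref{claim:pseudofactor1} to show each forward crossing of $E_k$ contributes a fixed amount; (iii) observe $P$ and $Q$ induce the same sequence of component-changes mod the non-$E_k$ moves, so their net $E_k$-crossing counts agree; (iv) conclude $T^P_k=T^Q_k$. The delicate step is (ii) — pinning down that the per-crossing contribution is a cut-invariant and not genuinely edge-dependent — and for that I would lean on exactly the telescoping-over-a-path-avoiding-$E_k$ trick already used repeatedly in Lemmas \ref{claim:pseudofactor1} and \ref{claim:pseudofactor2}.
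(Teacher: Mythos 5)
Your reduction breaks down at step (ii)--(iii), and the gap is not just a missing verification: the statement you want there is false. You assume that $E_k$ behaves like a two-sided cut, that each ``forward'' crossing contributes one fixed value $c_k$, and hence that $T_k^P=(\text{net crossings})\cdot c_k$. But $G_k'$ may have many components (so ``net crossing number'' is not well defined), and the per-edge contribution $[d(u,p_i)-d(u,p_{i+1})]-[d(v,p_i)-d(v,p_{i+1})]$ genuinely depends on which pair of components the edge joins. A concrete counterexample to your formula is the unweighted $5$-cycle $v_1\cdots v_5$, whose edges form a single $\hat\theta$-class $E_k$ (so $G_k'$ has only singleton components and there is no two-sided cut at all): with $u=v_1$, $v=v_3$, the path $P=(v_1,v_2,v_3)$ crosses $E_k$ twice with contributions $-2,-2$, while $Q=(v_1,v_5,v_4,v_3)$ crosses it three times with contributions $-1,-2,-1$; both sums equal $-4$, but no constant $c_k$ satisfies $2c_k=3c_k=-4$. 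The repaired version of your idea --- that the contribution of an $E_k$-edge is a potential difference $\Psi(a)-\Psi(b)$ depending only on the ordered pair of components, with $\Psi$ well defined on $V(G_k^*)$ --- may well be true, but it does not follow from Lemma \ref{claim:pseudofactor1} or Lemma \ref{claim:pseudofactor2} (note that $\phi(x)=d(u,x)-d(v,x)$ is \emph{not} constant on components of $G_k'$, e.g.\ in a $4$-cycle), and establishing it is essentially as hard as the lemma itself. Your own earlier worry is also well founded: the summand for a non-$E_k$ edge is the theta-difference of that edge with the \emph{pair} $(u,v)$, which is not an edge, so it need not vanish and the unrestricted telescoping identity does not restrict to $E_k$ for free.

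The paper's proof avoids all of this structure with a purely algebraic double-summation. Starting from $T_k^P$, each summand $[d(u,p_i)-d(v,p_i)]-[d(u,p_{i+1})-d(v,p_{i+1})]$ is expanded as a telescoping sum over \emph{all} edges of $Q$; each cross term is exactly the theta-difference of the two edges $p_ip_{i+1}$ and $q_jq_{j+1}$, so it vanishes unless $q_jq_{j+1}\thetarel p_ip_{i+1}$, which (since $E_k$ is a $\hat\theta$-class) forces $q_jq_{j+1}\in E_k$. Restricting the inner sum to $Q_k$, swapping the order of summation, re-extending the (now inner) sum from $P_k$ to all of $P$ by the same vanishing argument, and telescoping along $P$ yields $T_k^Q$. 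I would recommend abandoning the cut-invariant picture and adopting this symmetric expand--restrict--swap--telescope computation.
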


\begin{proof}
We consider the following equations.
\begin{align*}
    T_k^P &= \sum_{p_ip_{i+1}\in P_k}[d(u,p_i)-d(v,p_i)]-[d(u,p_{i+1})-d(v,p_{i+1})] \\
    &=\sum_{p_ip_{i+1}\in P_k}\sum_{q_jq_{j+1}\in Q}[d(q_j,p_i)-d(q_{j+1},p_i)]-[d(q_{j},p_{i+1})-d(q_{j+1},p_{i+1})] \\
    &= \sum_{p_ip_{i+1}\in P_k}\sum_{q_jq_{j+1}\in Q_k}[d(q_j,p_i)-d(q_{j+1},p_i)]-[d(q_{j},p_{i+1})-d(q_{j+1},p_{i+1})] \\
    &= \sum_{q_jq_{j+1}\in Q_k}\sum_{p_ip_{i+1}\in P_k}[d(q_j,p_i)-d(q_{j+1},p_i)]-[d(q_{j},p_{i+1})-d(q_{j+1},p_{i+1})] \\
    &=\sum_{q_jq_{j+1}\in Q_k}\sum_{p_ip_{i+1}\in P_k}[d(q_j,p_i)-d(q_{j},p_{i+1})]-[d(q_{j+1},p_{i})-d(q_{j+1},p_{i+1})] \\
    &= \sum_{q_jq_{j+1}\in Q_k}\sum_{p_ip_{i+1}\in P}[d(q_j,p_i)-d(q_{j},p_{i+1})]-[d(q_{j+1},p_{i})-d(q_{j+1},p_{i+1})] \\
    &= \sum_{q_jq_{j+1}\in Q_k}[d(q_j,u)-d(q_{j},v)]-[d(q_{j+1},u)-d(q_{j+1},v)] \\
    &= T_k^Q
\end{align*}
The second equality comes from telescoping the inner sum, the third comes from the fact that only edges related by $\theta$ can contribute to the sum, so the only edges that might contribute are those in $E_k$. The fourth equality comes from switching the order of the sums, the fifth comes from switching the order of the terms in the summand, and the sixth again from the fact that only edges in $E_k$ contribute. The seventh equality is by telescoping, and the last equality is by definition of $T_k^Q$.
\end{proof}

From here, we are able to prove our overall goal using Theorem \ref{claim:pseudofactor5}

\begin{thm}\label{claim:pseudofactor5}
For any two nodes $u,v\in V(G)$, there is a shortest path between them in $G$ that under $\pi$ is a shortest path in $\Pi_iG_i^*$. Because $G$ is minimal, this implies that $\pi$ maps $G$ to a subgraph of $\Pi_iG_i^*$ in which all edges in $E(G)$ are preserved under the map and the distance metric is preserved in the subgraph, meaning that criteria 1 and 2 of Definition \ref{def:pseudofactorization} are met by this mapping. Thus, for an input $(G,\hat\theta)$, Algorithm \ref{alg:general-alg} produces a pseudofactorization of $G$.
\end{thm}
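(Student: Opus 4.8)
The plan is to fix $u,v\in V(G)$, choose a shortest path $P=(u=p_0,p_1,\dots,p_n=v)$ in $G$, and show that its image $\pi(P)$ is a shortest path in $\Pi_iG_i^*$. First I would confirm that $\pi(P)$ is a walk at all: each edge $p_ip_{i+1}$ lies in a unique class $E_k$, and applying Lemma~\ref{claim:pseudofactor2}(1) to this edge shows every $p_i$--$p_{i+1}$ path meets $E_k$, so $p_i$ and $p_{i+1}$ fall into different components of $G_k'$, while for $j\ne k$ the edge itself survives in $G_j'$ and keeps them in one component; hence $\pi(p_i)\pi(p_{i+1})$ is an edge of $\Pi_iG_i^*$ that moves only in coordinate $k$, with weight $w_G(p_ip_{i+1})$ by Lemma~\ref{claim:pseudofactor1} (which also guarantees the ``Reject'' branch of Algorithm~\ref{alg:general-alg} is never reached, so the pseudofactors are well defined). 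Summing edge weights, $\pi(P)$ is a walk of length $\sum_{e\in E(P)}w_G(e)=d_G(u,v)$, giving $d_{\Pi_iG_i^*}(\pi(u),\pi(v))\le d_G(u,v)$; moreover, by \eqref{eqn:cartesian-distance-property} the coordinate-$k$ portion of $\pi(P)$ is a walk from $\pi_k(u)$ to $\pi_k(v)$ of length $\sum_{e\in P_k}w_G(e)$.

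The real work is the matching lower bound, which I would obtain from the theta-sum of Lemma~\ref{claim:theta-sum}. Since prefixes and suffixes of the geodesic $P$ are themselves geodesics, each edge of $P$ satisfies $d(u,p_i)-d(u,p_{i+1})=-w_G(p_ip_{i+1})$ and $d(v,p_i)-d(v,p_{i+1})=w_G(p_ip_{i+1})$, so $T_k^P=-2\sum_{e\in P_k}w_G(e)$. For an \emph{arbitrary} $u$--$v$ path $Q$, the triangle inequality applied to each edge gives $d(u,q_i)-d(u,q_{i+1})\ge -w_G(q_iq_{i+1})$ and $d(v,q_i)-d(v,q_{i+1})\le w_G(q_iq_{i+1})$, whence $T_k^Q\ge -2\sum_{e\in Q_k}w_G(e)$. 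Lemma~\ref{claim:theta-sum} says $T_k^P=T_k^Q$, so $\sum_{e\in Q_k}w_G(e)\ge\sum_{e\in P_k}w_G(e)$ for every $u$--$v$ path $Q$.

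Next I would lift a geodesic of $G_k^*$ into $G$. A shortest path $\pi_k(u)=a_0,\dots,a_r=\pi_k(v)$ in $G_k^*$ uses distinct vertices, hence distinct (vertex-disjoint) components $C_{a_0},\dots,C_{a_r}$ of $G_k'$; each step $a_ja_{j+1}$ is witnessed by an edge $e_{j+1}\in E_k$ between $C_{a_j}$ and $C_{a_{j+1}}$ of weight $w_{G_k^*}(a_ja_{j+1})$ (Lemma~\ref{claim:pseudofactor1}), and splicing the $e_{j+1}$ together with simple paths internal to each $C_{a_j}$ (which use no edge of $E_k$) produces a simple $u$--$v$ path $Q$ with $Q_k=\{e_1,\dots,e_r\}$ and $\sum_{e\in Q_k}w_G(e)=d_{G_k^*}(\pi_k(u),\pi_k(v))$. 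Combining with the previous paragraph, $d_{G_k^*}(\pi_k(u),\pi_k(v))\ge\sum_{e\in P_k}w_G(e)$; summing over $k$ and using \eqref{eqn:cartesian-distance-property} gives $d_{\Pi_iG_i^*}(\pi(u),\pi(v))\ge\sum_{e\in E(P)}w_G(e)=d_G(u,v)$. With the first paragraph this forces equality, so $\pi(P)$ is a geodesic. This is criterion~1 of Definition~\ref{def:pseudofactorization}; criterion~2 follows by running the same argument on the one-edge shortest path $(u,v)$ for any $uv\in E(G)$ (a single edge is a shortest path by minimality), which shows $\pi(u)\pi(v)$ is an edge of $\Pi_iG_i^*$ with $w_{\Pi_iG_i^*}(\pi(u)\pi(v))=w_{G_k^*}(\pi_k(u)\pi_k(v))=w_G(uv)$; criteria~3 and~4 were already checked when $\pi$ was defined (and criterion~1 with $u\ne u'$ re-proves the injectivity of Claim~\ref{claim:psuedofactor3}). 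I expect the main obstacle to be exactly this lower bound --- marrying the path-independence of $T_k$ from Lemma~\ref{claim:theta-sum} with a lifting of $G_k^*$-geodesics whose $E_k$-edges are precisely the lifted edges, so that the two bounds meet.
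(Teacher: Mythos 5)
Your proof is correct, and it uses the same technical engine as the paper --- Lemma~\ref{claim:pseudofactor1} for edge preservation, the lifting of a $G_k^*$-geodesic to a path $Q$ in $G$ whose $E_k$-edges realize exactly the $G_k^*$-distance, and the path-independence of $T_k$ from Lemma~\ref{claim:theta-sum} --- but the logical scaffolding is genuinely different. The paper proves the lower bound $d^*(\pi(u),\pi(v))\ge d_G(u,v)$ by contradiction: it takes a minimal counterexample pair, first establishes that no intermediate vertex of a product-geodesic between $\pi(u)$ and $\pi(v)$ can lie in the image of $\pi$, and then exhibits such a vertex via the theta-sum comparison. You dispense with the minimal counterexample and the ``no intermediate image vertex'' step entirely: for an arbitrary pair you bound $T_k^P=-2\sum_{e\in P_k}w(e)$ for a geodesic against $T_k^Q\ge-2\sum_{e\in Q_k}w(e)$ for the lifted path, conclude $d_{G_k^*}(\pi_k(u),\pi_k(v))\ge\sum_{e\in P_k}w(e)$ coordinate by coordinate, and sum over $k$ using \eqref{eqn:cartesian-distance-property} to meet the upper bound from the first paragraph. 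This direct two-sided argument is cleaner and avoids the paper's auxiliary claim; it also honors the sign convention in the stated definition of $T_k^P$ (the paper silently swaps $u$ and $v$ there). The only points worth being careful about, which you handle, are that the lifted $Q$ genuinely starts at $u$ and ends at $v$ (since $u\in C_{\pi_k(u)}$ and $v\in C_{\pi_k(v)}$) and that its only $E_k$-edges are the witnessing edges between consecutive components.
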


\begin{proof}
First, we show that if $P$ is a shortest path in $G$ then the sequence of nodes in $P$ under $\pi$ forms a path in $\Pi_iG_i^*$. We do so by showing that if there is an edge $ab\in E(G)$, then there is an edge $\pi(a)\pi(b)\in E(\Pi_iG_i^*)$ and the edge has the same weight. 

Assume $ab\in E_j$. First, let $C_{\pi_j(a)}$ and $C_{\pi_j(b)}$ be the connected components of $G_j'$ corresponding to nodes $\pi_j(a)$ and $\pi_j(b)$ in $G_j^*$. We know that $E(G)$ has an edge between a node in $C_{\pi_j(a)}$ and a node in $C_{\pi_j(b)}$ of weight $w_G(ab)$ (by Lemma \ref{claim:pseudofactor1}). By the same lemma, this means that all edges between nodes in $C_{\pi_j(a)}$ and nodes in $C_{\pi_j(b)}$ have the same weight and, by the definition of $G_j^*$, there is an edge between $\pi_j(a)$ and $\pi_j(b)$ of that weight. This holds for all edges, which means that any path in $G$ still exists in the image of $\pi$ in $\Pi_iG_i^*$, so for any pair $u,v\in V(G)$, $d(u,v)\geq d^*(u,v)$, where $d^*(\cdot,\cdot)$ is the distance metric for $\Pi_iG_i^*$.

Now, assume $d(u,v) > d^*(u,v)$ for at least one pair of vertices $u,v \in V(G)$. Let $u,v\in V(G)$ be a pair of nodes with the smallest value of $d^*(\pi(u),\pi(v))$ such that $d^*(\pi(u),\pi(v))<d(u,v)$. 

First, we show that any node on a shortest path from $\pi(u)$ to $\pi(v)$ in the product graph cannot be in the image of $\pi$. Consider a node $\pi(u')$ that is in the image of $\pi$ and on such a shortest path. By the definition of shortest path, we get that $d^*(\pi(u),\pi(u'))+d^*(\pi(u'),\pi(v))=d^*(\pi(u),\pi(v))$. Additionally, we have $d^*(\pi(u),\pi(u'))=d(u,u')$ because $d^*(\pi(u),\pi(u'))<d^*(\pi(u),\pi(v))$ and we get $d^*(\pi(u'),\pi(v))=d(u',v)$ for parallel reasons. Then we get a contradiction: 
\begin{align*}
    d(u,v) &\leq d(u,u')+d(u',v) \\
    &= d^*(\pi(u),\pi(u')) +d^*(\pi(u'),\pi(v)) \\
    &= d^*(\pi(u),\pi(v)) \\
    &< d(u,v).
\end{align*}
The first inequality is the triangle inequality and the last line is by the assumption. Since we have reached a contradiction, we know such a $u'$ cannot exist and thus no node in the image of $\pi$ can be on the shortest path between $\pi(u)$ and $\pi(v)$ in $G^*$.

We have shown that no node in the image of $\pi$ can be on the shortest path between $\pi(u)$ and $\pi(v)$, but to get a contradiction, we will now show that such a node \textit{must exist}. 


Let $uu'$ be the first edge on a shortest path from $u$ to $v$ and let the edge be in $E_k$. If $P$ is the path in question, we show that $P_k$ is a shortest path in $G_k^*$. As discussed in Section \ref{sec:preliminaries}, this implies that $\pi(u)\pi(u')$ is an edge on a shortest path from $\pi(u)$ to $\pi(v)$ in $\Pi_iG_i^*$. Consider a path $Q^*=(\pi_k(u)=q_0,q_1,\ldots,q_n=\pi_k(v))$ between $\pi_k(u)$ and $\pi_k(v)$. Since the nodes in $G_k^*$ are defined to be the labels of connected components in $G_k'$, each of which consists of one or more nodes in $G$, for each $j$ there exists $v_j$ such that $\pi_k(v_j)=q_j$ (i.e. there is a node in $C_{q_j}$ in $G_k'$). We also have that because $q_jq_{j+1}$ forms an edge, there is a pair of nodes in $G$ that has an edge between $C_{q_j}$ and $C_{q_{j+1}}$. Let $(x_j^j,x_{j+1}^j)$ be this pair. 
We have as a superscript the index of the edge we're considering and as a subscript the index of the connected component in $G_k'$.

We will construct the path $Q$ from $u$ to $v$ in $G$ shown in Figure \ref{fig:construct-path}(a). Let $Q_j$ be a path from $x_j^{j}$ to $x_{j+1}^j$ that does not include any edges in $E_k$ and let $Q_0$ be a path from $u$ to $x_0^1$ and $Q_n$ be a path from $x_n^{n}$ to $v$ without any edges from $E_k$. (Since each of the pairs is in the same connected component of $G_k'$, these paths must exist.) Construct the path $Q$ from $u$ to $v$ in $G$ such that $Q:=Q_0+Q_1+Q_2+\cdots + Q_n=(u=f_0,f_1,\ldots,f_m=v)$.  A visualization of this construction is given in Figure \ref{fig:construct-path}. This is a path from $u$ to $v$ and we consider the sum $T_k^Q=\sum_{f_jf_{j+1}\in Q_k}[d(v,f_j)-d(v,f_{j+1})]-[d(u,f_j)-d(u,f_{j+1})]$.

\begin{figure}
    \centering
    \includegraphics[width=.8\linewidth]{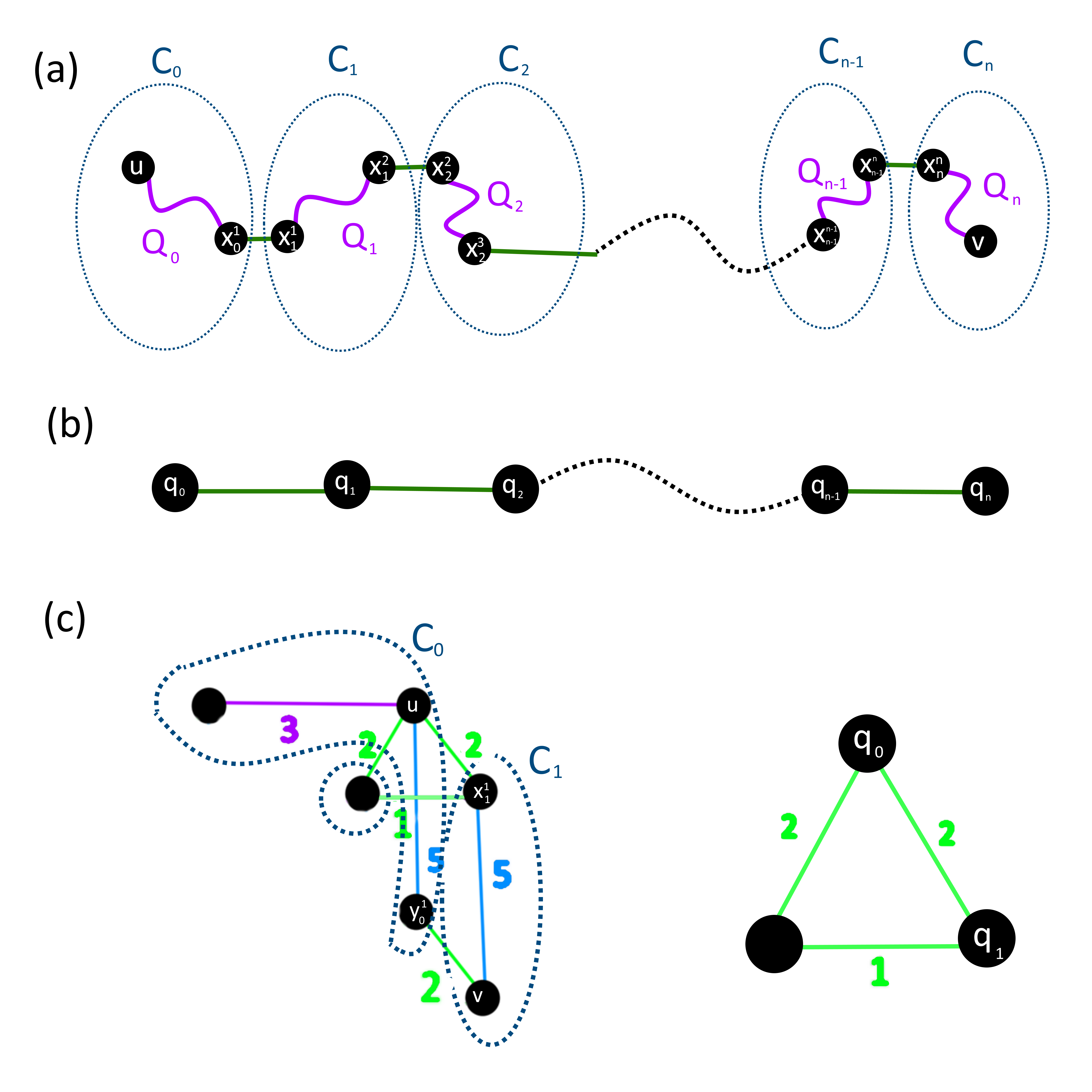}
    \caption[A visualization of the path constructed in Theorem \ref{claim:pseudofactor5}]{A visualization of the path constructed in the proof of Theorem \ref{claim:pseudofactor5}.
    
    (a) Each $C_i$ is a connected component in $G_k'$, where $E_k$ is the equivalence class we consider in the theorem. Starting at $u$, we construct a path $Q_0$ through edges in $C_0$ to $x_0^1$, which is any node in $C_0$ with an edge to  $C_1$. We similarly construct $Q_1$ from $x_1^1$ to $x_1^2$ and so on. The dotted line here represents an arbitrary length path that is not shown.  
    
    (b) Part of $G_k^*$ for the graph in (a). We have $\pi_k(u)=q_0$ and $\pi_k(v)=q_n$, and the path $Q^*=(q_0,q_1,\ldots,q_n)$ is an arbitrary path from $\pi_k(u)$ to $\pi_k(v)$ through $G_k^*$. The theorem shows that because of how $G_k^*$ is constructed, there must exist a path in $G$ like the one shown in (a), where the only edges in $E_k$ are the $x_{j-1}^jx_{j}^j$ and the weight of each $x_{j-1}^jx_{j}^j$ in $G$ is the same weight as $q_{j-1}q_j$ in $G_k^*$.
    
    (c) An example of this process applied to the graph from Figure \ref{fig:alg-rep}. In particular, we consider the pseudofactor shown on the right. When the graph on the left is isometrically embedded into the product of its pseudofactors (which is shown in Figure \ref{fig:alg-rep}) with isometric embedding $\pi$, $\pi_k(u)=q_0$ and $\pi_k(v)=q_1$. We let $Q^*=(q_0,q_1)$. When constructing $Q$, we can select $Q_0$ to be $(u)$ (so $x_0^1=u$) and $Q_1$ to be $(x_1^1,v)$. Alternatively, we could select $Q_0$ to be $(u,y_0^1)$ and $Q_1$ to be $(v)$ (so $v=y_1^1$). Thus, we see that while the path we construct in this theorem exists, it is not necessarily unique.
    }
    \label{fig:construct-path}
\end{figure}







We know from Lemma \ref{claim:theta-sum} that $T_k^P=T_k^Q$. We use this to get
\begin{align*}
    T_k^P &= T_k^Q \\
    &= \sum_{q_iq_{i+1}\in Q_k}[d(v,q_i)-d(v,q_{i+1})]-[d(u,q_i)-d(u,q_{i+1})] \\
    &\leq \sum_{q_iq_{i+1}\in Q_k}2w(q_iq_{i+1}).
\end{align*}
Since $P$ is a shortest path, every edge must contribute 2 times its weight to the overall theta-sum (or else we would not be able to get to the total) so we get $T_k^P=\sum_{p_ip_{i+1}\in P_k}2w(p_ip_{i+1})\leq \sum_{q_iq_{i+1}\in Q_k}2w(q_iq_{i+1})$. Thus, the path $P_k$ from $u_k$ to $v_k$ is a shortest path in $G_k^*$. This means $u_ku_k'$ is a first edge on a shortest path from $\pi_k(u)$ to $\pi_k(v)$ in $G_k^*$, which means $\pi(u)\pi(u')$ is a first edge on a shortest path from $\pi(u)$ to $\pi(v)$ in $\Pi_iG_i^*$ - a contradiction. 

Thus, we can conclude $d(u,v)=d^*(\pi(u),\pi(v))$ and we already showed all edges are preserved, so $G$ is an isometric subgraph of $\Pi_iG_i^*$.
\end{proof}

Using the previous theorem, we conclude that Algorithm \ref{alg:general-alg} produces a set of graphs for which the input graph is isometrically embeddable into their Cartesian product. Additionally, we can show that this is actually an irreducible pseudofactorization by showing that all of the produced pseudofactors are irreducible.

\begin{lem}\label{claim:output-irreducible}
If a minimal graph $G$ is the input graph to Algorithm \ref{alg:general-alg} and $\hat \theta$ is the input relation, the output graphs are all irreducible.
\end{lem}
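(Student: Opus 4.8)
The plan is to argue by contradiction: if some output graph $G_k^*$ admitted a pseudofactorization $\{H_1,\dots,H_r\}$ with $r\ge 2$, I would ``lift'' it to a pseudofactorization of $G$ in which the single $\hat\theta$ equivalence class $E_k$ gets split among at least two factors, contradicting Lemma~\ref{claim:same-pseudofactor}. A few preliminary observations make this clean. Since the equivalence classes of $\hat\theta$ partition $E(G)$, the class $E_k$ is nonempty; and if $xy\in E_k$ then by part~1 of Lemma~\ref{claim:pseudofactor2} every path from $x$ to $y$ uses an edge of $E_k$, so $x$ and $y$ lie in different connected components $C_a\ne C_b$ of $G_k'=(V(G),E(G)\setminus E_k)$, whence $ab\in E(G_k^*)$. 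Thus $G_k^*$ has an edge, so $G_k^*\ne K_1$, and therefore none of the $H_j$ is $K_1$, so (being connected) each $H_j$ has at least one edge. I would also record the structural fact that, for the embedding $\pi$ produced by Theorem~\ref{claim:pseudofactor5}, an edge $uv\in E(G)$ lies in $E_k$ \emph{iff} its parent edge under $\pi$ sits in the $k$-th coordinate (i.e.\ $\pi_k(u)\pi_k(v)\in E(G_k^*)$ and $\pi_i(u)=\pi_i(v)$ for $i\ne k$): for $i\ne k$ the edge $uv$ survives in $G_i'$ so its endpoints share a component there, while for the class containing $uv$ they are separated as above; and for such an edge $w_{G_k^*}(\pi_k(u)\pi_k(v))=w_G(uv)$ by Lemma~\ref{claim:pseudofactor1}.

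Now let $\sigma=(\sigma_1,\dots,\sigma_r):V(G_k^*)\to V(\prod_j H_j)$ realize the hypothetical pseudofactorization of $G_k^*$, and define $\rho:V(G)\to V\bigl(\prod_{i\ne k}G_i^*\times\prod_{j=1}^r H_j\bigr)$ by
\[
\rho(u)\;=\;\bigl(\pi_1(u),\dots,\pi_{k-1}(u),\pi_{k+1}(u),\dots,\pi_m(u),\,\sigma_1(\pi_k(u)),\dots,\sigma_r(\pi_k(u))\bigr).
\]
The main work—and the step I expect to be the principal obstacle—is checking that $\rho$ satisfies all four conditions of Definition~\ref{def:pseudofactorization}. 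Condition~1 follows by adding the distance decomposition of $\pi$ into $\prod_i G_i^*$ to the distance decomposition of $\sigma$ applied on the $k$-th coordinate. For conditions~2 and~4 one splits on whether an edge $uv\in E(G)$ lies in $E_k$: if not, its $\pi$-parent is in some coordinate $i\ne k$ and $\pi_k(u)=\pi_k(v)$, so $\rho(u),\rho(v)$ differ only in that coordinate, with the same parent edge and weight; if $uv\in E_k$, then $\pi_k(u)\pi_k(v)\in E(G_k^*)$ with weight $w_G(uv)$, and condition~2 for $\sigma$ pushes this to a single $H_{j}$-coordinate, again with weight $w_G(uv)$. Condition~4 moreover holds for the $H_j$'s because every edge of $H_j$ is a $\sigma$-parent of some $ab\in E(G_k^*)$, and any such $ab$ equals $\pi_k(x)\pi_k(y)$ for an edge $xy\in E_k$ (witnessing $ab\in E(G_k^*)$), whose $\rho$-parent is exactly that edge of $H_j$; for the $G_i^*$ ($i\ne k$) it holds by condition~4 for $\pi$. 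Condition~3 follows from surjectivity of $\pi$ onto $\prod_i G_i^*$ (which gives $\pi_k(V(G))=V(G_k^*)$) composed with surjectivity of each $\sigma_j$.

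With $\rho$ a genuine pseudofactorization of $G$, the contradiction is immediate. Every edge of $E_k$ has its $\rho$-parent in one of the $H_j$-coordinates, and by condition~4 for $\sigma$ each of $H_1,\dots,H_r$ receives at least one such parent edge; since $r\ge 2$, some edge of $E_k$ has $\rho$-parent in factor $H_1$ and some other edge of $E_k$ has $\rho$-parent in factor $H_2$. But all edges of $E_k$ are $\hat\theta$-related in $G$, so Lemma~\ref{claim:same-pseudofactor} forces their $\rho$-parents into a single factor, a contradiction. Hence every pseudofactorization of $G_k^*$ has exactly one factor; since a one-factor pseudofactorization makes its embedding a weight-preserving isomorphism, that factor is (isomorphic to) $G_k^*$, so $G_k^*$ is irreducible. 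As $k$ was arbitrary, all output graphs are irreducible.
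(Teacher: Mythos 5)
Your proof is correct, but it takes a genuinely different route from the paper's. The paper argues directly: for two edges $xy \mathbin{\theta} uv$ of $G$ whose parent edges lie in pseudofactor $G_j^*$, it expands the theta-difference $[d^*(\pi(x),\pi(u))-d^*(\pi(x),\pi(v))]-[d^*(\pi(y),\pi(u))-d^*(\pi(y),\pi(v))]$ over the coordinates of the product and shows it collapses to the theta-difference of the parent edges computed inside $G_j^*$; hence $\hat\theta$-relatedness transfers from $G$ to the factor, every $G_j^*$ has a single $\hat\theta$-class on its own edges, and Lemma~\ref{claim:same-pseudofactor} (applied to $G_j^*$) forces irreducibility. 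You instead argue by contradiction, proving in effect a ``pseudofactorizations compose'' lemma: a nontrivial pseudofactorization of one output factor, spliced into the $k$-th coordinate of $\pi$, yields a pseudofactorization of $G$ in which the single class $E_k$ is split across at least two factors, contradicting Lemma~\ref{claim:same-pseudofactor} applied to $G$ itself. Your verification of the four conditions of Definition~\ref{def:pseudofactorization} for the composite map $\rho$ is sound (the key facts you need --- that an edge lies in $E_k$ iff its $\pi$-parent sits in coordinate $k$, and that weights are preserved --- follow from Lemmas~\ref{claim:pseudofactor1} and~\ref{claim:pseudofactor2} as you note), and the endgame is airtight: each $H_j$ is connected, non-$K_1$, hence has an edge, which by condition~4 is the $\rho$-parent of some edge of $E_k$, so two $\hat\theta$-related edges land in distinct factors. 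The trade-off: the paper's computation yields the reusable fact that the $\theta$-relation descends to parent edges in the factors (which it also leans on in the uniqueness proof of Theorem~\ref{thm:pseudofactorization-unique}), while your composition argument avoids any distance computation inside the factors at the cost of the bookkeeping needed to check that $\rho$ honestly satisfies Definition~\ref{def:pseudofactorization}; that composition lemma is of some independent interest.
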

\begin{proof}
Using Lemma \ref{claim:same-pseudofactor}, we need only show that the edges in each factor are all related by $\hat \theta$ (when evaluated on the factor itself). Consider two edges $ab,a'b'\in E(G)$ where $ab\thetahrel a'b'$. We know that there must exist a sequence of edges $ab=a^1b^1,a^2b^2,\ldots,a^tb^t=a'b'$ such that $a^ib^i\thetarel a^{i+1}b^{i+1}$ for $i\in \{1,2,\ldots,t-1\}$. Because they're all related by $\hat\theta$, we know that all edges in this sequence must have parents in the same pseudofactor, which we will call pseudofactor $j$. Let $u_j:=\pi_j(u)$ for all $u\in V(G)$. We only need to show that for each $i$, $a^i_jb^i_j\thetarel a_j^{i+1}b_{j}^{i+1}$ in order to show that $a_jb_j\thetahrel a_j'b_j'$. 

For simplicity, we will just consider two edges $xy,uv\in E(G)$ such that $xy\thetarel uv$ and show that if $j$ is the pseudofactor where the two edges have parent edges under $\pi$, then $x_jy_j\thetarel u_jv_j$. We get the following:

\begin{align*}
    [d^*(\pi(x),\pi(u))-d^*(\pi(x),\pi(v))]-[d^*(\pi(y),\pi(u))-d^*(\pi(y),\pi(v))]\\
    =\sum_{i=1}^m[d_i(\pi_i(x),\pi_i(u))-d_i(\pi_i(x),\pi_i(v))]-[d_i(\pi_i(y),\pi_i(u))-d_i(\pi_i(y),\pi_i(v))] \\
    = [d_j(\pi_j(x),\pi_j(u))-d_j(\pi_j(x),\pi_j(v))]-[d_j(\pi_j(y),\pi_j(u))-d_j(\pi_j(y),\pi_j(v))] \\=
    [d_j(x_j,u_j)-d_j(x_j,v_j)]-[d_j(y_j,u_j)-d_j(y_j,v_j)],
\end{align*}
where the second to last equality is due to the fact that $\pi(x)\pi(y)$ is an edge with a parent in $j$ and the last equality is due to how we defined $x,y,u,v$.
Thus, we get that any pair of edges in $G_j^*$ is related by $\hat\theta$.
\end{proof}

Thus, from Theorem \ref{claim:pseudofactor5}  and Lemma \ref{claim:output-irreducible}, we conclude that Algorithm \ref{alg:general-alg} with the input $(G,\hat\theta)$ for a minimal weighted $G$ produces an irreducible pseudofactorization of $G$.






\subsection{Uniqueness of pseudofactorization}\label{sec:pseudofactorization-unique}

Here, we prove that the irreducible pseudofactorization of any minimal graph is unique in the following sense: for any two irreducible pseudofactorizations of a minimal graph $G$, the two sets of pseudofactors are equal up to graph isomorphism. We call the irreducible pseudofactorization generated by Algorithm \ref{alg:general-alg} with input $(G, \thetahrel)$ the \emph{canonical pseudofactorization} or \emph{canonical pseudofactors} of $G$.

We note that uniqueness is guaranteed in part because conditions 3 and 4 of Definition \ref{def:pseudofactorization} require that no unnecessary vertices or edges are included in the pseudofactors. Of course, if these conditions are removed, an arbitrary number of vertices and edges may be added without affecting isometric embeddability into the product and the uniqueness property no longer holds. (In fact, removing condition 3 makes it so that no graph is irreducible.)

\begin{thm}\label{thm:pseudofactorization-unique}
Let $G$ be a minimal weighted graph and let $G_1, \dots, G_p$ and $H_1, \dots, H_r$ be two irreducible pseudofactorizations of $G$. Then $p=r$ and the pseudofactors may be reordered so that $G_i$ is isomorphic to $H_i$ for all $i, 1 \le i \le p$.
\end{thm}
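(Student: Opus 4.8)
The plan is to prove uniqueness by showing that every irreducible pseudofactorization refines, and is refined by, the canonical one, hence equals it up to isomorphism. Concretely, let $\{G_1,\dots,G_p\}$ be an arbitrary irreducible pseudofactorization with embedding $\sigma : V(G) \to V(\prod_i G_i)$, and let $\{H_1,\dots,H_r\}$ be the canonical pseudofactorization produced by Algorithm \ref{alg:general-alg} on input $(G,\thetahrel)$, with its embedding $\pi$. By Lemma \ref{claim:same-pseudofactor}, the map ``parent edge of $e$ under $\sigma$ lies in $G_i$'' is constant on each $\thetahrel$-equivalence class of $E(G)$; since the canonical pseudofactors are indexed exactly by these equivalence classes, this induces a well-defined function $f$ from the set of canonical indices to $\{1,\dots,p\}$. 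The first key step is to show $f$ is a bijection: surjectivity follows because every edge of $G_i$ is a parent of some edge of $G$ (condition 4 of Definition \ref{def:pseudofactorization}), so each $G_i$ receives at least one $\thetahrel$-class; injectivity is the substantive direction and should follow because each $G_i$ is itself irreducible — if two distinct $\thetahrel$-classes both mapped into $G_i$, one could pull back the product structure on $G$ to split $G_i$ nontrivially (mirroring the computation in Lemma \ref{claim:output-irreducible}, but run on $G_i$ rather than on $G$), contradicting irreducibility of $G_i$.

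Once $f$ is a bijection, it remains to show that for each canonical index $k$, the canonical pseudofactor $H_k$ is isomorphic to $G_{f(k)}$. The candidate isomorphism is the map sending a connected component $C_a$ of $G_k'$ (a vertex of $H_k$) to the common value $\sigma_{f(k)}(u)$ for $u \in C_a$; this is well-defined precisely because edges not in $E_k$ do not change the $f(k)$-th coordinate of $\sigma$ (again Lemma \ref{claim:same-pseudofactor}, contrapositive form), and connectedness of $C_a$ propagates this. To see it is a vertex bijection, use condition 3 (surjectivity of $\sigma_{f(k)}$) for ontoness, and for injectivity argue that two vertices of $G$ agreeing in every $\sigma_i$-coordinate must be equal (since $\sigma$ is an isometric \emph{embedding}), so if $u,v$ lie in distinct components of $G_k'$, some edge of a $u$–$v$ path lies in $E_k$ and witnesses $\sigma_{f(k)}(u)\neq\sigma_{f(k)}(v)$, using Lemma \ref{claim:pseudofactor2}. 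Edge-preservation in both directions: an edge $ab$ of $H_k$ comes from an edge $uv \in E_k$ of $G$ between the components, whose parent edge under $\sigma$ lies in $G_{f(k)}$ and equals $\sigma_{f(k)}(u)\sigma_{f(k)}(v)$ with matching weight (Definition \ref{def:parent-edge} and Lemma \ref{claim:pseudofactor1}); conversely every edge of $G_{f(k)}$ is the parent of some $e \in E(G)$, and $e \in E_{f^{-1}(f(k))} = E_k$ by the definition of $f$, so it induces the corresponding edge of $H_k$. Weights on matched edges agree throughout because parent edges carry equal weight.

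I expect the main obstacle to be the injectivity of $f$, i.e. showing that no single $G_i$ absorbs two distinct canonical $\thetahrel$-classes. The clean way to do this is: suppose $E_k$ and $E_{k'}$ (distinct canonical classes) both map to $G_i$ under $f$; build from $G$ a pseudofactorization of $G_i$ itself by ``forgetting'' all coordinates of $\sigma$ except the $i$-th and then re-running the $\thetahrel$-splitting \emph{inside} $G_i$ — the theta-sum bookkeeping of Lemma \ref{claim:theta-sum} and the coordinate-decomposition identity used in Lemma \ref{claim:output-irreducible} should show that $E_k$ and $E_{k'}$ descend to edges of $G_i$ that are still $\thetahrel$-inequivalent in $G_i$, giving $G_i$ a pseudofactorization into $\ge 2$ nontrivial pieces and contradicting its irreducibility. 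The delicate point is verifying that $\thetahrel$-inequivalence is preserved under this projection; this needs the fact (essentially Theorem \ref{claim:pseudofactor5} applied to $G$) that $\sigma$ is distance-preserving, so distances in $G_i$ are genuine marginals of distances in $G$, and hence a nonzero theta-difference in $G_i$ lifts to a nonzero theta-difference in $G$. The remaining steps are then routine unwinding of definitions.
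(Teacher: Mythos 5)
Your proposal takes essentially the same route as the paper's proof: both identify the pseudofactors of any irreducible pseudofactorization bijectively with the $\thetahrel$-classes of $E(G)$ (your injectivity-of-$f$ argument is the contrapositive of the paper's ``parent edges in the same pseudofactor $\Rightarrow$ $\thetahrel$-related'' step, resting on the same coordinate-wise decomposition of the theta-difference and on an irreducible pseudofactor having a single $\thetahrel$-class), and both then build the isomorphism by matching connected components of $G_k'$ with fibers of the $k$-th coordinate, so the plan is sound. The one step to spell out is the claim that $u,v$ lying in distinct components of $G_k'$ forces $\sigma_{f(k)}(u)\neq\sigma_{f(k)}(v)$: injectivity of $\sigma$ only gives a difference in \emph{some} coordinate, and the paper closes this by decomposing $d_G(u,v)$ along a shortest path as $\sum_i c_i$ (with $c_i$ the total weight of its $E_i$-edges), noting $d_{\sigma_i}(u,v)\le c_i$ for each $i$ while the sums agree, hence $d_{\sigma_{f(k)}}(u,v)=c_{f(k)}>0$.
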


\begin{proof}
This proof follows the reasoning of Graham and Winkler \cite{graham1985isometric}, with modifications for weighted graphs.

Number the $\thetahrel$ equivalence classes of $E(G)$ as $E_1, \dots, E_{m}$. Let $\pi : V(G) \to V\left(\prod_{i=1}^pG_i\right)$ and $\rho : V(G) \to V\left(\prod_{i = 1}^r H_i\right)$ be isometric embeddings of $G$ into $\prod_{i =1}^pG_i$ and $\prod_{i=1}^r H_i$, respectively, with $\pi = (\pi_1, \dots, \pi_{p})$ and $\rho = (\rho_1, \dots, \rho_{r})$. Note that by the definition of pseudofactorization, if $u, v \in V(G)$ are adjacent then $\pi(u), \pi(v)$ are adjacent, and there is exactly one $i$ such that $\pi_i(u) \ne \pi_i(v)$; edge $\pi_i(u)\pi_i(v)$ is the parent edge under $\pi$ of $uv \in E(G)$. 
The same reasoning applies to $\rho(u)$ and $\rho(v)$.

First, we show that  $uv \thetahrel u'v'$ if and only if they have parent edges belonging to the same pseudofactor. That is, $uv \thetahrel u'v'$ if and only if $\pi_i(u) \ne \pi_i(v) \iff \pi_i(u') \ne \pi_i(v')$ for all $i, 1 \le i \leq p$.
For the forward case, when $uv \thetahrel u'v'$ then Lemma \ref{claim:same-pseudofactor} applies and we are done. For the reverse case, let $uv$ and $u'v'$ have parent edges in $G_j$ and note that because $G_{j}$ is irreducible, it has exactly one equivalence class of $\thetahrel$ on its edges, since Algorithm \ref{alg:general-alg} outputs a pseudofactorization with one pseudofactor per equivalence class of $\thetahrel$. Further, we have that
\begin{align}
& \edgediff{G}{u}{v}{u'}{v'} \\
&= \edgediff{\pi}{u}{v}{u'}{v'} \\
&= \sum_{i=1}^p \edgediff{\pi_i}{u}{v}{u'}{v'} \\
&= \edgediff{\pi_j}{u}{v}{u'}{v'}
\end{align}
where the first equality holds because $\pi$ is an isometric embedding, the second equality holds because of the path decomposition property of the Cartesian graph product, and the third equality holds because each summand is nonzero only if $\pi_i(u') \ne \pi_i(v')$. Thus, by the definitions of $\theta$ and of pseudofactorization, if all edges in $G_{j}$ are related by $\thetahrel$ then any edges in $G$ to which they are a parent are also related by $\thetahrel$.

This shows there is a bijection from the $\thetahrel$ equivalence classes to the set of $G_i$, as well as to the set of $H_i$. So $p=r=m$.

Renumber both pseudofactorizations so that $G_i$ and $H_i$ contain only parent edges under $\pi$ and $\rho$, respectively, of the edges in $E_i$, $1 \le i \le m$. Fix $j$, $1 \le j \le m$, and $u \in V(G)$. Take any $u'$ for which there is a path $P$ to $u$ not using any edge in $E_j$. Clearly, $\pi_j(u) = \pi_j(u')$ because no edges in this path have parent edges in $G_j$ and so $\pi_j$ is constant along this path. Alternatively, take any $u'$ for which every path to $u$ has at least one edge in $E_j$ and consider a shortest path $P$ from $u$ to $u'$. Let $c_i$ be the sum of the edge weights of the edges along $P$ in $E_i$. Observe:
\begin{equation}
d_G(u,u') = \sum_{i=1}^n d_{\pi_i}(u,u') = \sum_{i=1}^n c_i
\end{equation}
Each $d_{\pi_i}(u,u') \le c_i$ because the edges along $p$ in $E_i$ trace out a path in $G_i$. Thus, for the sums to be equal $d_{\pi_i}(u,u') = c_i$. So in this case $d_{\pi_j}(u,u') = c_j > 0$ and $\pi_j(u) \ne \pi_j(u')$. Thus, $u$ and $u'$ are connected by a path without edges in $E_j$ if and only if $\pi_j(u) = \pi_j(u')$. Now let $V_u$ be the set of all $u' \in V(G)$ for which there is a path from $u$ to $u'$ without an edge in $E_j$. Then by this reasoning $V_u = \{u' \in V(G) : \pi_j(u) = \pi_j(u')\} = \{u' \in V(G) : \rho_j(u) = \rho_j(u')\}$. Thus, there is a single $\rho_j(u) \in V(H_j)$ for each $\pi_j(u) \in V(G_j)$. Let $f : V(G_j) \to V(H_j)$ map each vertex of $G_j$ to the corresponding vertex in $H_j$ given by $V_u$.

Each edge $\pi_j(u)\pi_j(v)$ in $G_j$ is the parent of an edge in $G$ (see condition 4 of Definition \ref{def:pseudofactorization}), which we assume without loss of generality to be $uv \in E_j$. As observed above, $uv \in E_j$ must have a parent edge under $\rho$, $\rho_j(u)\rho_j(v) \in E(H_j)$. From the definition of pseudofactorization, the edge weights of $\pi_j(u)\pi_j(v)$, $uv$, and $\rho_j(u)\rho_j(v)$ must be equal. Thus, for every edge $\pi_j(u)\pi_j(v) \in E(G_j)$ there is an edge $f(\pi_j(u))f(\pi_j(v)) = \rho_j(u)\rho_j(u') \in E(H_j)$, and, by symmetry, the converse is true. So $G_j$ is isomorphic to $H_j$.
\end{proof}

\section{Factorization of weighted graphs}\label{sec:factorization}

Feder \cite{feder} showed that Algorithm \ref{alg:general-alg} could also be used to find the factorization of a graph. To do so, he defined a new relation, $\tau_1$. For edges $uv,uv'\in E(G)$, $uv\ \tau_1 \ uv'$ if and only if there does not exist a 4-cycle containing edges $uv$ and $uv'$. He further defined $(\theta\cup\tau_1)^*$ to be the transitive closure of $(\theta\cup\tau_1)$ and showed that for an unweighted graph $G$, Algorithm \ref{alg:general-alg} on the input $(G,(\theta\cup\tau_1)^*)$ produces the prime factorization of $G$. 
We note that in this section, we are discussing a \textit{general} weighted graph, not necessarily a minimal one.

\subsection{A modified equivalence relation}

For the purpose of graph factorization, we first define a property which we call the \textit{square property} and which will be useful for factorization. (This is a modification of the square property proposed by Feder.)

\begin{definition}\label{def:square-prop} Edges $uv,uv'\in E(G)$ satisfy the \textbf{square property} if there exists a vertex $x$ such that $uvxv'$ forms a square (four-cycle) with $w_G(uv)=w_G(xv')$, $w_G(uv')=w_G(xv)$, $uv\thetarel xv'$, and $uv'\thetarel xv$. In other words, the two edges must make up two of the adjacent edges of at least one four-cycle in which the opposite edges have the same weight and are related by $\theta$. This is depicted visually in Figure \ref{fig:square-property}.
\end{definition}

\begin{figure}
    \centering
    \includegraphics[width=.5\linewidth]{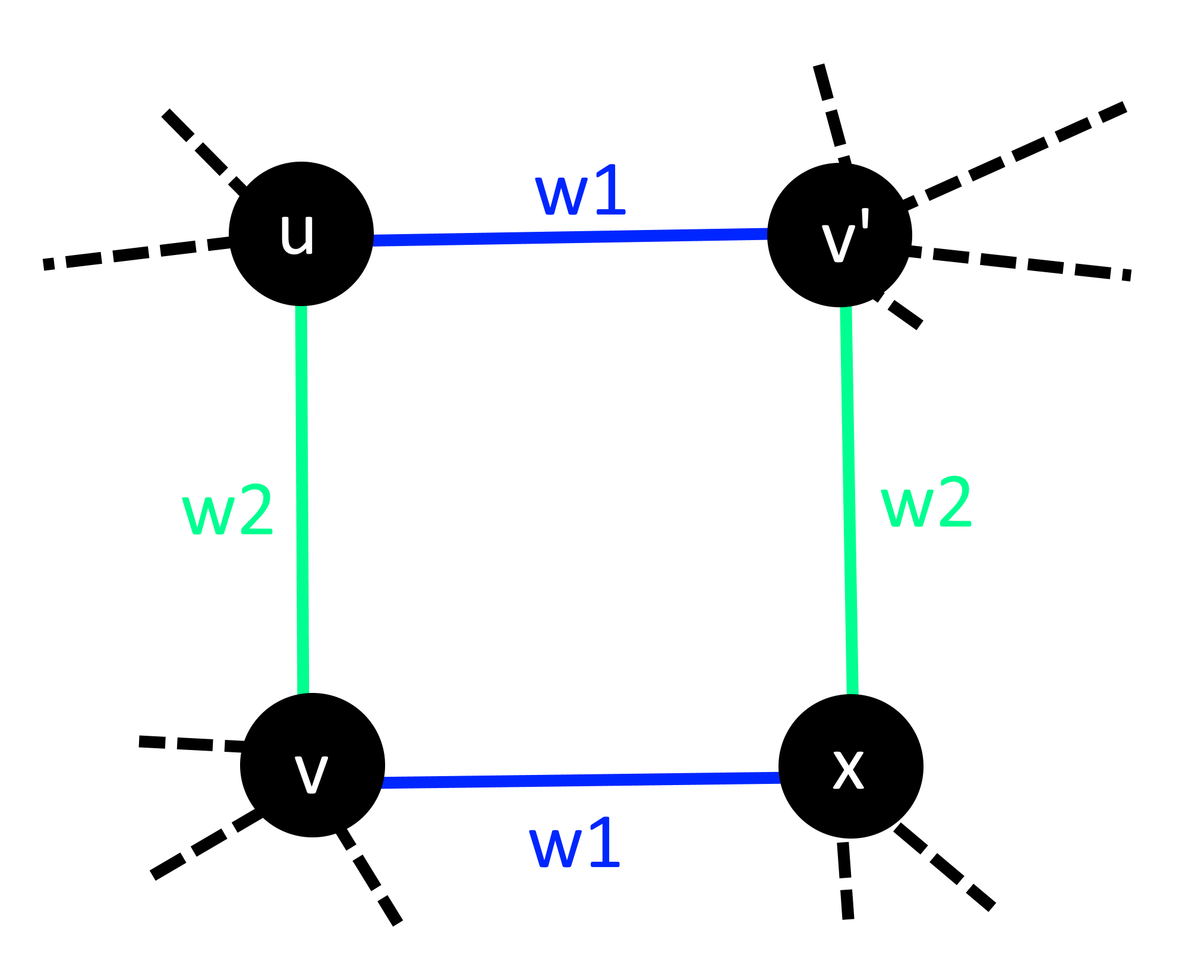}
    \caption[A visual representation of the square property]{If $uv'\thetarel vx$ and $uv\thetarel v'x$, then existence of the subgraph depicted here implies that $uv$ and $uv'$ satisfy the square property.}
    \label{fig:square-property}
\end{figure}

We now define two relations $\theta$ and $\tau$ as follows:

\begin{enumerate}
    \item As before, two edges, $xy,uv\in E(G)$ are related by $\theta$ if and only if $[d_G(x,u)-d_G(x,v)]-[d_G(y,u)-d_G(y,v)]\neq 0$. 
    
    \item Two edges $uv,uv'\in E(G)$ are related by $\tau$ if and only if they do \textit{not} satisfy the square property. If two edges do not share a common endpoint, they are not related by $\tau$.
\end{enumerate}

We note that the $\theta$ relation used here is the same as that used by Graham and Winkler \cite{graham1985isometric} and the $\tau$ relation is inspired by that used by Feder for factorization of unweighted graphs \cite{feder}.
We also let $\transthetarel$ be the transitive closure of $\theta$ and $\factorrel$ be the transitive closure of $(\theta\cup \tau)$. Using these definitions, we prove that we can factor graphs using Algorithm \ref{alg:general-alg}. 

\subsection{Testing primality}

Analogously to showing irreducibility with respect to pseudofactorization, in this section, we show in Lemma \ref{claim:same-factor} that a graph is prime if it has one equivalence class under $(\theta\cup\tau)^*$ on its edges. 

\begin{lem} \label{claim:same-factor}
For $uv,xy\in E(G)$, if $uv\thetarel xy$ or $uv\taurel xy$, then for any factorization of $G$, $uv$ and $xy$ must have parent edges in the same factor under $\alpha$, where $\alpha$ is an isomorphism from $G$ to the product of the factors. From this, we get that if $uv\factorrel xy$ then $uv$ and $xy$ have parent edges under $\alpha$ in the same factor.
\end{lem}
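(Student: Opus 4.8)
The plan is to handle the two relations separately and then combine them by transitivity, mirroring the structure of Lemma \ref{claim:same-pseudofactor} but working with an isomorphism $\alpha$ to a product rather than an isometric embedding. Fix a factorization $\{G_1,\dots,G_m\}$ of $G$ and an isomorphism $\alpha=(\alpha_1,\dots,\alpha_m):V(G)\to V(\prod_i G_i)$. Because $\alpha$ is an isomorphism, every edge $uv\in E(G)$ maps to an edge $\alpha(u)\alpha(v)$ of the product, hence there is exactly one coordinate $\ell$ with $\alpha_\ell(u)\ne\alpha_\ell(v)$, and $\alpha_\ell(u)\alpha_\ell(v)$ is the parent edge of $uv$ in $G_\ell$. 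Also note that the product metric satisfies $d_G(p,q)=\sum_i d_{G_i}(\alpha_i(p),\alpha_i(q))$ since $\alpha$ is an isomorphism onto the whole product, so the distance-decomposition identity (Equation \ref{eqn:cartesian-distance-property}) is available verbatim.

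For the $\theta$ case the argument is identical to the one in Lemma \ref{claim:same-pseudofactor}: if $uv\thetarel xy$ but the parent edges lie in distinct factors $G_\ell$ and $G_j$ with $\ell\ne j$, expand the theta-difference $[d(x,u)-d(x,v)]-[d(y,u)-d(y,v)]$ using the distance decomposition into a sum over $i$ of $[d_i(x_i,u_i)-d_i(x_i,v_i)]-[d_i(y_i,u_i)-d_i(y_i,v_i)]$; the $i$-th term vanishes unless $u_i\ne v_i$ (i.e.\ $i=\ell$) and also unless $x_i\ne y_i$ (i.e.\ $i=j$), so with $\ell\ne j$ every term is $0$, contradicting $uv\thetarel xy$. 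Hence the parent edges share a factor.

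For the $\tau$ case, suppose $uv\taurel uv'$ (they share the endpoint $u$ by definition of $\tau$) but their parent edges lie in distinct factors, say $\alpha$ moves coordinate $\ell$ along $uv$ and coordinate $j\ne\ell$ along $uv'$. I claim this forces the square property to hold, contradicting $uv\taurel uv'$. Write $\alpha(u)=(u_1,\dots,u_m)$; then $\alpha(v)$ differs from $\alpha(u)$ only in coordinate $\ell$ (call the new value $v_\ell$) and $\alpha(v')$ differs only in coordinate $j$ (new value $v'_j$). Let $x$ be the vertex of $G$ with $\alpha(x)$ equal to $\alpha(u)$ but with coordinate $\ell$ replaced by $v_\ell$ and coordinate $j$ replaced by $v'_j$; such an $x$ exists because $\alpha$ is onto the full product. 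Then in the product graph $\alpha(v)\alpha(x)$ and $\alpha(v')\alpha(x)$ are edges: $\alpha(v)$ and $\alpha(x)$ differ only in coordinate $j$, and $\alpha(v')$ and $\alpha(x)$ differ only in coordinate $\ell$. Pulling back through $\alpha^{-1}$, $uvxv'$ is a four-cycle in $G$. The edge-weight conditions $w_G(uv)=w_G(xv')$ and $w_G(uv')=w_G(xv)$ follow because opposite edges of this square have the same parent edge (the parent of $uv$ and of $xv'$ are both the edge $u_\ell v_\ell$ of $G_\ell$, and the parent of $uv'$ and of $xv$ are both $u_j v'_j$ of $G_j$), and an edge and its parent have equal weight. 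Finally $uv\thetarel xv'$ and $uv'\thetarel xv$: these pairs of edges have parent edges in the \emph{same} factor ($\ell$ and $j$ respectively), and one computes the theta-difference $[d(u,x)-d(u,v')]-[d(v,x)-d(v,v')]$; by the distance decomposition this reduces to the single $\ell$-coordinate term $[d_\ell(u_\ell,v_\ell)-d_\ell(u_\ell,u_\ell)]-[d_\ell(v_\ell,v_\ell)-d_\ell(v_\ell,u_\ell)]=2d_\ell(u_\ell,v_\ell)>0$, and symmetrically for the other pair. So the square property holds, contradicting $uv\taurel uv'$; therefore the parent edges of $uv$ and $uv'$ lie in the same factor.

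Combining the two cases: if $uv\factorrel xy$ there is a chain $uv=e_1,e_2,\dots,e_t=xy$ with consecutive edges related by $\theta$ or by $\tau$, and by the above each consecutive pair has parent edges in a common factor, so by transitivity $uv$ and $xy$ have parent edges in the same factor. The main obstacle I anticipate is the $\tau$ case — specifically, verifying carefully that the pulled-back four-cycle genuinely witnesses the square property as defined in Definition \ref{def:square-prop}, i.e.\ checking all four sub-conditions (the cycle, the two weight equalities, and the two $\theta$-relations) rather than just producing a four-cycle; the $\theta$-relations in particular need the distance-decomposition computation above, and one should double-check that no degeneracy (e.g.\ $x$ coinciding with an existing vertex, or $v=v'$) can occur, which it cannot since $\ell\ne j$ keeps all four coordinate patterns distinct.
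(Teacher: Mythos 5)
Your proposal is correct and follows essentially the same route as the paper: the $\theta$ case via the coordinate-wise distance decomposition (which the paper handles by citing Lemma \ref{claim:same-pseudofactor}, since an isomorphism is in particular an isometric embedding), and the $\tau$ case by using surjectivity of $\alpha$ to construct the fourth vertex $x$ of a square, then verifying the weight equalities from shared parent edges and the two $\theta$-relations from the reduction to a single coordinate term $2d_\ell(u_\ell,v_\ell)>0$. The transitive-closure step at the end matches the paper as well.
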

\begin{proof}
We will prove the lemma first for the $\theta$ relation and then for the $\tau$ relation. Throughout the proof of this lemma, we let $d:=d_G$ and $d_i:=d_{G_i}$. Say there exists a factorization $\{G_1,G_2,...,G_m\}$ of $G$  with isomorphism $\alpha$ such that $\alpha(a)=(\alpha_1(a),\alpha_2(a),...,\alpha_m(a))$ for $a\in V(G)$. For simplicity, we let $\alpha_i(a)=a_i$. 

\begin{enumerate}
    \item Consider $uv\thetarel xy$. Then by Lemma \ref{claim:same-pseudofactor}, we get that in any embedding into an isometric subgraph of the product graph, the images of these two edges have parents in the same factor. Since $\alpha$ is an isomorphism and thus is an isometric embedding into a subgraph of the product, this claim applies and we get that the parent edges under $\alpha$ of $uv$ and $xy$ must be in the same factor.
    
    
    
    \item Consider $uv\taurel uv'$ and assume for contradiction that the above factorization of $G$ is one in which $uv$ and $uv'$ have parent edges under $\alpha$ in different factors.
    We get that there is exactly one $l$ such that $u_l\neq v_l'$ and exactly one $j$ such that $u_j\neq v_j$.  Since they belong to different factor graphs, we have $l\neq j$. Without loss of generality, assume $j<l$.
    
    Now, consider the node $(u_1,\ldots,u_{j-1},v_j,u_{j+1},\ldots,u_{l-1},v_l',u_{l+1},\ldots,u_m)$
    in $\Pi_iG_i$ (ie the node that matches $\alpha(u)$ on all indices except $j$ and $l$, where it matches $v_j$ and $v_l'$ respectively).
    Since the vertex set of $\Pi_iG_i$ is the Cartesian product of the $V(G_i)$, this node must be in $\Pi_iG_i$ and since $\alpha$ is a bijection, there must exist $x\in V(G)$ such that $\alpha(x)$ equals this node. We also know that $x$ must have an edge to $v$ since $x_i=v_i$ for all $i\neq l$ and  
    $v_lv_l'=u_lv_l'\in E(G_l)$. It also has an edge to $v'$ since $x_i=v_i'$ for all $i\neq j$ and $v_jv_j'=v_ju_j\in E(G_j)$.
    
    Thus, $uvxv'$ is a square. Additionally, we know that  weights on opposite sides of the square are equal because they have the same parent edge under $\alpha$. We can also show that opposite edges are related by $\theta$. We will only show this for $uv\thetarel xv'$ and appeal to symmetry to show $uv'\thetarel xv$. As before, we can rewrite $[d(u,x)-d(u,v')]-[d(v,x)-d(v,v')]$ as the sum:
    \begin{align*}
        \sum_{i=1}^m[d_i(u_i,x_i)-d_i(u_i,v_i')]-[d_i(v_i,x_i)-d_i(v_i,v_i')]
    \end{align*}
    Since $u$ and $v$ only differ on coordinate $j$, this becomes: 
    \begin{align*}
        [d_{j}(u_j,x_j=v_j)-d_{j}(u_j,v_j'=u_j)]-[d_{j}(v_j,x_j=v_j
    )-d_{j}(v_j,v_j'=u_j)] &= 2d_{j}(u_j,v_j) \\
     &\neq  0.
    \end{align*}
     The last inequality comes from the fact that $v_j\neq u_j$ and we assume that all edge weights are positive. Thus, $uv\thetarel xv'$ and by a symmetric argument $uv'\thetarel xv$. This means $x$ is such that $uv$ and $uv'$ satisfy the square property, which means they cannot be related by $\tau$.
\end{enumerate}

Thus, if two edges are related by $\theta$ or by $\tau$, then they have parent edges in the same factor for any factorization of $G$. This property is preserved under the transitive closure, so if two edges are related by $\factorrel$ then they must correspond to edges in the same factor.
\end{proof}

In the following section, we will show that Algorithm \ref{alg:general-alg} with $(\theta\cup\tau)^*$ as the input relation gives a prime factorization of the input graph. Since the algorithm outputs one graph for each equivalence class of the given relation, combined with Lemma \ref{claim:same-factor}, this tells us that a graph is prime if and only if all of its edges are in the same equivalence class of $(\theta\cup\tau)^*$.

\subsection{An algorithm for factorization\label{sec:factorization-final-proof}}


In this section, we show that Algorithm \ref{alg:general-alg} with inputs $G$ and $(\theta\cup \tau)^*$ produces a prime factorization of $G$. We first show that the algorithm in question is well-defined for general weighted graphs and with the input relation $(\theta\cup \tau)^*$. We use Lemma \ref{claim:factorization1} to show this, and we note that it actually proves a  stronger statement that we will continue to use later. Additionally, we note that throughout this section, we will refer to $G_k'$ and $G_k^*$ as they are defined in the algorithm. 
\begin{lem}\label{claim:factorization1}
If $C_a,C_b$ are connected components in $G_k'$ and there exists $x\in C_a,y\in C_b$ such that $xy$ is an edge with weight $w_{ab}$, then for each $u\in C_a$ there exists exactly one $v\in C_b$ such that $uv$ forms an edge and that edge has weight $w_{ab}$.
\end{lem}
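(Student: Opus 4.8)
The plan is to split the statement into two pieces: that at most one such $v$ exists (uniqueness), and that at least one such $v$ exists and has weight $w_{ab}$ (existence); ``exactly one'' with the stated weight then follows. The uniqueness piece carries over verbatim from the proof of Lemma~\ref{claim:pseudofactor1}. Suppose $uv$ and $uv'$ were both edges with $v,v'\in C_b$ and $v\neq v'$; choose a path $Q=(v=q_0,\dots,q_t=v')$ inside $C_b$, so that every edge of $Q$ avoids $E_k$ and hence --- since $E_k$ is an equivalence class of $\factorrel$ and is therefore closed under $\thetarel$ --- is not $\thetarel$-related to $uv$. Then the telescoped sum $\sum_{i=1}^{t}\big([d(u,q_{i-1})-d(u,q_i)]-[d(v,q_{i-1})-d(v,q_i)]\big)=0$ collapses to $d(u,v)+d(v,v')-d(u,v')=0$, forcing $d(u,v)<d(u,v')$; the symmetric computation centered at $v'$ gives $d(u,v')<d(u,v)$, a contradiction. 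This uses only positivity of the weights, so it is valid for general (non-minimal) weighted graphs.

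The new ingredient is existence, and this is exactly where the relation $\taurel$ (the square property) does the work that minimality did in Lemma~\ref{claim:pseudofactor1}. Note first that $xy\in E_k$, since it joins the distinct components $C_a,C_b$ of $G_k'$. Fix a path $(x=z_0,z_1,\dots,z_n=u)$ inside $C_a$ and propagate the edge $xy$ along it, maintaining the invariant: after step $i$ there is a vertex $w_i\in C_b$ with $z_iw_i\in E(G)$, $z_iw_i\in E_k$, and $w_G(z_iw_i)=w_{ab}$. The base case $i=0$ is $w_0=y$. For the inductive step, the edges $z_iw_i$ and $z_iz_{i+1}$ share the endpoint $z_i$, and since $z_iw_i\in E_k$ while $z_iz_{i+1}\notin E_k$ (it lies on a path in $G_k'$) these two edges lie in different $\factorrel$-classes, hence are not $\taurel$-related; by Definition~\ref{def:square-prop} they therefore satisfy the square property, yielding a vertex $w_{i+1}$ with $z_iw_i\thetarel w_{i+1}z_{i+1}$, $z_iz_{i+1}\thetarel w_{i+1}w_i$, $w_G(z_iw_i)=w_G(w_{i+1}z_{i+1})$, and $w_G(z_iz_{i+1})=w_G(w_{i+1}w_i)$. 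Since $z_{i+1}w_{i+1}\thetarel z_iw_i\in E_k$ and $E_k$ is $\thetarel$-closed, $z_{i+1}w_{i+1}\in E_k$, and its weight equals $w_G(z_iw_i)=w_{ab}$ by induction; since $w_iw_{i+1}\thetarel z_iz_{i+1}\notin E_k$, we get $w_iw_{i+1}\notin E_k$, so $w_i$ and $w_{i+1}$ lie in the same component of $G_k'$, and as $w_i\in C_b$ also $w_{i+1}\in C_b$. This closes the induction, and taking $v:=w_n$ at $i=n$ gives an edge $uv$ with $v\in C_b$ and $w_G(uv)=w_{ab}$. Combined with uniqueness, this is the lemma.

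The step I expect to be the main obstacle is the bookkeeping inside the propagation: one must identify correctly, from Definition~\ref{def:square-prop}, which of the four edges of the square produced at step $i$ is $\thetarel$-related to $z_iw_i$ (hence forced into $E_k$ and forced to keep weight $w_{ab}$) and which is $\thetarel$-related to $z_iz_{i+1}$ (hence forced to stay within a single connected component of $G_k'$), and to combine this with the fact that an equivalence class of $\factorrel$ is closed under $\thetarel$. Once the invariant is phrased as above, no real computation is needed --- the weight equality is read off directly from the weight conditions built into the square property, which is precisely why, unlike Lemma~\ref{claim:pseudofactor1}, this argument does not require $G$ to be minimal.
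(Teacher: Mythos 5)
Your proof is correct and follows essentially the same route as the paper's: uniqueness via the telescoping $\theta$-sum argument inherited from Lemma~\ref{claim:pseudofactor1}, and existence by inducting along a path inside $C_a$ and propagating the cross-edge using the square property (the failure of $\taurel$ between an $E_k$-edge and a non-$E_k$-edge sharing an endpoint), with the $\thetarel$-closure of $E_k$ and the weight conditions of Definition~\ref{def:square-prop} keeping the propagated edge in $E_k$ with weight $w_{ab}$ and its partner edge inside $C_b$. The only cosmetic difference is the direction of the induction (you walk from $x$ to $u$; the paper inducts on the length of a shortest $E_k$-free path from $u$ to $x$), which changes nothing.
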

\begin{proof}
Throughout this proof, we take $d(\cdot,\cdot)$ to be the distance function on $G$. First, we have that if $uv$ is an edge between $C_a,C_b$, then there cannot exist a distinct $v'\in C_b$ such that $uv'$ is an edge. This proof is identical to the first part of the proof of Lemma \ref{claim:pseudofactor1} so we do not repeat it here.





Now we have that a node in $C_a$ cannot have edges to more than one node in $C_b$, and we expand on that to show that each $u\in C_a$ has an edge to \textit{at least one} node in $C_b$ and that that edge has weight $w_{ab}$. Let $P(u)$ be a path from $u$ to $x$ that does not include any edges in $E_k$ and has the smallest number of edges of all such paths. (Such a path must exist because $u$ and $x$ are in the same connected component of $G_k'$.) 

We proceed by induction on the number of edges in the path $P(u)$. In the base case, there are zero edges in $P$. In this case, we must have $u=x$. By assumption, $x$ has an edge of weight $w_{ab}$ to $y\in C_b$, proving the inductive hypothesis. 

Now, we assume that for all nodes in $C_a$ with such a path $P$ of $n$ edges, the lemma holds. We let $u$ be a node such that $P(u)$ has $n+1$ edges and let $u'$ be the second node on this path. By definition, we know that $P(u')$ has $n$ edges. By the inductive assumption, there exists $v'\in C_b$ such that $u'v'\in E(G)$ and with $w(u'v')=w_{ab}$. We know that $uu'$ and $u'v'$ are not related by $\tau$ (or else $uu'$ would be in $E_k$), so they must fulfill the square property. Let  $v$ be the node such that $uu'v'v$ is a square with opposite edges of equal weight and relatd by $\theta$. Because $uu'\not\in E_k$, we get $vv'\notin E_k$ and since $v'\in C_b$, we get that $v\in C_b$. This means $uv$ is an edge between $C_a$ and $C_b$ with weight $w_{ab}$, proving the lemma.
\end{proof}

We now write Lemma \ref{claim:factorization2}, which is identical to Lemma \ref{claim:pseudofactor2}, but now refers to the equivalence classes of our new relation. We note that the proof of this claim is identical to that of the original claim, so we do not repeat it here. 

\begin{lem}\label{claim:factorization2}
We have the following two facts about the equivalence classes of $(\theta\cup\tau)^*$.

\begin{enumerate}
    \item If $uv$ forms an edge and is in equivalence class $E_k$, then for any path $Q=(u=q_0,q_1,\ldots,q_t=v)$ between the two nodes, there is at least one edge from $E_k$.
    \item Let $P=(u=p_0,p_1,\ldots,p_n=v)$ be a shortest path from $u$ to $v$. If $P$ contains an edge in the equivalence class $E_k$, then for any path $Q=(u=q_0,q_1,\ldots,q_t=v)$ there is at least one edge from $E_k$.
\end{enumerate}
\end{lem}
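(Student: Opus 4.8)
The plan is to imitate the proof of Lemma \ref{claim:pseudofactor2} verbatim, since the only thing that changes is which equivalence relation the classes $E_k$ come from, and the argument there never used any special property of $\hat\theta$ beyond the fact that an edge which makes a nonzero contribution to a theta-sum along a path must lie in the same equivalence class as the edge (or pair of endpoints) against which the theta-sum is computed. So for part 1, I would fix an edge $uv \in E_k$ and an arbitrary $u$--$v$ path $Q = (u = q_0, q_1, \dots, q_t = v)$, and examine the telescoping sum $\sum_{i=1}^t \left([d_G(u,q_{i-1}) - d_G(u,q_i)] - [d_G(v,q_{i-1}) - d_G(v,q_i)]\right)$, which collapses to $[d_G(u,u) - d_G(u,v)] - [d_G(v,u) - d_G(v,v)] = -2 d_G(u,v) \neq 0$ because weights are positive and $u \neq v$. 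Hence some term $i$ is nonzero, so $uv \mathrel{\theta} q_{i-1}q_i$, so $q_{i-1}q_i$ lies in the same $(\theta \cup \tau)^*$-class as $uv$, namely $E_k$.

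For part 2, I would fix a shortest path $P = (u = p_0, \dots, p_n = v)$ containing some edge $p_{\ell-1}p_\ell \in E_k$, fix an arbitrary $u$--$v$ path $Q$, and telescope $\sum_{i=1}^t \left([d_G(p_{\ell-1}, q_{i-1}) - d_G(p_{\ell-1}, q_i)] - [d_G(p_\ell, q_{i-1}) - d_G(p_\ell, q_i)]\right)$ to $[d_G(p_{\ell-1}, u) - d_G(p_{\ell-1}, v)] - [d_G(p_\ell, u) - d_G(p_\ell, v)]$. Using that $P$ is a shortest path, so $d_G(u,v) = d_G(u, p_{\ell-1}) + d_G(p_{\ell-1}, v) = d_G(u, p_\ell) + d_G(p_\ell, v)$, this simplifies to $2[d_G(p_\ell, v) - d_G(p_{\ell-1}, v)]$, which is nonzero because $p_{\ell-1}$ and $p_\ell$ are distinct consecutive vertices on a shortest path from $u$ to $v$ (they cannot be equidistant from $v$). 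Therefore some edge $q_{i-1}q_i$ on $Q$ is $\theta$-related to $p_{\ell-1}p_\ell$, hence lies in the same $(\theta \cup \tau)^*$-class $E_k$.

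I do not anticipate a genuine obstacle: the computations are exactly those already carried out in Lemma \ref{claim:pseudofactor2}, and the only conceptual point to check is that $\theta \subseteq (\theta \cup \tau)^*$, so that a $\theta$-relation between two edges forces them into the same equivalence class of the coarser relation — this is immediate from the definition of $(\theta \cup \tau)^*$ as the transitive closure of $\theta \cup \tau$. The mild subtlety worth flagging (and the reason the authors state both parts separately rather than deriving part 1 from part 2) is that in a general weighted graph an edge $uv$ need not be a shortest path between its endpoints, so part 2 does not subsume part 1; but each part has its own self-contained telescoping argument, so nothing is lost. Consequently the proof is just a pointer: "The proof is identical to that of Lemma \ref{claim:pseudofactor2}, replacing the equivalence classes of $\hat\theta$ by those of $(\theta\cup\tau)^*$ and using $\theta \subseteq (\theta\cup\tau)^*$."
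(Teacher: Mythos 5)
Your proposal is correct and matches the paper exactly: the paper's own proof of Lemma \ref{claim:factorization2} is literally the statement that the argument of Lemma \ref{claim:pseudofactor2} carries over verbatim, and your telescoping computations and the observation that $\theta$-related edges necessarily lie in the same $(\theta\cup\tau)^*$-class are precisely what makes that transfer work. Your remark about why part 1 is not subsumed by part 2 for general (non-minimal) weighted graphs also mirrors the remark made in the proof of Lemma \ref{claim:pseudofactor2}.
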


We now want to show that $G$ is ismorphic to $\Pi_iG_i^*$, so we define an isomorphism $\alpha:V(G)\rightarrow V(\Pi_iG_i^*)$. For ease of notation, we let $\alpha(u)=(\alpha_1(u),\alpha_2(u),\ldots,\alpha_m(u))=(u_1,u_2,\ldots,u_m)$. We define the function such that $\alpha_i(u)$ is the name of the connected component of $G_i'$ that $u$ is a member of. This is the same way we defined $\pi$ in the previous section, but because we are using a new equivalence relation, we show that we now have an isomorphism. We now want to show that $\alpha$ is a bijection and that $uv$ is an edge if and only if $\alpha(u)\alpha(v)$ is an edge (and that they have the same weight if so), which will show that $\alpha$ is an isomorphism. We use Lemma \ref{claim:factorization3} to show the first part of this fact.

\begin{lem}\label{claim:factorization3}
As defined in the preceding paragraph, $\alpha$ is a bijection.
\end{lem}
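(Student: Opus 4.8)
The plan is to prove injectivity and surjectivity separately, mirroring the structure of Claim \ref{claim:psuedofactor3} but now exploiting the stronger ``exactly one'' conclusion of Lemma \ref{claim:factorization1} (as opposed to the ``at most one'' of Lemma \ref{claim:pseudofactor1}) to get surjectivity, which was not needed for pseudofactorization.

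For \textbf{injectivity}, I would argue exactly as in Claim \ref{claim:psuedofactor3}: given distinct $u,v\in V(G)$, take a shortest path $P$ from $u$ to $v$. Since $u\ne v$, $P$ has at least one edge, which lies in some equivalence class $E_k$ of $(\theta\cup\tau)^*$. By Lemma \ref{claim:factorization2}, part 2, every path from $u$ to $v$ contains an edge of $E_k$, so $u$ and $v$ lie in different connected components of $G_k'$; hence $\alpha_k(u)\ne\alpha_k(v)$ and so $\alpha(u)\ne\alpha(v)$.

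For \textbf{surjectivity}, I would show that the image of $\alpha$ is both nonempty and closed under taking neighbors in $\Pi_iG_i^*$; since $\Pi_iG_i^*$ is connected, this forces $\alpha$ to be onto. Concretely, fix a target vertex $(c_1,\dots,c_m)\in V(\Pi_iG_i^*)$, where $c_k$ names a connected component $C_{c_k}$ of $G_k'$. Starting from $\alpha(u_0)$ for any $u_0\in V(G)$, I would walk toward the target one coordinate at a time: if $\alpha(u)$ and the target differ in coordinate $k$, then in $G_k^*$ there is an edge (or a path of edges) from $\alpha_k(u)$ toward $c_k$; the first such edge, say $ab$ with $a=\alpha_k(u)$, corresponds by construction of $E(G_k^*)$ to an actual edge of $G$ in $E_k$ between the components $C_a$ and $C_b$, and by Lemma \ref{claim:factorization1} \emph{every} vertex of $C_a$ — in particular $u$ itself — has an edge to (exactly one) vertex $u'$ of $C_b$. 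Then $\alpha(u')$ agrees with $\alpha(u)$ in all coordinates except $k$, where it has advanced one step toward $c_k$ (the other coordinates are unchanged because the edge $uu'\in E_k$ has its parent only in $G_k^*$). Iterating, I reach a vertex of $G$ mapping exactly to $(c_1,\dots,c_m)$. This also simultaneously verifies that $uv\in E(G)\iff \alpha(u)\alpha(v)\in E(\Pi_iG_i^*)$ with matching weights, though the lemma statement only asks for bijectivity here.

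The \textbf{main obstacle} is the surjectivity argument: I must be careful that ``advancing one coordinate'' is well-defined, i.e. that pushing $u$ across an $E_k$-edge really changes only coordinate $k$ of $\alpha(u)$ and does so monotonically toward the target, so that the process terminates. This is exactly where Lemma \ref{claim:factorization1}'s guarantee of \emph{existence} (not merely uniqueness) of the crossing edge $uu'$, together with the fact that edges of $E_k$ are parents only in $G_k^*$, does the work; Lemma \ref{claim:factorization2} (part 1) ensures the component structure of $G_k'$ is faithfully recorded by $G_k^*$, so a shortest path in $G_k^*$ from $\alpha_k(u)$ to $c_k$ gives a finite sequence of such steps.
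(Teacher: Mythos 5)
Your proof is correct and follows essentially the same route as the paper: injectivity via Lemma \ref{claim:factorization2} exactly as in Claim \ref{claim:psuedofactor3}, and surjectivity by showing the image of $\alpha$ is closed under adjacency in $\Pi_iG_i^*$ using the ``exactly one'' guarantee of Lemma \ref{claim:factorization1} together with the fact that an $E_k$-edge changes only the $k$-th coordinate. The paper phrases surjectivity as a contradiction at a boundary edge between the image and its complement, while you walk constructively toward the target vertex, but the underlying mechanism is identical.
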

\begin{proof}
First, we show that $\alpha$ is an injection (i.e. $\alpha(u)\neq \alpha(v)$ for all $u,v\in V(G)$, $u\ne v$). This is identical to the proof that $\pi$ is an injection in Lemma \ref{claim:psuedofactor3}, but we reiterate it here using the terminology in this section. To show $\alpha$ is an injection, let $P$ be a shortest path between $u$ and $v$. We know that if one of the edges is in $E_k$, then $u$ and $v$ are in different connected components of $G_k'$ because Lemma \ref{claim:factorization2} says that all paths between the two nodes have an edge in $E_k$. We know that if $u\neq v$ then there is at least one edge in $P$ and thus there is at least one index $k$ on which $\alpha_k(u)\neq \alpha_k(v)$.

Now, we show that $\alpha$ is a surjection by showing that all nodes in $\Pi_iG_i$ are in the image of $\alpha$. Assume for contradiction that there is at least one node in the product not in the image of $\alpha$. Let $(u_1,u_2,\ldots,u_m)$ be one such node with an edge to a node $(u_1,\ldots,x_k,\ldots,u_m)$  in the image of $\alpha$ whose pre-image is $x$. Since the two nodes have an edge of some weight $w_{xu}$ between them, we know that there is an edge between $C_{u_k}$ and $C_{x_k}$ of weight $w_{xu}$ and Lemma \ref{claim:factorization1} tells us that every node in $C_{x_k}$ has an edge of that weight to exactly one node in $C_{u_k}$. Let $u'$ be the node in $C_{u_k}$ that $x$ has an edge to in $G$. Because there is an edge between $u'$ and $x$, we know that they appear in the same connected component for all $G_i'$ with $i\neq k$. This tells us $\alpha_i(x)=\alpha_i(u')$ for all $i\neq k$ and thus $\alpha(u')=(u_1,u_2,\ldots,u_m)$, which means $(u_1,u_2,\ldots,u_m)$ is in the image of $\alpha$.
\end{proof}

Finally, we use the next theorem to show that $\alpha$ is an isomorphism. 
\begin{thm}\label{claim:factorization4}
For $u,v\in V(G)$, $u\ne v$, $uv\in E(G)\iff \alpha(u)\alpha(v)\in E(\Pi_iG_i^*)$. If they do form edges, they have the same weight. Thus, Algorithm \ref{alg:general-alg} on an input $(G,(\theta\cup\tau)^*)$ outputs a factorization of $G$.
\end{thm}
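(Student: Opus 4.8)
The plan is to prove the biconditional $uv \in E(G) \iff \alpha(u)\alpha(v) \in E(\Pi_i G_i^*)$ together with weight preservation, which combined with Lemma \ref{claim:factorization3} (that $\alpha$ is a bijection) gives that $\alpha$ is a graph isomorphism, hence that Algorithm \ref{alg:general-alg} on $(G,(\theta\cup\tau)^*)$ outputs a factorization.

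For the forward direction, suppose $uv \in E(G)$ and $uv \in E_k$. Since $E_k$ is a single equivalence class of $(\theta\cup\tau)^*$, by Lemma \ref{claim:factorization2}(1) every path from $u$ to $v$ uses an edge of $E_k$; in particular, for $i \ne k$, $u$ and $v$ lie in the same connected component of $G_i'$, so $\alpha_i(u) = \alpha_i(v)$. It remains to see that $\alpha_k(u) \ne \alpha_k(v)$: I would argue that $u$ and $v$ lie in distinct components of $G_k'$ (the edge $uv$ itself is removed, and any other $u$–$v$ path again must contain an $E_k$-edge, so no $E_k$-free $u$–$v$ path exists). Thus $\alpha(u)$ and $\alpha(v)$ differ in exactly the $k$-th coordinate, and since $\alpha_k(u)\alpha_k(v)$ is an edge of $G_k^*$ with weight $w_G(uv)$ by construction (this is exactly what Lemma \ref{claim:factorization1} guarantees is well-defined), $\alpha(u)\alpha(v) \in E(\Pi_i G_i^*)$ with the same weight.

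For the reverse direction, suppose $\alpha(u)\alpha(v) \in E(\Pi_i G_i^*)$. By definition of the Cartesian product, $\alpha(u)$ and $\alpha(v)$ differ in exactly one coordinate, say the $k$-th, and $\alpha_k(u)\alpha_k(v) \in E(G_k^*)$. By the edge-set construction in Algorithm \ref{alg:general-alg}, this means there is \emph{some} edge $xy \in E_k$ with $x \in C_{\alpha_k(u)}$, $y \in C_{\alpha_k(v)}$. Here I would invoke the strengthened conclusion of Lemma \ref{claim:factorization1}: for \emph{every} node of $C_{\alpha_k(u)}$ — in particular for $u$ — there is exactly one node of $C_{\alpha_k(v)}$ to which it is joined by an edge of weight $w_{ab}$; call it $u'$. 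Since $uu' \notin E_k$ would force $u$ and $u'$ into the same component of $G_k'$ (impossible, as they are in $C_{\alpha_k(u)}$ and $C_{\alpha_k(v)}$), we get $uu' \in E_k$. Now I need $u' = v$: but $uu' \in E_k$ means $u$ and $u'$ agree in all coordinates $i \ne k$ of $\alpha$, so $\alpha(u') $ agrees with $\alpha(u)$ off coordinate $k$ and lies in $C_{\alpha_k(v)}$ on coordinate $k$, i.e. $\alpha(u') = \alpha(v)$; since $\alpha$ is a bijection (Lemma \ref{claim:factorization3}), $u' = v$. Hence $uv \in E(G)$ with weight $w_G(uv) = w_{ab} = w_{G_k^*}(\alpha_k(u)\alpha_k(v))$.

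The main obstacle I expect is the bookkeeping in the reverse direction: carefully passing from the mere \emph{existence} of an $E_k$-edge between the two components (which is all the algorithm's edge-set definition gives) to an $E_k$-edge incident to the specific vertex $u$, and then pinning down that its other endpoint is exactly $v$ rather than some other vertex of the same component. This is precisely where the ``exactly one'' strengthening of Lemma \ref{claim:factorization1} (beyond the ``at most one'' of Lemma \ref{claim:pseudofactor1}) is essential, and where the injectivity of $\alpha$ is used to collapse the candidate endpoint to $v$. The weight-equality claim then rides along for free, since all three relevant edges ($xy$, $uu'=uv$, and the product edge) share the common weight $w_{ab}$ by Lemma \ref{claim:factorization1} and the definition of $G_k^*$.
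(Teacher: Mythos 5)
Your proposal is correct and follows essentially the same route as the paper's proof: both hinge on the ``exactly one'' conclusion of Lemma \ref{claim:factorization1} to produce an $E_k$-edge $uu'$ incident to $u$, and on the bijectivity of $\alpha$ from Lemma \ref{claim:factorization3} to force $u'=v$. The only difference is organizational --- you argue the two directions of the biconditional separately, while the paper cases on whether $\alpha(u)$ and $\alpha(v)$ differ in one or more coordinates --- and this does not change the substance.
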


\begin{proof}
We divide this proof into two cases based on the number of indices $i$ on which $\alpha_i(u)\neq \alpha_i(v)$. We note that there must be at least one such index, as $\alpha$ is a bijection. 

\begin{enumerate}
    \item Case 1: There is more than one index on which $\alpha_i(u)\neq \alpha_i(v)$. We know that there is no edge between $\alpha(u)$ and $\alpha(v)$ in this case, by definition of the Cartesian product. We also know that there are two $G_i'$ in which $u$ and $v$ are in different connected components, which is impossible if there is an edge between them, as that edge is a path between them in all but one $G_i'$. Thus, there is no edge between $u$ and $v$ either.
    
    \item Case 2: There is exactly one $k$ on which $\alpha_k(u)\neq \alpha_k(v)$. If there is no edge between $\alpha_k(u)$ and $\alpha_k(v)$ in $G_k^*$, then we know that there is no edge in $G_k'$ between $C_{\alpha_k(u)}$ and $C_{\alpha_k(v)}$ so we can't have an edge between $u$ and $v$ (as that would create such an edge). If there is any edge between $\alpha_k(u)$ and $\alpha_k(v)$ in $G_k^*$ of weight $w_{uv}$ then by Lemma \ref{claim:factorization1}, we have that all nodes in $C_{\alpha_k(u)}$ have an edge to exactly one node in $C_{\alpha_k(v)}$. Let $v'\in C_{\alpha_k(v)}$ be the node that $u$ has an edge to. Because they share an edge, $\alpha_i(v)=\alpha_i(u)=\alpha_i(v')$ for all $i\neq k$ so $\alpha(v)=\alpha(v')$. Since $\alpha$ is a bijection, this means $v=v'$ and thus $uv$ is an edge of the same weight as $\alpha(u)\alpha(v)$.
\end{enumerate}
The two parts of the proof together show the theorem.
\end{proof}

Theorem \ref{claim:factorization4} shows that there is an isomorphism between the input graph to Algorithm \ref{alg:general-alg} and the Cartesian product of the output graphs when the input relation is $(\theta\cup\tau)^*$, implying an $O(|E|^2)$ (plus APSP) time factorization of $G$.  (We will elaborate on this runtime in the following section.) We can also show that the output graphs themselves are prime. We do so using the following lemma.

\begin{lem}\label{claim:output-prime}
If $G$ is the input graph to Algorithm \ref{alg:general-alg} with $(\theta\cup\tau)^*$, the output graphs are all prime.
\end{lem}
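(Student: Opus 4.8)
The plan is to mirror the proof of Lemma~\ref{claim:output-irreducible}, but accommodating the new relation $\tau$. By Lemma~\ref{claim:same-factor} a graph is prime once all of its edges lie in a single equivalence class of $(\theta\cup\tau)^*$, so it suffices to fix an output graph $G_j^*$ and show that any two of its edges are related by $(\theta\cup\tau)^*$ \emph{evaluated inside $G_j^*$}. Using the isomorphism $\alpha:V(G)\to V(\prod_iG_i^*)$ of Theorem~\ref{claim:factorization4}, every edge of $G_j^*$ is $\alpha_j(u)\alpha_j(v)$ for some edge $uv\in E_j$, and conversely every $uv\in E_j$ has its parent edge under $\alpha$ in factor $j$: the single-edge path $uv$ is disjoint from each $E_\ell$, $\ell\ne j$, so $\alpha_\ell(u)=\alpha_\ell(v)$ for $\ell\ne j$, while $\alpha_j(u)\ne\alpha_j(v)$ by Lemma~\ref{claim:factorization2}(1). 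So, given $f,f'\in E(G_j^*)$, pick $e,e'\in E_j$ with parent edges $f,f'$; since $E_j$ is one equivalence class of $(\theta\cup\tau)^*$ on $G$, there is a chain $e=e_1,e_2,\dots,e_t=e'$ in $E_j$ with consecutive edges related by $\theta$ or $\tau$, and each $e_i$ has parent edge $f_i\in E(G_j^*)$ with $f_1=f$, $f_t=f'$. For a link with $e_i\thetarel e_{i+1}$, the argument of Lemma~\ref{claim:output-irreducible} applies verbatim: the theta-difference of $e_i,e_{i+1}$ in $G$ decomposes over the factors (since $\alpha$ is an isometric embedding and by the path-decomposition property), all summands with index $\ne j$ vanish, so this theta-difference equals that of $f_i,f_{i+1}$ in $G_j^*$, which is therefore nonzero; hence $f_i\thetarel f_{i+1}$ in $G_j^*$.

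For a link with $e_i\taurel e_{i+1}$, write $e_i=uv$, $e_{i+1}=uv'$ (they share an endpoint), so $f_i=\alpha_j(u)\alpha_j(v)$ and $f_{i+1}=\alpha_j(u)\alpha_j(v')$ share the endpoint $\alpha_j(u)$. If $f_i=f_{i+1}$ we are done; otherwise suppose for contradiction that $f_i,f_{i+1}$ satisfy the square property in $G_j^*$, witnessed by a vertex $z\in V(G_j^*)$ completing the four-cycle $\alpha_j(u)\,\alpha_j(v)\,z\,\alpha_j(v')$ with equal opposite weights and $\theta$-related opposite edges. Since $\alpha$ is a bijection, the product vertex agreeing with $\alpha(u)$ in every coordinate $\ell\ne j$ and equal to $z$ in coordinate $j$ has a preimage $x\in V(G)$. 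Using that $uv$ and $uv'$ have parents in factor $j$ and that $\alpha_j(v)z$, $z\alpha_j(v')\in E(G_j^*)$, one checks that $vx,v'x\in E(G)$ with parent edges (in factor $j$) $\alpha_j(v)z$ and $z\alpha_j(v')$ respectively, so $uvxv'$ is a four-cycle in $G$. Transferring weights through $\alpha$ gives $w_G(uv)=w_G(xv')$ and $w_G(uv')=w_G(xv)$, and the theta-difference decomposition (as above, with the relevant parents all in factor $j$) turns the $\theta$-relations of the opposite edges in $G_j^*$ into $uv\thetarel xv'$ and $uv'\thetarel xv$ in $G$. Hence $uv$ and $uv'$ satisfy the square property in $G$, contradicting $e_i\taurel e_{i+1}$. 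So $f_i,f_{i+1}$ fail the square property in $G_j^*$, i.e.\ $f_i\taurel f_{i+1}$ there. Composing the links along the chain yields $f\mathbin{(\theta\cup\tau)^*}f'$ in $G_j^*$, which is what we needed.

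The main obstacle is the $\tau$-link step. Unlike the pseudofactorization case, it cannot be reduced to distance-only arguments or quoted from Lemma~\ref{claim:same-factor}, because the square property is a structural condition on four-cycles and their weights, so a witnessing four-cycle in $G_j^*$ must be \emph{lifted} back to $G$. This is precisely where the full isomorphism $G\cong\prod_iG_i^*$ (Theorem~\ref{claim:factorization4}) is essential rather than mere isometric-subgraph embeddability, and the care needed is in the coordinate bookkeeping for $\alpha$: verifying that the lifted vertex $x$ exists, that $u,v,x,v'$ are four distinct vertices, that all four cycle edges are present, and that weights and $\theta$-relations propagate correctly in both directions.
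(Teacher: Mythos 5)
Your proposal is correct, and it takes a more explicit route than the paper's own (very terse) proof. Both arguments reduce, via Lemma~\ref{claim:same-factor}, to showing that all edges of each output graph $G_j^*$ lie in a single equivalence class of $(\theta\cup\tau)^*$ evaluated on $G_j^*$ itself. The paper then argues top-down: a copy of $G_j^*$ appears as an isometric subgraph (a layer) of $\prod_i G_i^*$ and hence, via $\alpha^{-1}$, of $G$; the edges of that copy all lie in the single class $E_j$ of $G$, and the relation is asserted to transfer to $G_j^*$. This leaves implicit two points that your proof handles head-on: (i) a chain in $G$ witnessing $e\mathbin{(\theta\cup\tau)^*}e'$ for two layer edges may pass through edges outside the layer, so one must project the chain coordinate-wise rather than restrict it; and (ii) the $\tau$-relation, unlike $\theta$, is not a pure distance condition, so its transfer requires lifting a hypothetical witnessing square from $G_j^*$ back to $G$. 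Your link-by-link projection — reusing the factor decomposition of the theta-difference from Lemma~\ref{claim:output-irreducible} for $\theta$-links, and using the bijectivity of $\alpha$ to lift a four-cycle (checking vertex distinctness, edge existence, weights, and the $\theta$-relations of opposite sides) for $\tau$-links — supplies exactly the missing details, and correctly isolates where the full isomorphism of Theorem~\ref{claim:factorization4}, rather than mere isometric embeddability, is needed. The paper's version is shorter; yours is the one a careful reader would want.
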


\begin{proof}
Using Lemma \ref{claim:same-factor}, we only have to show that the edges in each factor are all related by $(\theta\cup\tau)^*$ (when evaluated on the factor itself). We know that a factor $G_i^*$ appears as an isometric subgraph of $\Pi_iG_i^*$ (and thus as an isometric subgraph of $G$) by definition of the Cartesian product. We know that all edges in this isometric subgraph are in the same equivalence class of $(\theta\cup\tau)^*$ because their pre-images under $\alpha$ are and $\alpha$ is an isomorphism. This implies their parent edges are all related by $(\theta\cup\tau)^*$ so $G_i^*$ is prime.
\end{proof}

Using the above lemma and theorem, we see that the algorithm in question produces a prime factorization of the input graph.

\subsection{Uniqueness of prime factorization}\label{sec:factorization-unique}
We additionally claim that for any graph $G$, there is at most one set $\{G_1,G_2,\ldots,G_m\}$ (up to graph isomorphisms) of graphs that form a prime factorization of $G$. 

\begin{thm}\label{claim:factorization-uniqueness}
For a given weighted graph $G$ and any two factorizations $\mathcal{G} = \{G_1, G_2, \dots, G_m\}$ and $\mathcal{G}' = \{G'_1, G'_2, \dots, G'_{m'}\}$ of $G$, $m=m'$ and there is a bijection $f : \mathcal{G} \to \mathcal{G}'$ such that $G_i$ is isomorphic to $G'_i$ for all $i, 1\le i \le m$.
\end{thm}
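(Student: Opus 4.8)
The plan is to mimic the structure of the proof of Theorem \ref{thm:pseudofactorization-unique}, but with a significant simplification available because both factorizations are genuine isomorphisms (not just isometric embeddings into a product), so every vertex of $G$ corresponds to a full tuple in each product and no ``missing vertex'' arguments are needed. First I would fix isomorphisms $\alpha : V(G) \to V(\prod_{i=1}^m G_i)$ and $\beta : V(G) \to V(\prod_{i=1}^{m'} G'_i)$ with $\alpha = (\alpha_1,\dots,\alpha_m)$, $\beta = (\beta_1,\dots,\beta_{m'})$. The key bookkeeping object is the assignment of each edge $uv \in E(G)$ to the unique factor index in which its parent edge under $\alpha$ (resp.\ $\beta$) lives; this partitions $E(G)$ into $m$ classes (resp.\ $m'$ classes), and I claim both partitions coincide with the partition of $E(G)$ into $\factorrel$ equivalence classes.

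The main step is to show: two edges $uv, xy \in E(G)$ have parent edges under $\alpha$ in the same factor $G_i$ if and only if $uv \factorrel xy$. The ``if'' direction is exactly Lemma \ref{claim:same-factor}. For the ``only if'' direction, I would argue as in Theorem \ref{thm:pseudofactorization-unique}: suppose $uv$ and $xy$ both have parent edges in $G_j$; since $G_j$ is prime (Lemma \ref{claim:output-prime} gives that the canonical factors are prime, and more to the point, since $\{G_1,\dots,G_m\}$ is a prime factorization each $G_j$ is prime by hypothesis), $G_j$ has a single equivalence class of $\factorrel$ on its own edges, so the parent edges $\alpha_j(u)\alpha_j(v)$ and $\alpha_j(x)\alpha_j(y)$ are connected by a $(\theta\cup\tau)$-chain inside $G_j$. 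I then need to lift such a chain back up to a $(\theta\cup\tau)$-chain in $G$. For $\theta$: using the path-decomposition property of the Cartesian product, $\edgediff{G}{u}{v}{x}{y} = \sum_i \edgediff{\alpha_i}{u}{v}{x}{y} = \edgediff{\alpha_j}{u}{v}{x}{y}$ since only the $j$-th summand can be nonzero (the other edges have parent edges outside $G_j$), so a $\theta$-relation downstairs in $G_j$ pulls back to a $\theta$-relation in $G$. For $\tau$: if two edges $\alpha_j(u)\alpha_j(v)$ and $\alpha_j(u)\alpha_j(v')$ (sharing the endpoint $\alpha_j(u)$, with common preimage $u$) satisfy the square property in $G_j$ via a fourth vertex, then taking the product of that square with the ``trivial'' coordinates reconstructs a square in $\prod_i G_i$, and pulling back through the bijection $\alpha$ gives a square in $G$ whose opposite-edge weight equalities and $\theta$-relations are inherited coordinatewise — exactly the square property for the two edges in $G$. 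Hence the chain lifts, giving $uv \factorrel xy$.

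From this equivalence it follows that both $\{G_i\}$ and $\{G'_i\}$ are in bijection with the set of $\factorrel$-equivalence classes of $E(G)$, so $m = m' = $ (number of classes); reindex so that $G_i$ and $G'_i$ both carry parent edges of the class $E_i$. Finally I would construct the isomorphism $G_i \cong G'_i$ exactly as in Theorem \ref{thm:pseudofactorization-unique}: for fixed $i$ and $u \in V(G)$, show that $\{u' : \alpha_i(u) = \alpha_i(u')\}$ equals the set of $u'$ reachable from $u$ by a path avoiding $E_i$, which (by the same argument, now with a shortest path and the additivity $d_G(u,u') = \sum_k c_k$ where $c_k$ is the weight-sum of the $E_k$-edges on the path) also equals $\{u' : \beta_i(u) = \beta_i(u')\}$; this yields a well-defined bijection $f_i : V(G_i) \to V(G'_i)$, and condition-$4$-style reasoning together with the edge-weight constraint from the definition of factorization shows $f_i$ preserves edges and weights in both directions. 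I expect the main obstacle to be the $\tau$-lifting step — verifying carefully that a square witnessing the square property in a single factor $G_j$, when padded out with the fixed coordinates and pulled back through $\alpha$, really does produce a square in $G$ satisfying \emph{all} four conditions of Definition \ref{def:square-prop} (the two weight equalities and the two $\theta$-relations), since the $\theta$-relations must be checked with respect to $d_G$ rather than $d_{G_j}$; but this is essentially the content of part (2) of the proof of Lemma \ref{claim:same-factor} read in reverse, so it should go through.
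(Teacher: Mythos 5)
Your overall strategy is the same as the paper's: define the correspondence between factors via the rule ``two edges of $G$ have parent edges in the same factor if and only if they are related by $\factorrel$,'' deduce that both factorizations are indexed by the $\factorrel$-equivalence classes of $E(G)$, and then match up corresponding factors by an isomorphism. In fact your write-up is more careful than the paper's in the one place that matters: the paper simply asserts the ``same factor $\Rightarrow$ related by $\factorrel$'' direction (citing Lemma \ref{claim:factorization1}, which is about cross-edges between connected components and does not give this), whereas you correctly identify that this direction needs the primality of each $G_j$ together with a lifting of $(\theta\cup\tau)$-chains from $G_j$ back up to $G$. Your $\theta$-lifting is correct, since the coordinatewise decomposition makes the theta-difference in $G$ equal to the theta-difference of the parent edges in $G_j$.

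There is, however, one genuine error in the $\tau$-lifting step as you describe it: you have the implication backwards. A $\tau$-step of the chain in $G_j$ asserts that the two parent edges do \emph{not} satisfy the square property in $G_j$, so to lift it you must show that suitably chosen lifts $uv, uv'$ in $G$ (sharing the endpoint $u$ and agreeing on all coordinates other than $j$) also fail the square property in $G$. The contrapositive of that is: a square in $G$ witnessing the square property for $uv, uv'$ \emph{projects down} to a square in $G_j$ witnessing it for the parent edges. What you describe instead is lifting a witnessing square \emph{up} from $G_j$ to $G$, which proves the converse implication and does not lift the chain. The projection argument does work --- one checks that the fourth vertex $z$ of any witnessing square in $G$ must satisfy $\alpha_i(z)=\alpha_i(u)$ for all $i\ne j$, so $\alpha_j(z)$ closes a square in $G_j$ whose weight equalities and $\theta$-relations are inherited from those in $G$ by the same coordinatewise computation --- but it is a different argument from the one you sketch, so this step needs to be rewritten. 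You should also note, when splicing the lifted steps together, that any two lifts of the same edge of $G_j$ are $\theta$-related in $G$ (their theta-difference is $-2d_{G_j}$ of the parent edge), so the chain can be rerouted through whichever lifts the $\tau$-steps force you to use.
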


\begin{proof}
Let $\alpha$ and $\beta$ be isomorphisms from $G$ to the product of the graphs in $\mathcal{G}$ and $\mathcal{G'}$, respectively. We will define a mapping $f:\mathcal{G}\rightarrow \mathcal{G'}$ as follows. If there exists an edge 
$uv\in E(G)$ with $\alpha_i(u)\neq \alpha_i(v)$ and $\beta_j(u)\neq \beta_j(v)$, then $f(G_i)=G_j'$. 
We show that $f$ is a well-defined bijection. First, we note that by Lemma \ref{claim:factorization1}, two edges are related by $(\theta \cup \tau)^*$ if and only if they have parent edges in the same factor of any factorization. Thus, if there exists $uv,xy\in E(G)$ such that $uv$ and $xy$ have parent edges under $\alpha$ in factor $G_i$ of $\mathcal{G}$, then those edges must also have parent edges under $\beta$ in the same factor of $\mathcal{G'}$, meaning $G_i$ is mapped to exactly one $G_j'$. The reverse reasoning also shows that each graph in $\mathcal{G'}$ is mapped to by exactly one graph in $\mathcal{G}$. 

Now, we must show that for each $G_i\in \mathcal{G}$, $f(G_i)$ is isomorphic to $G_i$. We know that $G_i$ appears as a subgraph of $G$, with all edges related by $(\theta\cup\tau)^*$. Because this subgraph must appear in any Cartesian product and all edges must be related by this relation, we know that it must appear as a subgraph of $f(G_i)$. The reverse reasoning implies that $f(G_i)$ and $G_i$ are isomorphic.
\end{proof}

Thus, from this we get that Algorithm \ref{alg:general-alg} with input $(G,(\theta\cup\tau)^*)$ for a general graph $G$ produces a prime factorization of $G$.

\section{Computing the runtime of factorization and pseudofactorization}\label{sec:runtime-general}

In order to address runtime and prepare for the following section, we will discuss another view on how to compute the equivalence classes of the transitive closure of a  relation $R$ on the edges of a graph $G=(V,E,w)$ whose transitive closure is also symmetric and reflexive. To do this, we define a new unweighted graph $G_R=(V_R,E_R)$. The vertices of this graph are the edges of the original graph $G$ and there is an edge between two vertices if and only if the corresponding edges in the original graph are related by $R$. The connected components of the graph are then the equivalence classes of the original graph under the transitive closure of $ R$. The time to compute the connected components can be found using BFS in $O(|V_R|+|E_R|)$ time. As an upper bound, we know that once this graph is computed, $|V_R|=|E|$ and $|E_R|=O(|V_R|^2)=O(|E|^2)$. Thus, computing the connected components with BFS takes at most $O(|E|^2)$ time. If further bounds can be placed on the number of edges in $G_R$, this time can be decreased further. Figure \ref{fig:simple-relation-graph} shows an example of how $G_{\theta}$ is constructed for a given input graph.

\begin{figure}
    \centering
    \includegraphics[width=.8\linewidth]{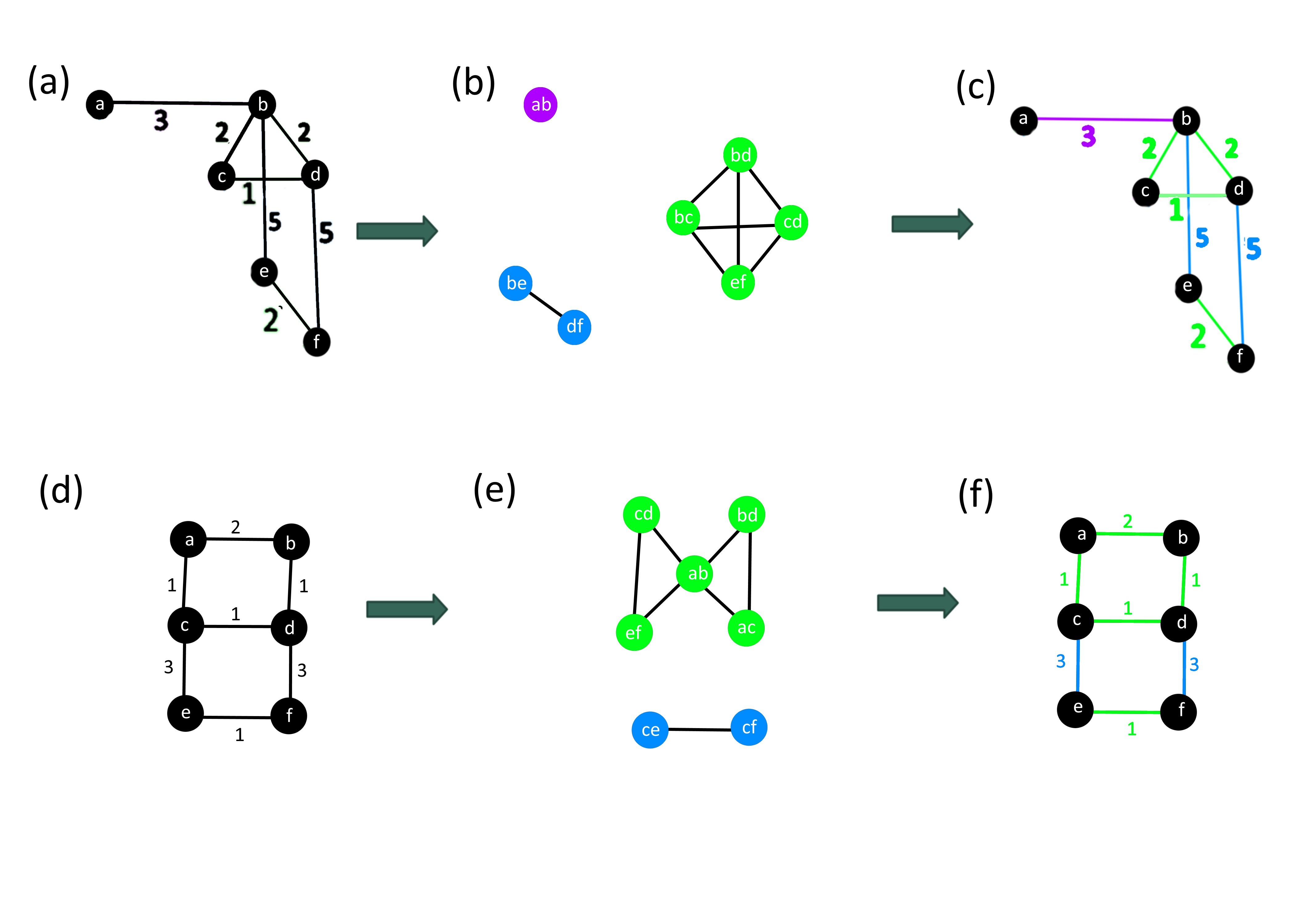}
    \caption[An example of how to construct $G_{\theta}$]{a) A weighted graph $G$. b) $G_\theta$. c) $G$ with edges colored according to their equivalence class under $\thetahrel$. Each equivalence class corresponds to a connected component of $G_\theta$. d-f) Analogous illustration for the graph shown in (d).}
    \label{fig:simple-relation-graph}
\end{figure}

\subsection{Runtime for pseudofactorization}\label{sec:pseudofactorization-runtime1}
To perform Graham and Winkler's algorithm, we compute the equivalence classes of $\hat \theta$, then for each equivalence class we perform linear time work by removing all edges in the equivalence class, computing the condensed graph, and checking which nodes in the new graph should have edges between them. (In this and all other computations in this paper, we don't actually have to check that all edges between connected components are equal, as that fact is guaranteed by the proofs in the previous sections.) In the worst case, each edge is in its own equivalence class, so we do $O(|V||E|+|E|^2)=O(|E|^2)$ work since the graph is connected. To compute the equivalence classes of $\hat \theta$, we compute $G_{\theta}$ and find the connected components. If we first find all pairs shortest path (APSP) distances, we can check if any pair of edges $ab,xy$ is related by $\theta$ in $O(1)$ time. Thus, once we have found all of these distances, we can compute all edges from a given node of $G_\theta$ in $O(|V_\theta|)=O(|E|)$ time, for a total of $O(|E|^2)$ time  and then we take $O(|E|^2)$ time to compute the connected components. Thus, total runtime is $O(|E|^2)$ plus APSP computation time, which is currently known to be $O(|V|^2\log\log |V| +|V||E|)$ \cite{apsp} and thus gives us a runtime of $O(|V|^2\log\log |V| + |E|^2)$ overall.

\subsection{Runtime for factorization\label{sec:factorization-runtime}}

For factorization using Algorithm \ref{alg:general-alg}, we can again bound the runtime of the main loop by $O(|E|^2)$ and thus just have to consider the time needed to compute the equivalence classes of $(\theta\cup\tau)^*$. We can first compute all distances using a known APSP algorithm. From here, we can determine all edges in $G_{\theta\cup \tau}$ that occur as a result of $\theta$ relations, and thus we only have to add in edges that occur as a result of $\tau$ relations. To do this, we can compare each pair of edges related by $\theta$. If we have two edges of the form $ab$ and $cd$ that are related by $\theta$ and have the same edge weight, we can check if $ac$ and $bd$ are edges and have equal weights and an edge between them in $G_\theta$. If so, we have that adjacent edges in this 4-cycle are not related by $\tau$. Compute the set $\mathcal{E}$ of all adjacent edges of $G$ not related by $\tau$, and then we can compute the set $\mathcal{E'}$ of all pairs of edges that are adjacent but not in $\mathcal{E}$. Add these edge pairs as edges in $G_\theta$ to compute $G_{\theta\cup\tau}$. For a given pair of edges in $G$ their contribution to $\mathcal{E}$ is found in constant time for $O(|E|^2)$ time to compute $\mathcal{E}$, and $\mathcal{E}'$ may also be found in $O(|E|^2)$ time given $\mathcal{E}$, so it is $O(|E|^2)$ time to construct the new graph. Thus, it is $O(|E|^2)$ total time to construct $G_{\theta\cup \tau}$ and another $O(|E|^2)$ to get its connected components. This brings our total runtime for graph factorization up to $O(|E|^2)$ plus APSP time, for $O(|V|^2\log\log |V|+|E|^2)$ overall.

\section{An algorithm for improved pseudofactorization runtime}\label{sec:pseudofactorization-runtime2}

In the section on pseudofactorization, we showed that Graham and Winkler's algorithm on a weighted graph can be used to pseudofactor a minimal weighted graph. For a weighted graph, this algorithm takes $O(|E|^2)$ time plus the time to compute all pairs shortest paths, which is currently bounded at $O(|V|^2\log \log |V|+|V||E|)$ \cite{apsp}. Thus, this algorithm takes $O(|V|^2\log\log |V| +|E|^2)$ time. For unweighted graphs, this time is just $O(|E|^2)$. Feder \cite{feder} showed that for unweighted graphs, rather than using the equivalence classes of $\hat\theta $, the same algorithm could use an alternative equivalence relation $\hat\theta_T$, which has the same equivalence classes but whose classes are faster to compute. Given a spanning tree $T$ of the graph, two edges $ab,xy$ are related by $\theta_T$ if and only if $ab \ \theta \ xy$ and at least one of $ab,xy$ is in the spanning tree $T$. As before, $\hat\theta_T$ is the transitive closure of $\theta_T$. Feder showed that for an arbitrary tree $T$, this equivalence relation could be used in Graham and Winkler's algorithm to produce a psuedofactorization of an unweighted input graph. Because $G_{\theta_T}$ can be computed in $O(|V||E|)$ time and $\hat\theta_T$ can only have up to $|V|-1$ equivalence classes, this brings the total time for computing the pseudofactorization of an unweighted graph down to $O(|V||E|)$. 

In the case of weighted graphs, we will show that we can find a tree $T^*$ such that $\hat \theta_{T^*}$ has the same equivalence classes as $\hat \theta$. First, we reiterate proof that $\hat\theta_{T^*}$ is an equivalence relation by showing that $\hat\theta_T$ is an equivalence relation on the edges of $G$ for any spanning tree $T$ of $G$. (This is no different than for unweighted graphs, a proof of which can be found in \cite{feder}, but for completeness we reiterate it here for weighted graphs.) First, since $\hat\theta_T$ is by definition transitive, we only need to show reflexivity and symmetry. The relation is clearly symmetric because $\theta$ is symmetric. Take $ab\in E(G)$. Because $T$ is a spanning tree, we know there is a path $P=(a=p_0,p_1,\ldots,p_n=b)$ from $a$ to $b$ that uses only edges in $T$. We get that $\sum_{i}[d(a,p_i)-d(a,p_{i-1})]-[d(b,p_i)-d(b,p_{i-1})]=[d(a,b)-d(a,a)]-[d(b,b)-d(b,a)]=2d(a,b)\neq 0$, so there must exist $p_jp_{j-1}$ such that $ab\thetarel p_jp_{j-1}$, and since $p_jp_{j-1}\in T$, this means that $ab\ \theta_T \ p_jp_{j-1} \ \theta_T \ ab$, where the last step is by symmetry. This means that $ab\ \hat\theta_T \ ab$ and thus the relation is reflexive as well. Thus, $\hat\theta_T$ is an equivalence relation for any $T$. We make the following claim  about the equivalence classes:

\begin{claim}\label{claim:equiv-classes-as-subsets}
Let $G$ be a graph with a spanning tree $T$ and let $uv\in E(G)$ be such that $[uv]_{\hat\theta_T}$ is the equivalence class under $\hat\theta_T$ with $uv$ and $[uv]_{\hat\theta}$ is the equivalence class under $\hat\theta$ with $uv$. Then $[uv]_{\hat\theta_T}\subseteq [uv]_{\hat\theta}$.
\end{claim}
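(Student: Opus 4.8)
The plan is to observe that $\theta_T$ is by construction a subrelation of $\theta$, deduce that the same containment passes to transitive closures, and then rephrase this as a containment of equivalence classes.

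First I would record the definitional fact: if $ab\ \theta_T\ xy$, then in particular $ab\ \theta\ xy$, since $\theta_T$ only adds to the requirement $ab\ \theta\ xy$ the further condition that at least one of $ab, xy$ lie in $T$ — it never relates a pair that $\theta$ does not. Hence $\theta_T \subseteq \theta$ as relations on $E(G)$.

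Next I would invoke monotonicity of transitive closure. Suppose $uv\ \hat\theta_T\ xy$. Because $\theta_T$ is symmetric and reflexive (as verified in the paragraph immediately preceding the claim), $\hat\theta_T$ is precisely the set of pairs of edges joined by a finite $\theta_T$-chain, so there is a sequence $e_1 = uv, e_2, \dots, e_\ell = xy$ with $e_i\ \theta_T\ e_{i+1}$ for each $i$. Applying $\theta_T \subseteq \theta$ link by link, the same sequence witnesses $e_i\ \theta\ e_{i+1}$ for each $i$, so $uv\ \hat\theta\ xy$. Thus $\hat\theta_T \subseteq \hat\theta$. Finally, fixing an arbitrary $xy \in [uv]_{\hat\theta_T}$ gives $uv\ \hat\theta_T\ xy$, hence $uv\ \hat\theta\ xy$, i.e. $xy \in [uv]_{\hat\theta}$; since $xy$ was arbitrary, $[uv]_{\hat\theta_T} \subseteq [uv]_{\hat\theta}$.

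I do not expect a genuine obstacle in this claim; the only point needing care is that an ``equivalence class of $\hat\theta_T$'' is well defined, which is exactly what the preceding paragraph establishes. The real content — that for a suitably chosen spanning tree $T^*$ the reverse inclusion also holds, so the two partitions of $E(G)$ coincide — is deferred to the later results, and that is where the work lies.
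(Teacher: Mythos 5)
Your proof is correct and follows essentially the same route as the paper's: note that $\theta_T$ is a subrelation of $\theta$, pass along a $\theta_T$-chain witnessing $uv\ \hat\theta_T\ xy$ to obtain a $\theta$-chain, and conclude $xy\in[uv]_{\hat\theta}$. No issues.
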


\begin{proof}
Take $xy\in [uv]_{\hat\theta_T}$. Since $uv\ \hat\theta_T \ xy$, there must exist a sequence of edges such that \\ $uv=u_0v_0 \thetarel_T u_1v_1\thetarel_T \cdots \thetarel_T u_nv_n=xy$. We have that $u_iv_i\thetarel_T u_{i+1}v_{i+1}$ if and only if $u_iv_i \thetarel u_{i+1}v_{i+1}$ and at least one edge is in the tree. Thus, we know that $uv=u_0v_0 \thetarel u_1v_1\thetarel \cdots \thetarel u_nv_n=xy$, which means $uv\ \hat \theta \ xy$. Thus $xy\in [uv]_{\hat\theta}$.
\end{proof}

In fact, while we have not proven that $[uv]_{\hat\theta_{T}} = [uv]_{\hat\theta}$ for any $T$, we can prove that there is some tree $T^*$ such that $[uv]_{\hat\theta_{T^*}} = [uv]_{\hat\theta}$ for all $uv \in E(G)$. Algorithm \ref{alg:faster-alg} finds such a $T^*$, and once $T^*$ and APSP distances are found we can compute the equivalence classes of $\hat\theta_{T^*}$ in $O(|V||E|)$ time, as this only requires comparing each edge in $T^*$ to each edge in the graph and computing whether the two edges are related by $\theta$, which can be done in constant time for each pair.

Informally, Algorithm \ref{alg:faster-alg} starts with any spanning tree and repeatedly modifies it until the equivalence classes under $\hat\theta_T$ equal those under $\hat\theta$. It does so by looking at a particular equivalence class for $\hat\theta_T$ and for each edge in that equivalence class, checking that every edge on the simple path between its endpoints through the tree is in the current equivalence class or an already processed one, and swapping the edge in question into the tree if this does not hold. The idea behind this process is to try to grow the equivalence class we're working on as much as possible until it is the same as the equivalence class under $\hat\theta$. 

\begin{algorithm}
\caption{Algorithm for breaking up a graph over a relation}
\label{alg:faster-alg}
\textbf{Input}: A weighted graph $G$ and all pairs of distances between nodes. 

\textbf{Output}: A tree $T^*$ such that the equivalence classes of $\hat\theta$ are the same as those of $\hat\theta_{T^*}$ on $G$.

\begin{algorithmic}
    \State  Using BFS or DFS, find any spanning tree of $G$, which we will call $T$.
    \State Compute $G_{\theta_T}$
    \State Set $undiscovered \leftarrow V(G_{\theta_T})$
    \State Set \textit{current-class} $\leftarrow 0$
    \While{$undiscovered$ is not empty}
        \State Pick $xy\in undiscovered$
        \State Run BFS to find all nodes reachable from $xy$ in $G_{\theta_T}$ 
        \State Mark all newly discovered edges of $G$ with the number \textit{current-class}
        \State Set $reachable\leftarrow$ all nodes reachable from $xy$ in $G_{\theta_T}$
        \While{$reachable$ is not empty} 
            \State Pick $ab\in reachable$
            \State Find the path $P$ from $a$ to $b$ in the tree $T$
            \If{there is an edge $uv\in P$ that is not marked with a class number} 
                \State Create a new spanning tree $T'$ from $T$ by adding $ab$ and removing $uv$
                \State Set $T\leftarrow T'$
                \State Update $G_{\theta_T}$ based on the new $T$ 
                \State Update $reachable$ to add any newly reachable nodes from $xy$ in $G_{\theta_T}$
                \State Mark any newly reachable edges with \textit{current-class}
            \EndIf 
            \State Remove $ab$ from $reachable$ and from $undiscovered$
        \EndWhile
        
        \State \textit{current-class} $\leftarrow $ \textit{current-class} $+1$
    \EndWhile
    \State return $T$
            

\end{algorithmic}
\end{algorithm}

We will first justify that this algorithm works correctly and then that it runs in $O(|V||E|)$ time. In particular, during the runtime section, we will go into more detail about how to update $G_{\theta_T}$ efficiently, but for now we take for granted that we update the graph correctly whenever needed.

\subsection{Correctness of Algorithm \ref{alg:faster-alg}}

To prove that Algorithm \ref{alg:faster-alg} works correctly, we will show an invariant for the outer while loop.

\textbf{Invariant}: At the top of the while loop, if an edge $xy$ is not in the set $undiscovered$, then for the current tree $T$ $[xy]_{\hat\theta_T}=[xy]_{\hat\theta}$ where $[xy]_R$ is the equivalence class of $xy$ under the equivalence relation $R$.

This invariant clearly holds in the base case, as in the first round of the while loop, all edges are in $undiscovered$ and thus this is vacuously true. 

Note that $[xy]_{\hat\theta_T}=[xy]_{\hat\theta}$ if and only if $xy$'s connected component in $G_{\theta_T}$ includes the same nodes as that in $G_\theta$. Thus, another way of  phrasing the invariant is by saying that for any $xy\in V(G_{\theta_T})$ such that $xy\notin undiscovered$, $C_{\theta_T}(xy)$ (the connected component in $G_{\theta_T}$ containing $xy$) has the same nodes as $C_\theta(xy)$ (the connected component in $G_\theta$ containing $xy$).

Consider a particular iteration of the while loop in which the edge we select from $undiscovered$ is $xy$. Throughout the loop, we add and remove some edges from $G_{\theta_T}$ as we alter the tree. In particular, when we remove an edge $uv$ from $T$, we may remove some edges that are incident to the node $uv$ in $G_{\theta_T}$. Removing edges from the tree $T$ to create a new tree $T'$ can potentially cause two edges, $u_1v_1$ and $u_2v_2$ that were related by $\theta_T$ not to be related by $\theta_{T'}$, but in the following lemma we restrict the kind of edges for which this may be true.

\begin{lem}\label{claim:faster1}
If an edge $uv$ is removed from a spanning tree $T$ and replaced with a new edge $ab$ to create a new spanning tree $T'$ and we have edges $u_1v_1,u_2v_2$ such that $u_1v_1\ \theta_T \ u_2v_2$ but $u_1v_1\ \cancel\theta_{T'} \ u_2v_2$, then one of $u_1v_1$ and $u_2v_2$ is equal to $uv$.
\end{lem}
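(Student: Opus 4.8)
The plan is to exploit the fact that the underlying Djokovi\'c--Winkler relation $\theta$ depends only on $G$ (through its distances), and not on any spanning tree. Consequently, the only mechanism by which two edges can be related by $\theta_T$ but fail to be related by $\theta_{T'}$ is the membership clause ``at least one of the two edges lies in the tree.'' Everything else is bookkeeping about how $T$ and $T'$ differ.

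First I would unpack the hypothesis $u_1v_1\ \theta_T\ u_2v_2$: by definition of $\theta_T$ this means $u_1v_1\ \theta\ u_2v_2$ and at least one of $u_1v_1,u_2v_2$ lies in $T$; without loss of generality say $u_1v_1\in T$. Next I would unpack $u_1v_1\ \cancel\theta_{T'}\ u_2v_2$. Since $u_1v_1\ \theta\ u_2v_2$ still holds (the $\theta$ relation is unchanged by the choice of tree), the failure of $\theta_{T'}$ must come from the membership clause, i.e.\ neither $u_1v_1$ nor $u_2v_2$ lies in $T'$. In particular $u_1v_1\in T$ but $u_1v_1\notin T'$.

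It then remains to note that $T$ and $T'$ differ only in the single edge $uv$, that is $E(T)\setminus E(T')=\{uv\}$. Indeed $T'=(T\setminus\{uv\})\cup\{ab\}$, and for $T'$ to be a spanning tree the inserted edge $ab$ cannot already have been in $T$ (otherwise $T'=T\setminus\{uv\}$ would be disconnected); hence the only edge of $T$ absent from $T'$ is $uv$. Since $u_1v_1\in E(T)\setminus E(T')$, we get $u_1v_1=uv$, which is the claim. The symmetric case, where instead $u_2v_2\in T$, gives $u_2v_2=uv$ by the same argument.

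There is essentially no hard step: the whole proof is a short case analysis driven by the tree-independence of $\theta$. The only point deserving a word of care is the identity $E(T)\setminus E(T')=\{uv\}$, which hinges on $ab\notin E(T)$; as noted, this is forced by $T'$ being a spanning tree, though one could equally well fold it into the definition of a tree swap.
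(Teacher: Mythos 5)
Your proof is correct and follows essentially the same reasoning as the paper's: since $\theta$ itself is tree-independent, a pair can drop out of $\theta_{T'}$ only by losing the ``at least one edge in the tree'' clause, and the only edge of $T$ absent from $T'$ is $uv$. The paper phrases this via the auxiliary graph $G_{\theta_T}$ (deleting edges incident to the node $uv$), but the underlying argument is identical, and your explicit handling of the WLOG and of $E(T)\setminus E(T')$ is fine.
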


\begin{proof}
We know that $G_{\theta_T}$ and $G_{\theta_{T'}}$ are identical to $G_{\theta}$, but with edge set restricted to those edges incident to nodes that correspond to edges in $T$ and $T'$, respectively. Thus, by swapping out $uv$ for $ab$, the only edges that may have been deleted from $G_{\theta_T}$ to form $G_{\theta_{T'}}$ are those incident to $uv$. Since two nodes are adjacent in $G_{\theta_{T}}$ iff they correspond to two edges of $G$ that are related by $\theta_T$, this means that the only way two edges can be related by $\theta_T$ but not by $\theta_{T'}$ is if one of them is $uv$.
\end{proof}

\begin{figure}
    \centering
    \includegraphics[width=\linewidth]{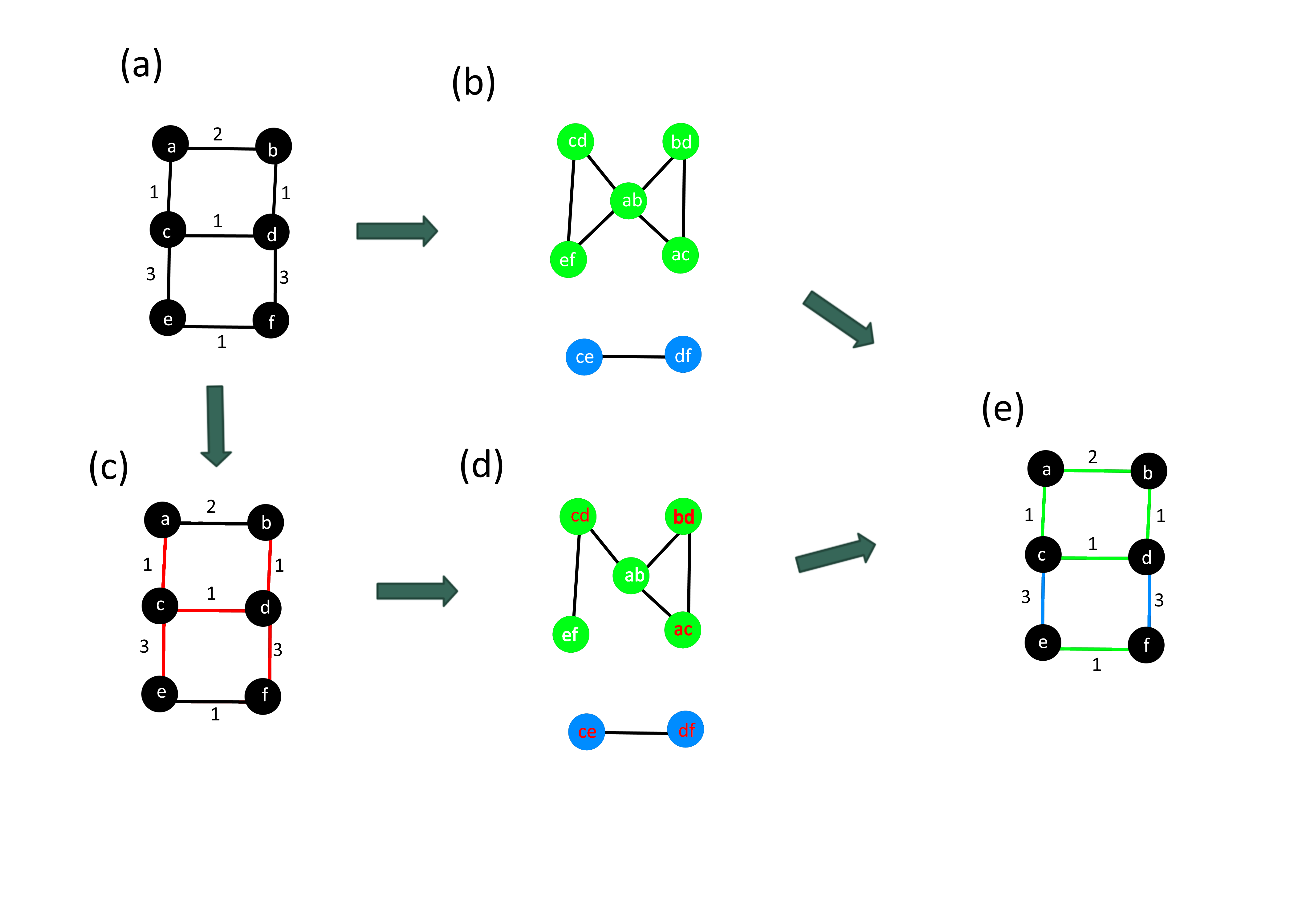}
    \caption[An visualization of the relationship between $G_\theta$ and $G_{\theta_T}$]{(a) An input graph $G$. (b) $G_\theta$. (c) $G$ with a spanning tree $T$ (edges highlighted in red). (d) $G_{\theta_T}$. The nodes with names in red correspond to edges in $G$ the spanning tree $T$. The graph is identical to $G_\theta$ but with edges not incident to a red node removed. (e) For this $T$, $G_\theta$ and $G_{\theta_T}$ produce the same equivalence classes on the edges of $G$.}
    \label{fig:tree-relation}
\end{figure}

To follow up on the proof of Lemma \ref{claim:faster1}, we can introduce a new picture of what $G_{\theta_T}$ looks like relative to $G_\theta$, represented in Figure \ref{fig:tree-relation}. In particular, if we take $G_\theta$ and color red all the nodes corresponding to edges in $T$, deleting the edges not incident to at least one red node produces $G_{\theta_T}$. This helps provide a visual representation for the relationship between $\hat\theta$ and $\hat\theta_T$.


By the invariant, at the beginning of a particular iteration of the while loop, we assume that for all $uv\notin undiscovered$, we have $C_{\theta_T}(uv)=C_\theta(uv)$. In the following lemma, we assert that at the end of the given iteration of the while loop, this will remain true for all edges that were discovered prior to this iteration.

\begin{lem}\label{claim:faster2}
If $st$ is a node in $G_{\theta_T}$ that has been discovered when the spanning tree $T$ is updated to a new tree $T'$ by removing an edge $uv$ and adding an edge $ab$, then $C_{\theta_T}(st)\subseteq C_{\theta_{T'}}(st)$
\end{lem}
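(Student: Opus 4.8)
The plan is to show that when we swap edge $uv$ out of the tree and edge $ab$ into it, no pair of edges that was related by $\theta_T$ and kept $st$ connected to something becomes unrelated in $\theta_{T'}$ — in other words, the connected component of $st$ in $G_{\theta_T}$ can only grow (or stay the same) under this operation, provided $st$ has already been discovered. By Lemma \ref{claim:faster1}, the only way an edge relation $u_1v_1\ \theta_T\ u_2v_2$ can be destroyed by the swap is if one of $u_1v_1,u_2v_2$ equals the removed tree edge $uv$. So the only edges incident to $uv$ in $G_{\theta_T}$ are at risk of being disconnected from it. Hence it suffices to argue that after the swap, $uv$ is still in $C_{\theta_{T'}}(st)$ whenever it was in $C_{\theta_T}(st)$ and $st$ was already discovered; then any node reachable from $st$ via a path through $uv$ in $G_{\theta_T}$ remains reachable in $G_{\theta_{T'}}$ (the portion of the path not touching $uv$ is unaffected by Lemma \ref{claim:faster1}, and the part at $uv$ is re‑established through the new argument).

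First I would recall why the algorithm performs the swap: it finds an edge $ab\in reachable$ (so $ab$ is $\hat\theta_T$‑related to $xy$, hence marked with the current class) whose tree path $P$ from $a$ to $b$ contains an unmarked edge $uv$. The key fact is that $ab\ \theta\ e$ for at least one edge $e$ on any path between $a$ and $b$ (Lemma \ref{claim:pseudofactor2}, part 1, applied with the telescoping $\theta$-sum argument: $\sum_i (\dots) = -2d_G(a,b)\ne 0$). Actually we need a sharper statement: I would argue that $uv$ itself is $\theta$‑related to $ab$, or at least that $uv$ lies in the connected component of $ab$ in $G_\theta$. Here is where I would use the hypothesis that $uv$ is \emph{unmarked}: every edge on $P$ that \emph{is} marked lies in an already-fully-processed class or the current class, and since $C_{\theta_T}$ equals $C_\theta$ on all discovered edges (the invariant), those marked edges are all in $C_\theta(xy)$ if they're in the current class. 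The unmarked edge $uv$, being on a path between the endpoints of $ab\in C_\theta(xy)$, must also be $\hat\theta$‑related to $ab$ and hence to $xy$. After swapping, $ab$ is in the tree $T'$, so $uv\ \theta\ ab$ (if that holds) gives an edge between $uv$ and $ab$ in $G_{\theta_{T'}}$, reconnecting $uv$ to the component of $ab$, which contains $st$.

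The main obstacle I anticipate is the gap between "$uv$ is on a path between $a$ and $b$" and "$uv$ is related to $ab$ under $\theta$ (not just $\hat\theta$)" — Lemma \ref{claim:pseudofactor2} only guarantees \emph{some} edge on the path is $\theta$‑related to $ab$, not that $uv$ specifically is. I would handle this by choosing the swapped-in edge more carefully or by strengthening the bookkeeping: either (i) observe that the algorithm can always pick, among unmarked edges on $P$, one that is $\theta$‑related to $ab$ (such an edge exists because removing all marked edges from $P$ still leaves a sub-path, or isolated edges, one of which must carry the $\theta$‑relation by the telescoping argument restricted to the appropriate sub-path, using that marked edges on $P$ are themselves $\hat\theta_T$‑related to things already connected); or (ii) argue at the level of connected components directly: even if $uv$ is only $\hat\theta$‑related to $ab$, once $ab$ enters the tree, the chain of $\theta$-relations witnessing $uv\ \hat\theta\ ab$ passes through edges that — because one endpoint of the chain ($ab$) is now a tree edge and the chain can be routed through tree edges by the invariant on discovered classes — all survive in $\theta_{T'}$. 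I would also double‑check the edge case where $uv$ was the \emph{only} connection from $st$ to part of its component; the argument above shows $uv$ stays connected to $ab$, and $ab\in reachable\subseteq C_{\theta_T}(xy) = C_{\theta_{T'}}(\cdot)$ portion that contains $st$, so nothing is lost. Finally, I would note symmetry is not needed here since we only claim containment $C_{\theta_T}(st)\subseteq C_{\theta_{T'}}(st)$, not equality.
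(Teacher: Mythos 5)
There is a genuine gap here, and it stems from reducing to the wrong sub-goal. After correctly invoking Lemma \ref{claim:faster1} to conclude that the only edges deleted from $G_{\theta_T}$ are those incident to the node $uv$, you set out to show that $uv$ \emph{remains connected} to $st$'s component after the swap. But the relevant observation is much simpler: $uv$ was never in $C_{\theta_T}(st)$ in the first place. The algorithm only swaps an edge $uv$ out of the tree when $uv$ is \emph{unmarked}, i.e.\ undiscovered; meanwhile, since $st$ has been discovered and the algorithm always explores (and marks) an entire connected component of $G_{\theta_T}$ before moving on, every node of $C_{\theta_T}(st)$ is discovered. Hence $uv\notin C_{\theta_T}(st)$, so none of the deleted edges of $G_{\theta_T}$ is incident to any node of $st$'s component, and that component can only grow. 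This one observation closes the proof and makes all of your subsequent machinery unnecessary.

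That machinery is also where your argument actually breaks. You try to show $uv$ reconnects to $ab$ in $G_{\theta_{T'}}$ via $uv\ \thetarel\ ab$, and you yourself flag that Lemma \ref{claim:pseudofactor2} only guarantees that \emph{some} edge on the tree path between $a$ and $b$ is $\theta$-related to $ab$, not $uv$ specifically. Neither of your proposed repairs resolves this: option (i) would require modifying the algorithm (which you are not free to do when proving a lemma about the algorithm as stated), and option (ii) asserts without justification that a chain of $\theta$-relations witnessing $uv\ \thetahrel\ ab$ ``can be routed through tree edges.'' A $\theta$-chain in $G_\theta$ gives no control over which of its links lie in $T'$, so this does not establish that the chain survives in $\theta_{T'}$. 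As written, the proof does not go through; replacing the entire second half with the discovered-versus-undiscovered observation above is the fix.
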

\begin{proof}
By Lemma \ref{claim:faster1}, when this update is done, the only edges that are removed from $G_{\theta_T}$ to create $G_{\theta_{T'}}$ are those incident to the node $uv$. However, since $st$ is discovered at this point, we know that all nodes reachable from $st$ have also been discovered, as the algorithm updates knowledge about explored connected components every time it updates the graph. Since we never remove discovered edges from the tree, we know that $uv$ is undiscovered, and thus the edges we removed in $G_{\theta_T}$ are not adjacent to anything in $st$'s connected component, meaning that $st$'s connected component in $G_{\theta_T}$ is a subset of that in $G_{\theta_{T'}}$ and we have $C_{\theta_T}(st)\subseteq C_{\theta_{T'}}(st)$.
\end{proof}



Given Lemma \ref{claim:faster2}, in order to show the invariant, we only have to show that if $xy$ is chosen at the beginning of an iteration of the while loop, at the end of that iteration of the while loop, when we have a tree $T$, $C_{\theta_T}(xy)=C_\theta(xy)$.
Thus, our current goal rests on showing that for each $xy$ discovered at the beginning of an iteration of the outer while loop, if $T$ is the tree at the end of that iteration of the while loop, $[xy]_{\hat\theta_T}=[xy]_{\hat\theta}$.  We begin by considering an edge as ``processed'' after we have first discovered it and examined the path between its endpoints in the tree. We are able to make the following observation about the path between the endpoints of each processed edge.

\begin{lem}\label{claim:faster4}
If tree $T$ is updated to tree $T'$ by removing an edge $uv$ from the graph and adding a new edge $ab$ to the graph, then for any processed edge $a'b'$, the path from $a'$ to $b'$ through $T'$ consists only of edges that we have already discovered/marked in the $\theta$ graph.
\end{lem}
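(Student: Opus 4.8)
The plan is to prove a slightly stronger statement by induction on the number of tree modifications performed by Algorithm~\ref{alg:faster-alg}, namely the \emph{invariant}: at every point in the execution, for every edge $a'b'$ that has been processed so far, the (unique) path from $a'$ to $b'$ in the current spanning tree uses only edges that have already been marked. Lemma~\ref{claim:faster4} is then exactly the instance of this invariant that holds immediately after a tree modification $T \to T'$.

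First I would isolate the one structural fact that drives everything: whenever the algorithm deletes an edge $uv$ from the tree, $uv$ is unmarked. This is immediate from the code, since a swap is performed only when $uv$ is selected as ``an edge $uv \in P$ that is not marked with a class number''. Combined with the fact that the algorithm only ever adds marks and never removes them, this yields the useful consequence that no all-marked path can contain a tree edge that is about to be deleted. (This is the same phenomenon already exploited in the proof of Lemma~\ref{claim:faster2}, where a modification is shown to delete only edges of $G_{\theta_T}$ incident to the undiscovered node $uv$.)

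Next I would check that the invariant is preserved by the two events that can affect it. When a new edge $ab$ first becomes processed: if no swap occurs, the algorithm's test guarantees every edge of the tree path between $a$ and $b$ is already marked, so the invariant holds for $ab$; if a swap does occur (adding $ab$, deleting an unmarked $uv$), then $ab$ itself now lies in the new tree, so the tree path from $a$ to $b$ is the single edge $ab$, which is marked because every element of $reachable$ has been marked with \textit{current-class}. For a \emph{previously} processed edge $a'b'$ under a modification $T \to T'$ (delete unmarked $uv$, add $ab$): by the induction hypothesis the path $P'$ from $a'$ to $b'$ in $T$ is all-marked, hence does not contain the unmarked edge $uv$, hence every edge of $P'$ still lies in $T' = (T\setminus\{uv\})\cup\{ab\}$; since the path between two vertices of a tree is unique, $P'$ \emph{is} the path from $a'$ to $b'$ in $T'$, and it is still all-marked because marks persist.

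I expect the only delicate part to be the bookkeeping: confirming from the algorithm that (i) every edge ever removed from the tree is unmarked, and (ii) every edge picked from $reachable$, and in particular any edge swapped into the tree, is already marked. Once these two facts about the marking process are pinned down, the topological content of the lemma is just the uniqueness of paths in a tree, so there is no genuinely hard step.
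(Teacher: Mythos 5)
Your proposal is correct and follows essentially the same route as the paper's proof: the key facts in both are that the algorithm only ever deletes unmarked edges from the tree, that marks are never removed, and that uniqueness of tree paths then forces the surviving all-marked path to remain \emph{the} tree path after a swap. Your write-up is somewhat more explicit than the paper's about the base case — in particular the observation that when a swap occurs while processing $ab$, the new tree path from $a$ to $b$ is just the (marked) edge $ab$ itself — which the paper glosses over with ``we updated the tree such that this lemma held.''
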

\begin{proof}
We know that immediately after processing an edge $a'b'$, we updated the tree such that this lemma held. Since we never remove marked/discovered edges from the tree, this means that this path from $a'$ to $b'$ consisting of only marked edges still exists in the tree and since there is only one path from $a'$ to $b'$ through the tree, the lemma holds.
\end{proof}

Finally, using this lemma we are able to reach our final conclusion about $xy$'s equivalence class, which by our earlier analysis tells us that at the end of the algorithm, the equivalence classes of $\hat\theta_{T^*}$ are those of $\hat\theta$, as desired.

\begin{lem}\label{claim:faster5}
If $xy$ is an edge discovered at the beginning of a particular iteration of the outer while loop and $T$ is the tree at the end of that iteration of the while loop, then $[xy]_{\hat\theta_T}=[xy]_{\hat\theta}$.
\end{lem}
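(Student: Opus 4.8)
The plan is to establish the inclusion opposite to Claim~\ref{claim:equiv-classes-as-subsets}. Since $[xy]_{\hat\theta_T}\subseteq[xy]_{\hat\theta}$ always holds, it is enough to prove that by the end of the iteration that selects $xy$, every edge of $[xy]_{\hat\theta}$ has been marked with the current class number. I would argue by strong induction on the index of the outer iteration, so that Lemma~\ref{claim:faster5} itself may be assumed for all earlier iterations; this also dissolves the apparent circularity between this lemma and the outer-loop invariant. The first consequence to draw from the inductive hypothesis is a description of the edges discovered before the present iteration: if iteration $j$ picked the edge $x_jy_j$, then the edges processed during iteration $j$ are precisely those removed from $undiscovered$ during iteration $j$, which by Lemma~\ref{claim:faster5} for iteration $j$ equal $C_\theta(x_jy_j)=[x_jy_j]_{\hat\theta}$, a complete $\hat\theta$-class. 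Hence the set of edges discovered before the present iteration is a union of complete $\hat\theta$-classes; and because $xy$ is drawn from $undiscovered$, the class $[xy]_{\hat\theta}$ is disjoint from all of them, so no edge of $[xy]_{\hat\theta}$ is ever discovered in an earlier iteration.

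Next I would note that if $S$ denotes the set of edges that receive the current class number and $T$ the tree at the end of the iteration (as in the statement), then $S$ is exactly the set of edges processed during the iteration and $S=C_{\theta_T}(xy)$: the inner while loop stops only when $reachable$ is empty, every edge added to $reachable$ is eventually processed, and by Lemma~\ref{claim:faster2} applied to $xy$ (which is discovered at the top of the iteration) the component of $xy$ in $G_{\theta_T}$ only grows as the tree is updated.

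Now assume, for contradiction, that $S\subsetneq[xy]_{\hat\theta}$. As $[xy]_{\hat\theta}$ is a connected component of $G_\theta$ and $xy\in S$, some $\theta$-edge joins $S$ to its complement within $[xy]_{\hat\theta}$: there are $e_1\in S$ and $e_2\in[xy]_{\hat\theta}\setminus S$ with $e_1\,\theta\,e_2$, and by the first paragraph $e_2$ is never discovered. Since $e_1\in S$ it is processed, and I claim $e_1$ is never a tree edge during the iteration: once $e_1$ is processed it belongs to the component of $xy$ in $G_{\theta_T}$ and stays there by Lemma~\ref{claim:faster2}, so if $e_1$ were ever in the tree, then $e_1\,\theta\,e_2$ would make $e_1e_2$ an edge of $G_{\theta_T}$ and $e_2$ would join the component of $xy$, hence be discovered --- a contradiction. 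In particular $e_1$ is not swapped into the tree when processed, so at that moment the path between its endpoints through the current tree contains no unmarked edge, and by Lemma~\ref{claim:faster4} the path $f_1,\dots,f_m$ between its endpoints through the final tree $T$ likewise consists only of marked edges. Telescoping the theta-difference of the endpoints of $e_1$ against the endpoints of $e_2$ along $f_1,\dots,f_m$ yields $\sum_{i=1}^m(\text{theta-difference of }f_i\text{ and }e_2)=(\text{theta-difference of }e_1\text{ and }e_2)\neq 0$, so $f_i\,\theta\,e_2$ for some $i$; since $f_i\in T$ this gives $f_i\,\theta_T\,e_2$, i.e.\ $e_2\in C_{\theta_T}(f_i)$. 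If $f_i\in S$ then $C_{\theta_T}(f_i)=C_{\theta_T}(xy)=S$, forcing $e_2\in S$; if instead $f_i$ was marked during an earlier iteration, then its $\hat\theta$-class $[f_i]_{\hat\theta}$ lies among the previously discovered classes while $f_i\,\theta\,e_2$ puts $e_2\in[f_i]_{\hat\theta}$, forcing $e_2$ to have been discovered earlier. Both alternatives contradict the choice of $e_2$, so $S=[xy]_{\hat\theta}$ and the lemma follows.

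The step I expect to be the main obstacle is exactly this dichotomy on the tree-path edge $f_i$: it requires the clean structural fact that the edges discovered so far form a union of full $\hat\theta$-classes, and that fact has to be pulled out of the strong induction on iterations rather than from the lemma currently being proved. Once that is in place, the telescoping identity together with Lemma~\ref{claim:faster4} does the rest, and the case $f_i\in S$ is immediate from $S=C_{\theta_T}(xy)$.
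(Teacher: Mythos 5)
Your proof is correct and takes essentially the same route as the paper's: locate a $\theta$-related pair straddling the boundary of $[xy]_{\hat\theta_T}$ within $[xy]_{\hat\theta}$, telescope the theta-difference along the tree path between the endpoints of the inside edge to exhibit a marked tree edge $\theta$-related to the outside edge, and derive the contradiction from the same dichotomy (current class versus previously processed class) using the loop invariant, with only presentational differences such as extracting the boundary pair from connectivity of the $G_\theta$-component rather than walking a $\theta$-chain. The one side remark worth making is that your claim that $e_1$ is never a tree edge is not fully justified (the initial tree could contain $e_1$ before it is discovered and drop it before discovery), but this claim is also unnecessary, since the degenerate case where the tree path is $e_1$ itself is already handled by your $f_i\in S$ branch.
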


\begin{proof}
From Lemma \ref{claim:faster4} and the fact that we process every edge in $xy$'s connected component of $G_{\theta_T}$, we know that at the end an iteration of the outer while loop, all edges in $C_{\theta_T}(xy)$ have paths through $T$ that include only marked edges, which are edges that are in this connected component or some previously processed connected component in $G_{\theta_T}$. Now, assume $[xy]_{\hat\theta_T}\neq[xy]_{\hat\theta}$. Since we know that $[xy]_{\hat\theta_T}\subseteq[xy]_{\hat\theta}$, this means that $[xy]_{\hat\theta_T}\subset[xy]_{\hat\theta}$. Pick $uv\in [xy]_{\hat\theta},uv\notin [xy]_{\hat\theta_T}$. We know that because they're in the same $\hat\theta$ equivalence class, there is a sequence of edges such that $xy=u_1v_1\thetarel u_2v_2\thetarel \cdots \thetarel u_kv_k=uv$. If $u_iv_i\ \hat\theta_T \ u_{i+1}v_{i+1}$ for each $i$, then by transitivity we would have $xy \ \hat\theta_T \ uv$ and thus they'd be in the same equivalence class under $\hat\theta_T$. This means we can assume there exists $j$ such that $u_jv_j$ and $u_{j+1}v_{j+1}$ are not related by $\hat\theta_T$ but $u_jv_j\in [xy]_{\hat\theta_T}$. Pick the first such $j$ and to simplify notation, we will call these two edges $ab$ and $st$ respectively. We have $ab\thetarel st$, but $ab\in [xy]_{\hat\theta_T}$ and $st\notin [xy]_{\hat\theta_T}$.

Consider the path $Q=(a=q_0,q_1,\ldots,q_n=b)$ from $a$ to $b$ through $T$. We get:
\begin{align*}
    \sum_i[d(s,q_i)-d(s,q_{i+1})]-[d(t,q_i)-d(t,q_{i+1})] &= [d(s,a)-d(s,b)]-[d(t,a)-d(t,b)] \\
    & \neq 0,
\end{align*}
where the equality is by telescoping and the inequality is by the fact that $ab\thetarel st$. We know that this means $st \thetarel s^*t^*$ for some $s^*t^*\in Q$. Since $s^*t^*\in T$ and $st\thetarel s^*t^*$, we get $s^*t^*\in [st]_{\hat\theta_T}$. Thus, on this path there exists $s^*t^*\in [st]_{\hat\theta_T}\neq [xy]_{\hat\theta_T}$.

By our earlier lemma, we know that $s^*t^*$ is marked and that every marked edge is either in $xy$'s equivalence class or an equivalence class processed on a previous iteration of the while loop. Since we know $s^*t^*$ is not in $xy$'s equivalence class, it must be in an equivalence class we processed on a previous iteration of the while loop. However, our invariant tells us that this means $[s^*t^*]_{\hat \theta_T}=[s^*t^*]_{\hat \theta}=[xy]_{\hat \theta}$ (since $xy\thetarel s^*t^*$). However, since this equivalence class did not change over the course of this iteration of the while loop, this means that $xy$ was in an equivalence class we already processed before this iteration and thus it was marked, so it was not in $undiscovered$, a contradiction. 

Thus, we know that at the end of the round, every edge not in $undiscovered$ has $[xy]_{\hat\theta_T}=[xy]_{\hat \theta}$. Since we remove at least one edge from $undiscovered$ in each iteration of the while loop, the loop terminates with everything popped and we get that all equivalence classes of $\hat\theta_T$ are the same as those of $\hat\theta$. Thus, we have shown that if the invariant holds at the beginning of an iteration of the outer while loop, it holds at the end.
\end{proof}

This lets us know that Algorithm \ref{alg:general-alg} on the input $(G,\hat\theta_{T^*})$ where $T^*$ is the output of Algorithm \ref{alg:faster-alg} on $G$, produces an irreducible pseudofactorization of $G$. From here, we analyze the runtime of Algorithm \ref{alg:faster-alg} to determine if using it to find a new equivalence relation on the edges of the graph actually improves our runtime.

\subsection{Runtime of Algorithm \ref{alg:faster-alg}}

For a graph $G=(V,E,w_G)$, finding some spanning tree $T$ takes $O(|V|+|E|)$ time. When we compute $G_{\theta_T}$, we know that every edge must be adjacent to some $uv\in T$, as if two edges in $G$ are related by $\theta_T$, then at least one of them must be in $T$. Thus, if we already have the APSP distances and can thus check if two edges are related by $\theta$ in constant time, it takes us $O(|V||E|)$ time to construct $G_{\theta_T}$, as we compare each edge in $T$ to every other edge in order to compute the edges of $G_{\theta_T}$.

We notice that each time we look for the path from $a$ to $b$ in the tree, we pop $ab$ from $unreachable$, and we never do this for a node not in $unreachable$, so we do this once for each edge. As it takes $O(|V|)$ time to find the path from $a$ to $b$ in the tree and we do it for $|E|$ edges, this is $O(|V||E|)$ total time for examining the paths in the tree. 
This leaves us just considering the total BFS time for all of the updates.

Consider a single iteration of the outer while loop, in which we pop edge $xy$. We begin by discovering all edges that $xy$ can reach, each of which is popped from $undiscovered$. At this point, if we add a new edge $ab$ to the tree, we need to update $reachable$, as we have potentially added new edges adjacent to $ab$ and thus potentially added new nodes to $xy$'s connected component. At this point, we can resume BFS beginning at $ab$, visiting only nodes that we have not seen before. We note that this means we may consider the edges adjacent to $ab$ a second time, but all other edges we traverse on this BFS run are new edges, as the origin of each edge must be a newly discovered node. 

When we add a new edge $ab$ into the tree, we must determine all edges out of $ab$ in the new $G_{\theta_T}$, which requires checking if $ab$ is related by $\theta$ to all other edges in the graph. We also remove an edge from the tree, which means we must remove all of its edges to other edges not in the tree in $G_{\theta_T}$. Each of these updates takes $O(|E|)$ time, so we must limit the number of times we do this. We note that no edge discovered in $G_{\theta_T}$ is ever removed from the graph, and at the point an edge $ab$ is added to the tree, it has been discovered in $G_{\theta_T}$ and thus will never be removed. This means that we can add at most $|V|-1$ edges to the tree. This means the updates to $G_{\theta_T}$ take at most $O(|V||E|)$ time total across all updates.

We know from our earlier analysis that when a new edge is added to the tree, we end up re-traversing at most $O(|E|)$ edges, and since this happens at most $|V|-1$ times, re-traversing edges in $G_{\theta_T}$ happens at most $O(|V||E|)$ times. Aside from that, we do a full BFS search of the graph $G_{\theta_T}$, which does have edges added and removed as we go along. However, we note that it has only $O(|V||E|)$ edges at the beginning and as we have just argued, only $O(|V||E|)$ new edges are added, so even if we manage to traverse every edge that appeared in any iteration of the graph (which actually doesn't happen since removed edges are connected to undiscovered nodes, which means they haven't been traversed), we take only $O(|V||E|)$ time to do so.

Thus, the total time for this algorithm is $O(|V||E|)$. We also note that because every equivalence class must contain an edge from $T^*$, there are most $|V|-1$ equivalence classes for this relation. Thus, using this relation with Algorithm \ref{alg:general-alg}, the primary \textit{for} loop is only traversed $O(|V|)$ times. Since Algorithm \ref{alg:faster-alg} both provides a tree to $T^*$ to use for $\hat \theta_{T^*}$ and computes the relation's equivalence classes in $O(|V||E|)$ time given the APSP distances, total time for pseudofactoring a graph using the equivalence relation $\hat\theta_T$ for $T$ produced by Algorithm \ref{alg:faster-alg} takes only $O(|V||E|)$ plus APSP time, as opposed to $O(|E|^2)$ plus APSP time for the same algorithm with $\hat\theta$ as the input relation. Using current APSP algorithms, this is a speedup from $O(|V|^2\log\log |V| + |E|^2)$ to $O(|V|^2\log\log |V|+|V||E|)$ for overall runtime.



\section{Conclusion}\label{sec:conclusion}

In areas like molecular engineering, distance metrics are complex and frequently not representable in unweighted graphs. Isometric embeddings of weighted graphs, while more difficult than for unweighted graphs, open up new avenues for solving problems in these areas. Here, we extended factorization and pseudofactorization to minimal weighted graphs and provided polynomial-time algorithms for computing both, in $O(|V|^2\log\log |V| + |E|^2)$ time and $O(|V|^2\log\log|V|+|V||E|)$ time, respectively. Several open questions remain, including the following:

\begin{itemize}
    \item Can the efficiency of weighted graph factorization be improved, as weighted graph pseudofactorization was in Section \ref{sec:pseudofactorization-runtime2}, to $O(|V||E|)$ when distances are precomputed?
    \item Can we place lower bounds on the time needed to find a graph's prime factorization or irreducible pseudofactorization? In particular, can we lower bound these processes by $O(|V||E|)$ time?
\end{itemize}

In partial response to the first question, we can see by the proofs given in Section 7 that a modification of Algorithm \ref{alg:faster-alg} can be used to find a tree $T$ such that $(\theta_T\cup \tau)^*$ has the same equivalence classes as $(\theta\cup \tau)^*$, but at the moment it is unclear if the time determining all pairs of edges related by $\tau$ can be bounded or if $\tau$ can be modified in a way that makes these relations faster to compute.



\section*{Acknowledgements}
M.B. and J.B. were supported by the Office of Naval Research (N00014-21-1-4013), the Army Research Office (ICB Subaward KK1954), and the National Science Foundation (CBET-1729397 and OAC-1940231). M.B., J.B., and K.S. were supported by the National Science Foundation (CCF-1956054). Additional funding to J.B. was provided through a National Science Foundation Graduate Research Fellowship (grant no. 1122374). A.C. was supported by a Natural Sciences and Engineering Research Council of Canada Discovery Grant. V.V.W. was supported by the National Science Foundation (CAREER Award 1651838 and grants CCF-2129139 and CCF-1909429), the Binational Science Foundation (BSF:2012338), a Google Research Fellowship, and a Sloan Research Fellowship.

\section*{Declarations of Interest}
The authors declare that they have no competing interests.


\bibliographystyle{plain} 
\bibliography{main.bib}





\end{document}